\newtheorem{lem}{Lemma}[section]
\newtheorem{thm}[lem]{Theorem}
\newtheorem{prop}[lem]{Proposition}
\theoremstyle{definition}
\newtheorem{defn}[lem]{Definition}
\newtheorem{example}[lem]{Example}
\newtheorem{rem}[lem]{Remark}
\newcommand{\CC}{{\mathbb C}}
\newcommand{\GG}{{\mathbb G}}
\newcommand{\HH}{{\mathbb H}}
\newcommand{\KK}{{\mathbb K}}
\newcommand{\NN}{{\mathbb N}}
\newcommand{\QQ}{{\mathbb Q}}
\newcommand{\RR}{{\mathbb R}}
\renewcommand{\SS}{{\mathbb S}}
\newcommand{\ZZ}{{\mathbb Z}}
\newcommand{\Bcal}{\mathcal{B}}
\newcommand{\Ecal}{\mathcal{E}}
\newcommand{\Fcal}{\mathcal{F}}
\newcommand{\Gcal}{\mathcal{G}}
\newcommand{\Ical}{\mathcal{I}}
\newcommand{\Lcal}{\mathcal{L}}
\newcommand{\Pcal}{\mathcal{P}}
\newcommand{\Scal}{\mathcal{S}}
\newcommand{\Ucal}{\mathcal{U}}
\def\benm{\begin{enumerate}}            % Begin enumerate command
\def\eenm{\end{enumerate}}              % End enumerate command
\newcommand{\norm}[1]{\left\Vert #1\right\Vert}         % Norm command (1 argument)
\newcommand{\inner}[2]{\left\langle #1, #2 \right\rangle}    % Inner Product (2 arguments), i.e.,
\begin{document}
%\title[short title]{title}
\title{ A generalization of Gleason's frame function for quantum 
measurement  }

\date{\today}
%% This inserts today's date.
%% If no date is required just type
%% \date{}

%%\subjclass[]{}

\author{John J. Benedetto}
\email{jjb@math.umd.edu}
% \urladdr{http://www.math.umd.edu/\textasciitilde jjb}
\author{Paul J. Koprowski}
\author{John S. Nolan}
\thanks{We are grateful to Dr. Rad Balu of the Army Research Labs, Adelphi (ARL) for 
telling us about Busch's
work on Gleason's theorem, and for arranging a post-doctoral position
for Dr. Koprowski at ARL. The first-named 
author gratefully acknowledges the support
of ARO Grant W911NF-17-1-0014 and NSF-DMS Grant
18-14253. The second named author gratefully acknowledges the support of the
Norbert Wiener Center and ARL. The third named author gratefully
acknowledges the support of the Norbert Wiener Center as a Daniel Sweet
Undergraduate Research Fellow. We would all like to thank Professor Robert
Benedetto of Amherst College for several key algebraic insights.
Finally, the first named author had the 
unbelievable privilege of having
both Professors Gleason and Mackey as instructors during the period 1960--1962
for real/functional analysis and for Schwartz' distribution theory, respectively.}
\address{Norbert Wiener Center\\
         Department of Mathematics \\
         University of Maryland \\
         College Park, MD 20742 \\
         USA}

\newcounter{bean}
\begin{abstract}
The goal is to extend Gleason's notion of a frame 
function, which is essential in his fundamental theorem 
in quantum measurement, to a more general 
function acting on 1-tight, so-called, Parseval frames. 
We refer to these functions as Gleason functions for Parseval frames.
The reason for our generalization is that positive operator valued
measures (POVMs) are essentially equivalent to Parseval frames, and that
POVMs arise naturally in quantum measurement theory. We prove
that under the proper assumptions, Gleason functions for Parseval frames are 
quadratic forms, as well as other results analogous to Gleason's original theorem.
Further, we solve an intrinsic problem relating Gleason functions for Parseval frames of
different lengths. We use this solution to weaken the hypotheses
in the finite dimensional version of Busch's theorem, that itself is an analog of Gleason's 
mathematical characterization of quantum states.\end{abstract}

\maketitle

\section{Introduction}
\label{sec:intro}

\subsection{Background}
\label{sec:back}

 %Perhaps surprisingly, there are related problems with implications in
%quantum mechanics, and that we intend to
%pursue, in the case of $p$-adic fields.

Garrett Birkhoff and John von Neumann \cite{BirNeu1936} (1936)
introduced  {\it quantum logic} and the role of {\it lattices} to fathom
``the novelty of the logical notions which quantum theory pre-supposes".

The topics they mentioned for this ``novelty" include:
\begin{enumerate}
\item Heisenberg's uncertainty principle,
\item Principle of non-commutativity of observations.
\end{enumerate}

Their fundamental ideas led to the {\it representation theorem} in quantum logic that,
loosely speaking, allows one to treat quantum measurement outcomes as a lattice 
$L(\HH)$ of subspaces of a separable Hilbert space $\HH$ over the field
${\mathbb K}$, where ${\mathbb K} = {\mathbb R}$ or
${\mathbb K}={\mathbb C}$, see, e.g., \cite{BusGraLah1997}, \cite{pito2005}.
As such, the work of Birkhoff and von Neumann, as well as von Neumann's 
classic \cite{vonn1955} led to the study of measures on the closed 
subspaces of $\HH$ as formulated by Mackey \cite{mack1957}, cf. \cite{mack1963}, \cite{mack1978}.

A {\it measure on the closed subspaces of $\HH$} 
is a function $\mu$, that
  assigns, to every closed subspace of $\HH$, a non-negative number such that if $\{X_i\}$
 is a sequence of mutually orthogonal subspaces having closed linear span $X$, then 
 \[
      \mu(X)=\sum_i \mu(X_i).
 \]
 Let ${\rm dim}(\HH)$ denote the dimension of $\HH$.
In \cite{glea1957} (1957), Gleason proved 
the celebrated result that
{\it if ${\rm dim}(\HH) \geq 3$,
then every such measure $\mu$ can be defined as
\begin{equation}
\label{eq:defmu}
  \mu(X) = {\rm tr} (AP_X),
\end{equation}
where $X \subseteq \HH$ is a closed subspace, $P_X$ is the orthogonal
projection onto $X$, {\rm tr} denotes the trace of the operator, and $A$ is a 
positive semi-definite self-adjoint
trace class operator}, see Remark \ref{rem:born}, Subsection \ref{sec:prelim}, 
and Theorem \ref{thm:glea2},
as well as the beautiful proof of Gleason's theorem by Parthasarathy \cite{part1992},
Chapter 1, Section 8.
Going back to von Neumann, $A$ is also referred to as a {\it density operator}
when ${\rm tr}(A) = 1$,
and is often denoted by $\rho$. See Theorem \ref{thm:busch} for Busch's analog
of \eqref{eq:defmu}, that has a different and meaningful definition of
measure allowing the set of projections $P_X$ to be extended to a 
larger set of operators in a physically meaningful way.

If ${\rm dim}(\HH) < \infty$ and 
$B$ is a linear operator on $\HH$ which has matrix
representation $M_B$, then
$B$ is trace class, and the trace of $B$ is
the sum of the diagonal values of $M_B$.
These notions, as well as those introduced in Subsection \ref{sec:gleagoal},
will be expanded upon in the remaining sections. They are given
here in Section \ref{sec:intro} in bare-bones fashion so that we can 
state the goal of the paper in Subsection \ref{sec:gleagoal}. 

Throughout, $\HH$ denotes a separable Hilbert space over $\KK$.
In the $d$-dimensional case, we shall deal exclusively with the Hilbert space 
$\HH = \KK^d$ over $\KK$, taken with the canonical inner product,
since all $d$-dimensional inner-product spaces over $\KK$
are isometric to the Hilbert space $\HH$; 
and we shall not need further refinements 
such as defining different inner products
on the same space in terms of different matrices.

\begin{example}[Closed subspaces of $\HH$]
\label{ex:subsp}
{\it a.} Let $\HH$ be infinite dimensional. The subspaces $X$ of $\HH$ are not necessarily closed.
For example, let $\HH = L^2[a,b]$ and let $X = C[a,b]$.

{\it b.} On the other hand, every subspace $X$ of $\HH = \KK^d$ is closed. To see this, let
${\rm dim}(X) = m <d$ and let $\norm{x_n - y} \rightarrow 0$, $x_n \in X,\, y \in \HH$.
Assume $y \not\in X$. If $\{u_1, \ldots, u_m \}$ is an orthonormal basis for $X$, then
$w = \sum_{j=1}^m \langle y,u_j \rangle u_j$ is the unique vector in $X$ for which 
$\norm{w-y} = {\rm inf}\{\norm{x-y} : x \in X \}$. Because $y \not\in X$, 
we have  
$\norm{x-y} \geq \norm{w-y} > 0$ for all $x\in X$. This contradicts the hypothesis that
$\norm{x_n - y} \rightarrow 0$.
\end{example}

%%%%%%%%%%%%%%%%%%%%%%%%

\subsection{The role of Gleason's theorem and our goal}
\label{sec:gleagoal}

The theory of frames was initiated
by Duffin and Schaeffer in 1952 \cite{DufSch1952}, 
but frames were actually defined by Paley and Wiener in 1934 \cite{PalWie1934} 
to deal with closed linear span problems.
A frame is a natural
generalization of an ONB.
For detailed introductions to frames, see \cite{bene1992}, \cite{CasKut2012},
 \cite{chri2016}.
We now define a frame in order to formulate our goal, and
shall expand on the theory of frames in Subsection \ref{sec:parsframpovm}.
We denote the standard
inner product associated with the Hilbert space $\HH$ by $\inner{\cdot}{\cdot}.$

\begin{defn}[Frames]
\label{defn:framekd} 
Let $\HH =\KK^d$. 

{\it a.} A sequence $\{x_j\}_{j \in J} \subseteq  \HH$ is a {\it frame} for
the Hilbert space $\HH$ if 
\[
    \exists \,A, B > 0, \; \text{such that} \: \forall y\in \HH, \;
   A\norm{y}^2\leq \sum_{j \in J} |\langle y,x_j \rangle|^2\leq B\norm{y}^2.
\]
If $A=B$, then $\{x_j\}_{j \in J}$ is an $A$-{\it tight frame} for $\HH$.
If $A=B=1$, then $\{x_j\}$ is a $1$-{\it tight} or {\it Parseval frame} for $\HH$.
In this case, each $\norm{x_j} \leq 1$, see Proposition \ref{prop:parsprops}.

 The cardinality of the sequence $J$ is 
 denoted by ${\rm card}(J)$, and it satisfies $d \leq {\rm card}(J) \leq \infty$. 
 Usually, our Parseval frames will satisfy $N :={\rm card}(J) < \infty$, but we shall
 need the case ${\rm card}(J) = \infty$ in Theorem \ref{thm:genprob}.

{\it b.} If a sequence $\{x_j\}_{j=1}^d$ is an orthonormal basis (ONB) for the Hilbert space
$\HH$, then Parseval's identity ensures that $\{x_j\}_{j=1}^d$ is a 
Parseval frame for $\HH$, e.g., \cite{GohGol1981}, page 27. Hence, any ONB is a Parseval frame
and we may view Parseval frames as a natural generalization of 
ONBs.

\end{defn}

Gleason's classification
of measures on closed subspaces of Hilbert spaces,
stated in Subsection \ref{sec:back}, depends on his notion of a frame function.
Since this is not related to the theory of frames, we shall refer to such functions as Gleason functions.

\begin{defn}[Gleason function for ONBs]
\label{def:glea}
A {\it Gleason function of weight} $W \in \KK$ {\it for the}
ONBs {\it for} $\HH$ is a function $g : S \longrightarrow \KK$, where $S \subseteq \HH$
is the unit sphere, 
\[
   S := \{ x \in \HH : \norm{x} =1\},
\]
and such that, for all ONBs $\{x_j\}_{j\in J}$ for $\HH$, one has
\[
   \sum_{j \in J}g(x_j)=W.
\]
In the case that $\HH = \KK^d$, the unit sphere $S$
is denoted by $S^{d-1}$.

\end{defn}

\begin{rem}[Quantum logic and the Born model]
\label{rem:born}
In quantum measurement theory, Gleason's theorem has ramifications with regard to the transition
from the quantum logic lattice interpretation of quantum events to a validation of the
{\it Born model} (or {\it rule} or {\it postulate}) for probability in quantum mechanics. 
Specifically, Mackey had asked whether every measure on the lattice of projections
of a Hilbert space can be defined by a positive operator with unit trace.
Kadison proved this is false for $2$-dimensional Hilbert spaces.
Gleason's theorem, and, in particular, (\ref{eq:defmu}), answers Mackey's question
in the positive for higher dimensional Hilbert spaces. This means that
a Gleason function for the ONBs for $\HH = \KK^d$ that is defined by a self-adjoint operator
as in Theorem \ref{thm:saimpliesglea} is compatible with the Born rule,
see,
e.g., \cite{pito2005}, \cite{fuch2011}, \cite{LomForHolLop2017}. In functional analysis, Gleason's
theorem has had significant generalizations with regard to von Neumann algebras and other
abstract notions, see \cite{hamh2003}. These directions are not part of our goal.

\end{rem}

\begin{defn}[Gleason function for Parseval frames]
Let $\HH = \KK^d$. A {\it Gleason function of weight} $W \in \KK$ {\it for the Parseval frames for} $\HH$
 is a function 
 $g:B^d \longrightarrow  \KK$,
 where $B^d \subseteq \HH$
 is the closed unit ball, 
 \[
      B^{d}:=\{x\in \HH: \norm{x}\leq 1\},
\]
such that, for all Parseval frames $\{x_j\}_{j\in J} \subseteq B^d$ for $\HH$, one has
 \[
      \sum_{j\in J} g(x_j)=W.
 \]
 \end{defn}

Our {\it goal} is the following:
{\it Define, 
implement, and generalize the notion of Gleason's functions for ONBs and the unit 
sphere to the setting of complex Parseval frames
and the closed unit ball.} It turns out that there are  fundamental
mathematical implications and new technology required to implement
Gleason's theorem in this setting.

The {\it reason} we shall pursue this goal is that 
a version of Gleason's theorem has been proved 
in the setting of positive operator valued measures (POVMs) \cite{busc2003}, \cite{CavFucManRen2004}
{\it and} POVMs can be viewed as equivalent to Parseval frames, a fact established and exploited in quantum detection problems \cite{BenKeb2008}, see Section \ref{sec:parsframpovm}.

A  {\it consequence} of this goal and reason is a quantitative insight into Busch's
formulation of Gleason's theorem in terms of his notion of a generalized probability measure,
see Section \ref{sec:appl}.

\begin{rem}[The Welch bound]
\label{rem:welch}
There are natural problems and relationships to be resolved
and understood. For example, it is not difficult  to check that if
$g$ is a Gleason function of weight $W_N$ for all unit norm frames with
$N > d$ elements for a given $d$-dimensional Hilbert space
$\HH = \KK^d$, then $g$ is constant on $S^{d-1}$.
On the other hand, 
we can formulate the definition of a Gleason function to 
consider the class of all {\it equiangular Parseval} frames, thereby interleaving the power
of Gleason's theorem with fundamental problems of equiangularity as they relate to
the Welch bound and optimal ambiguity function behavior, see Definition \ref{defn:frame} {\it c}
and Appendix \ref{sec:app}.
This is inextricably related
to the construction of constant amplitude finite sequences with $0$-autocorrelation,
whose narrow-band ambiguity function is comparable to the Welch bound, e.g.,
see \cite{BenBenWoo2012} and \cite{AndBenDon2019}.
\end{rem}

%%%%%%%%%%%%%%%%%%%

\subsection{Outline}
\label{sec:outline}
In Section \ref{sec:GT}, we summarize Gleason's work in \cite{glea1957}
in order to motivate further the definition and analysis of Gleason functions. 
We also extend his fundamental theorem (Theorem \ref{thm:glea1}) from the setting of 
non-negative functions to that of bounded functions, viz., Theorem \ref{thm:Gleasonbounded}. The proof is
elementary, but is necessary in the proof of our basic Theorem \ref{thm:GNsolution}.
Subsection \ref{sec:d=2} may seem superfluous to Gleason's observation that 
the difficult direction of his theorem fails for $d=2$, but we do provide a reason for why this is so
by a characterization of quadratic forms on $S^1$.

Section \ref{sec:parsframpovm} establishes the well-known relationship between POVMs and Parseval frames.
The former have long been a staple in quantum measurement, e.g., \cite{busc2003}, 
\cite{CavFucManRen2004};
and the latter is the central mathematical reason we have gone beyond Gleason's use of ONBs.

Sections \ref{sec:gleapars}  - \ref{sec:appl} establish our basic theory. The Parseval frame
formulation of POVMs allows us to look more deeply into Gleason functions in Sections 
\ref{sec:glean} and \ref{sec:appl}.

Section \ref{sec:gleapars} gives the basic properties of Gleason functions 
for the Parseval frames for $\HH = \KK^d$.
Theorem \ref{thm:gleapars} shows that
quadratic forms 
defined by self-adjoint operators are always Gleason functions for the Parseval frames for $\KK^d$, 
similar to the case
of Gleason functions for the ONBs for $\KK^d$.
We then prove that continuous or non-negative Gleason functions for the
Parseval frames for $\KK^d$ are 
reminiscent of homogeneous functions of degree 2 on $B^d$ (Theorems \ref{thm:continuoushomog}
and \ref{thm:nonnegativehomog}).
Using Theorem \ref{thm:nonnegativehomog},
we characterize bounded, real-valued Gleason functions for Parseval frames in terms of quadratic forms
defined by self-adjoint operators in analogy to results in \cite{glea1957}.
This is Theorem \ref{thm:realmainthm}, the converse of Theorem 
\ref{thm:gleapars}, cf., Theorem \ref{thm:complexmainthm}.

Because Parseval frames for $\HH = \KK^d$ vary in cardinality $N$, a natural question arises
about the relationship between the sets $\Gcal_N$ of Gleason functions for the $N$-element Parseval frames
for $\KK^d$ as $N$ varies. $\Gcal_N$ is the subject of Section \ref{sec:glean}. 
In Theorem \ref{thm:GNsolution}, we prove that if $N \geq d+2$, then
${\rm card}\,\Gcal_N ={\rm card}\,\Gcal_{N+1}$. The proof requires several propositions of independent interest.
In Section \ref{sec:appl}, we use Theorem \ref{thm:GNsolution} to weaken the hypotheses
in Busch's theorem, that itself is an analog of Gleason's 
mathematical characterization of quantum states. 

Since Parseval frames are central to our theory, and because they play an important role
in applications ranging from numerically effective noise reduction to the construction of 
Grassmannian frames dealing with spherical codes to geometrically uniform codes
in information theory to Zauner's conjecture in quantum measurement, we conclude with 
Appendix 
\ref{sec:app} 
putting some of these topics in context.

%%%%%%%%%%%%%%%%%%%%%%%%
%%%%%%%%%%%%%%%%%%%%%%%%%

\section{Gleason's theorem}
\label{sec:GT}

%%%%%%%%%%%%%%%%%%%%%%

\subsection{Preliminaries}
\label{sec:prelim}

In order to state Gleason's theorem, viz., Theorems \ref{thm:glea1} and \ref{thm:glea2}, we need the 
following set-up and notions.
Let $A: \HH \longrightarrow \HH$ be a linear operator, where $\HH$ is a separable Hilbert space
defined over $\KK$. 
We make the convention that $q(x) := \langle A(x), x \rangle$ is a {\it quadratic form} in the sense that 
$q(\alpha x) = |\alpha |^2 q(x)$ for all $x \in \HH$ and $\alpha \in  \KK$, see Remark \ref{rem:quadform}.
$A$ is {\it bounded}, i.e., {\it continuous}, if 
$\norm{A}_{op} := {\rm sup}_{\norm{x} \leq 1} \norm{A(x)}_{\HH}  < \infty$;
and ${\mathcal L}(\HH)$ denotes the space of bounded linear operators
$A: \HH \to \HH$. The {\it adjoint} $A^*$ of $A$ is the mapping $A^*: \HH \to \HH$ defined by the formula
$\langle A(x), y \rangle = \langle x, A^*(y) \rangle$ for all $x, y \in \HH$.
$A$ is {\it self-adjoint} if $A$ is bounded and $A^* = A$;
and $A$ is {\it positive}, resp.,
{\it positive
semi-definite} if
\[
    \forall x \in \HH \setminus \{0\}, \quad \langle A(x),x \rangle > 0, \, {\rm resp.}, \, \geq 0.
\]

Let $\Lcal_+(\HH)$  denote the subset
of positive semi-definite elements of $\Lcal(\HH)$; and let $\Scal_+(\HH)$
denote the set of positive semi-definite self-adjoint operators on $\HH$.

Recall that if $A: \HH \longrightarrow \HH $ is a linear operator on a Hilbert space 
$\HH$ defined over $\KK = \CC$ and  $\langle A(x),x \rangle \in \RR$ for all $x \in \HH$,
then $A$ is self-adjoint.  Conversely, if 
$A$ is self-adjoint, then
\begin{equation}
\label{eq:saimpliesAxxreal}
    \forall x \in \HH, \quad \langle A(x), x \rangle \in \RR.
\end{equation}
If $A$ is self-adjoint, then the eigenvalues $\lambda$ of $A$ are real. Thus,
in the case that $A$ is positive, resp.,
positive
semi-definite, then $\lambda >0$, resp., $\lambda \geq 0$.
When $\HH$ is defined over $\KK = \RR$
and $A$ is positive, resp., 
positive
semi-definite, we also have that $\lambda >0$, resp., $\lambda \geq 0$,
without having to verify that $A$ is self-adjoint.  

 If 
$\HH = \KK^d$, then we consider linear operators $A: \HH \longrightarrow \HH $ with
the $d \times d$ matrix $A = (a_{i,j})_{i,j = 1}^d$. $A$ is easily checked to be bounded by 
making the matrix calculation,
\[
    \norm{A(x)}_{\HH}  \leq \sum_{i,j=1}^d |a_{i,j}|^2 |x_j|^2 \leq    
    \big( \sum_{i,j=1}^d |a_{i,j} |^2 \big) \norm{x}_{\HH}^2.
\]

\begin{rem}[Quadratic forms]
\label{rem:quadform}
Classically, and differing from our convention in the case $\KK = \CC$, a
{\it quadratic form} over $\KK^d$ in the $d$ variables $x_1,x_2, \ldots, x_d \in \KK$ is a polynomial,
\begin{equation}
\label{eq:quad}
        Q(x) := q(x_1, \ldots, x_d)  := \sum_{i=1}^d\sum_{j=1}^d c_{i,j} x_i x_j, \quad c_{i,j} \in \KK,
\end{equation}
in which every term has degree $2$, i.e., every term is a multiple of $x_i x_j$
for some $i, j$. If we set $a_{i,j} := \frac{1}{2}(c_{i,j} + c_{j,i})$, and consider the matrices
$A =(a_{i,j})$ and $C =(c_{i,j})$, then $A$ is symmetric and 
\begin{equation}
\label{eq:symmquad}
     \forall x = (x_1, \ldots, x_d), \quad x^{\tau} A(x) 
     = \sum_{i=1}^d\sum_{j=1}^d a_{i,j} x_i x_j =\frac{1}{2} Q(x) +\frac{1}{2} Q(x) = Q(x),
\end{equation}
where $\tau$ denotes the transpose.
%see Subsection \ref{sec:homog} for more details on quadratic forms.
If $\KK = \RR$, then $x^{\tau} A(x)= \langle A(x), x \rangle$, where $x$ is a $d \times 1$ vector
in the matrix multiplication $A(x)$, cf. \eqref{eq:saimpliesAxxreal}.
This is not true for $\KK = \CC$ because of conjugation.
 \end{rem}
 
The {\it trace}, ${\rm tr}(A)$, of a $d \times d$ matrix $A =(a_{i,j})$ is 
\[
     {\rm tr}(A) := \sum_{j=1}^d \, a_{j,j}.
\]
For $d \times d$ matrices $A,\,B$, we have $a {\rm tr}(A) + b {\rm  tr}(B) =
{\rm  tr}(a A + b B), \, {\rm  tr}(AB) = {\rm  tr}(BA)$, and 
${\rm  tr}(A^{\ast}) = \overline{{\rm  tr}(A)}$,
where $A^{\ast}$ is the adjoint of $A$. Further, if $A$ is self-adjoint, or, more
generally, if $AA^{\ast}=A^{\ast}A$, i.e., $A$ is complex normal, then
\begin{equation}
\label{eq:trace}
      {\rm tr}(A) = \sum_{j=1}^d \, \lambda_j \quad {\rm and} 
      \quad {\rm tr}(A^{\ast}A) = \sum_{j=1}^d \, | \lambda_j|^2,
\end{equation}
where the $\lambda_j$ are the not necessarily distinct eigenvalues of $A$.

Given $\HH = \KK^d$. Let $B: \HH \longrightarrow \HH$ be a linear operator,  
let $\{e_1, \ldots, e_d \}$ be the standard ordered basis for $\HH$,
and let $M_B = (b_{i,j})$ be the $d \times d$ matrix representation of $B$
in this basis. ($\{e_1, \ldots, e_d \}$ {\it standard} means
that each $e_j = (0, \ldots, 0, 1,0, \ldots, 0)$, where $1$ is in the $j$-th coordinate.)
The {\it trace} of $M_B$, denoted by ${\rm tr}(M_B)$ is 
\[
     {\rm tr}(M_B) := \sum_{j=1}^d \, b_{j,j}.
\]

\begin{rem}[Trace class]
\label{rem:trace}
Let $\HH$ be a separable Hilbert space
defined over $\KK$. By definition,
$A \in {\mathcal L}(\HH)$ is a {\it trace class operator} if for some and hence all
ONBs $\{ x_n \}$ for $\HH$,
\[
     \norm{A}_1 :=  \sum_n \langle (A^*A)^{1/2} (x_n), x_n \rangle < \infty,
\]
so that $\sum_n \langle A(x_n), x_n \rangle < \infty$ when $A$ is self-adjoint, noting
that  $\langle (A^*A)^{1/2} (x_n), x_n \rangle \geq 0$. The {\it trace} of $A$ is
${\rm tr}(A) := \sum_n \langle A(x_n), x_n \rangle$, and this is compatible with
\eqref{eq:trace}.

Further, every {\it compact operator} $A \in \Lcal(\HH)$ is characterized by the representation,
\[
    \forall x \in \HH,\;   
    A(x)=\sum _{j} \lambda _{j} \langle x,y_{j} \rangle x_{j},
    \quad {\text{where}} \quad  \lambda _{j}\geq 0 \quad {\text{and}} \quad 
    \lambda _{j}\rightarrow 0, 
\]
for some orthonormal bases $\{x_j\}$ and $\{y_j\}$ for $\HH$. 

As is well-known, finite rank operators
$A \in \Lcal(\HH)$ are trace class, and these are Hilbert-Schmidt, and these are compact. We mention this
since the dual of the space of compact operators with the proper topology is the space of trace class operators, 
and because of Theorem \ref{thm:spectral}b, see \cite{rudi1991} for all of this material.
\end{rem} 

We shall use the spectral theorem several times throughout, and state the following form, see
\cite{GohGol1981}, \cite{rudi1991}, \cite{stra1988}, \cite{lay-1994}, \cite{FriInsSpe1997}, \cite{TreBau1997}.

\begin{thm}
\label{thm:spectral}
{\it a.} Let $\HH = \KK^d$, let $A: \HH \longrightarrow \HH$ be a linear operator, and 
for convenience denote $M_A$ by 
$A$. If $A$ is self-adjoint, i.e., $A$ a real symmetric matrix if $\KK = \RR$ or an Hermitian matrix if
$\KK = \CC$, then there exists a matrix $U$ with columns consisting of a complete set of
orthonormal eigenvectors for $A$, such that $\Lambda 
= UA U^{-1}$ is diagonal. Such a $U$ is orthogonal if $\KK = \RR$ and unitary if $\KK = \CC$.
%$A$ can be diagonalized by $U$ as the diagonal matrix $\Lambda 
%= U^{-1} A U$, where the columns of $U$ form a complete set of orthonormal eigenvectors,
%and where $U$ is orthogonal if $\KK = \RR$ and unitary if $\KK = \CC$.

{\it b.} Let $\HH$ be a separable Hilbert space defined over $\KK$, and let $A \in \Lcal(\HH)$
be a compact self-adjoint operator. There is an orthonormal sequence $\{x_j \} \subseteq \HH$
of eigenvectors of $A$ and a corresponding sequence $\{ \lambda_j \} \subseteq \KK$ of eigenvalues, such that
\[
   \forall x \in \HH, \quad A(x) = \sum_{j}\, \lambda_j \langle x, x_j \rangle\, x_j.
\]
If $\{ \lambda_j \}_{j=1}^{\infty} $ is an infinite sequence, then ${\rm lim}_{j \to \infty}\, \lambda_j = 0$.
\end{thm}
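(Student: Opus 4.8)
The plan is to treat the two parts separately, since part \textit{a} is a purely algebraic finite-dimensional statement while part \textit{b} requires a genuinely infinite-dimensional compactness argument; part \textit{a} can in fact be recovered from part \textit{b} (every operator on $\KK^d$ is compact), but the direct algebraic proof is cleaner.

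For part \textit{a}, I would first reduce the claim to the assertion that a self-adjoint $A$ on $\KK^d$ admits an orthonormal basis of eigenvectors. Given such a basis $\{u_1, \dots, u_d\}$, the matrix $U$ whose columns are the $u_j$ satisfies $U^{-1} = U^*$ (orthogonal when $\KK = \RR$, unitary when $\KK = \CC$, precisely because its columns are orthonormal), and from $A U = U \Lambda$ with $\Lambda = \mathrm{diag}(\lambda_1, \dots, \lambda_d)$ one obtains the asserted diagonalization. To produce the eigenbasis I would induct on $d$. The base case needs a single eigenvalue: over $\CC$ the characteristic polynomial $\det(A - \lambda I)$ has a root by the fundamental theorem of algebra, while for $\KK = \RR$ I would regard the real symmetric $A$ as Hermitian over $\CC$, obtain a (necessarily real, by \eqref{eq:saimpliesAxxreal}) eigenvalue $\lambda$, and note the eigenvector may be taken real because $A - \lambda I$ is a real matrix with nontrivial kernel. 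The inductive step rests on the key structural fact that if $A(u) = \lambda u$ then $u^{\perp}$ is $A$-invariant: for $x \perp u$, $\langle A(x), u \rangle = \langle x, A(u) \rangle = \overline{\lambda}\langle x, u \rangle = 0$. The restriction of $A$ to $u^{\perp}$ is again self-adjoint, so induction supplies an orthonormal eigenbasis of $u^{\perp}$, which together with $u/\norm{u}$ completes the argument.

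For part \textit{b}, the crux, and the main obstacle, is the existence of a single eigenvalue. I would show that a nonzero compact self-adjoint $A$ has an eigenvalue equal to $\pm\norm{A}_{op}$. Self-adjointness gives $\norm{A}_{op} = \sup_{\norm{x}=1}|\langle A(x), x\rangle|$; choosing unit vectors $x_n$ with $\langle A(x_n), x_n\rangle \to \lambda$ where $|\lambda| = \norm{A}_{op}$, the estimate $\norm{A(x_n) - \lambda x_n}^2 \le \norm{A}_{op}^2 - 2\lambda\langle A(x_n), x_n\rangle + \lambda^2 \to 0$ holds because $\norm{A(x_n)} \le \norm{A}_{op} = |\lambda|$. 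Compactness of $A$ then lets me pass to a subsequence with $A(x_n) \to y$, whence $\lambda x_n \to y$ and, since $\lambda \neq 0$, $x_n \to x := y/\lambda$ with $\norm{x} = 1$; continuity yields $A(x) = \lambda x$. This is exactly where compactness is indispensable.

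With one eigenvalue secured, I would iterate as in part \textit{a}: take $\lambda_1$ of largest modulus with unit eigenvector $x_1$, pass to the invariant subspace $x_1^{\perp}$ on which $A$ is again compact and self-adjoint, and repeat, generating an orthonormal system $\{x_j\}$ with $|\lambda_1| \ge |\lambda_2| \ge \cdots$. If this sequence is infinite, then $\lambda_j \to 0$: otherwise $|\lambda_j| \ge \delta > 0$ along a subsequence forces $\norm{A(x_i) - A(x_j)}^2 = |\lambda_i|^2 + |\lambda_j|^2 \ge 2\delta^2$, so $\{A(x_j)\}$ has no convergent subsequence, contradicting compactness. Finally, for the representation I would set $R_n = A - \sum_{j=1}^n \lambda_j \langle \cdot, x_j\rangle x_j$, observe that $R_n$ is compact, self-adjoint, and vanishes on $\mathrm{span}\{x_1, \dots, x_n\}$, so it equals the restriction of $A$ to $\{x_1, \dots, x_n\}^{\perp}$, whose operator norm is $|\lambda_{n+1}| \to 0$; letting $n \to \infty$ gives $A(x) = \sum_j \lambda_j \langle x, x_j\rangle x_j$ for every $x \in \HH$.
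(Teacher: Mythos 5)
Your proof is correct, and the paper does not actually prove Theorem \ref{thm:spectral} --- it cites standard references and, in Remark \ref{rem:spectral}\,{\it a}, sketches precisely your argument for part {\it a} (fundamental theorem of algebra for one eigenvalue, induction on the orthogonal complement, complexification for $\KK = \RR$), while your part {\it b} is the standard textbook proof via the norm-attaining eigenvalue and compactness. The only nitpick is notational and lies in the paper rather than in your argument: with eigenvectors as the \emph{columns} of $U$ one gets $\Lambda = U^{-1}AU$ (equivalently $A = U\Lambda U^{-1}$, as in Remark \ref{rem:spectral}\,{\it a}), not the literal $\Lambda = UAU^{-1}$ of the theorem statement, though since $U^{-1}=U^{*}$ is again orthogonal or unitary the two forms are interchangeable.
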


\begin{rem}[Spectral decomposition]
\label{rem:spectral}
{\it a.} With regard to part {\it a} of Theorem \ref{thm:spectral}, we note the following.
In the real symmetric case, we have $A = U \Lambda U^{-1}$, with orthonormal eigenvectors 
forming $U$
and with the eigenvalues of $A$ forming the diagonal matrix $\Lambda$. Also, in the $\KK = \CC$ case 
the eigenvalues of self-adjoint $A$ are real; and, in both cases, if two eigenvectors
 come from distinct eigenvalues, then they are orthogonal.
 
 Further, to prove the existence of an ONB of eigenvectors in the case that 
 $\KK = \CC$ and $A$ is Hermitian, we apply the 
 fundamental theorem of algebra to the characteristic polynomial of $A$ to obtain an eigenvalue
 $\lambda_1$ and an eigenvector $u_1$. Then, we consider the orthogonal complement
 of $u_1$ to obtain a $u_2$, and, continuing in this way, we see how to construct
 a complete set of orthonormal eigenvectors. The case $\KK = \RR$ can be deduced
 from the complex case by complexification, see, e.g., \cite{halm1958}, Section 77.
 
 {\it b.} With regard to part {\it b} of Theorem \ref{thm:spectral}, we note the following. Although
 any linear operator on $\HH = \KK^d$ has an eigenvalue, that is not necessarily the case even
 for self-adjoint operators on infinitely dimensional $\HH$. Further, the self-adjoint identity operator $I$
 on infinite dimensional $\HH$ is {\it not} a compact operator.
 
\end{rem}
%%%%%%%%%%%%%%%%%%%%

\subsection{Gleason's theorem}
\label{sec:glea}

If $\KK = \CC$, then the following holds for normal operators $A$.

\begin{thm}
\label{thm:saimpliesglea}
 Let $\HH = \KK^d$ and let $A$ be a self-adjoint linear operator $A:\HH \to \HH$.
 The function $g: \HH \to \KK$, defined by the formula,
 \begin{equation}
\label{eq:glea1}
       \forall x\in S^{d-1}, \quad g(x)=\langle A(x),x \rangle,
\end{equation}
is a Gleason function of weight $W = {\rm tr}(A)$ for the ONBs for $\HH$.
\end{thm}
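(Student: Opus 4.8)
The plan is to recognize that for any ONB $\{x_j\}_{j=1}^d$ the quantity $\sum_j \inner{A(x_j)}{x_j}$ is nothing but the trace of $A$ expressed in the basis $\{x_j\}$, and then to invoke the basis-independence of the trace to identify this common value with $W = {\rm tr}(A)$, where the latter is computed in the standard basis $\{e_j\}$. Thus the whole content of the theorem is the assertion that $\sum_{j} \inner{A(x_j)}{x_j}$ does not depend on the choice of ONB.

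Concretely, I would first assemble the ONB into a change-of-basis matrix: let $U$ be the $d \times d$ matrix whose $j$-th column is $x_j$, so that $x_j = U e_j$. Because $\{x_j\}$ is orthonormal, $U$ is orthogonal when $\KK = \RR$ and unitary when $\KK = \CC$; in either case $U^\ast U = U U^\ast = I$. Using the defining property of the adjoint, $\inner{U(w)}{v} = \inner{w}{U^\ast(v)}$, I would rewrite each summand as $\inner{A(x_j)}{x_j} = \inner{A U e_j}{U e_j} = \inner{U^\ast A U e_j}{e_j}$, which is exactly the $(j,j)$ diagonal entry of the matrix $U^\ast A U$. Summing over $j$ gives $\sum_j \inner{A(x_j)}{x_j} = {\rm tr}(U^\ast A U)$. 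The final step uses the cyclic invariance ${\rm tr}(BC) = {\rm tr}(CB)$ recorded in Subsection \ref{sec:prelim}: taking $B = U^\ast$ and $C = AU$ yields ${\rm tr}(U^\ast A U) = {\rm tr}(A U U^\ast) = {\rm tr}(A)$, since $U U^\ast = I$. Hence $\sum_j g(x_j) = {\rm tr}(A) = W$ for every ONB, which is the claim.

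As an independent check, one could instead argue in components: writing $x_j = \sum_i (x_j)_i e_i$ and expanding $\inner{A(x_j)}{x_j} = \sum_{i,k} a_{ik} (x_j)_k \overline{(x_j)_i}$, the completeness relation $\sum_j (x_j)_k \overline{(x_j)_i} = \delta_{ik}$ for an ONB collapses the double sum to $\sum_i a_{ii} = {\rm tr}(A)$.

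I do not expect a genuine obstacle here: the statement is essentially the observation that the trace is a basis-independent invariant, and the identity $\sum_j \inner{A(x_j)}{x_j} = {\rm tr}(A)$ in fact holds for \emph{any} linear operator $A$ on $\HH = \KK^d$, with self-adjointness playing no role in the computation. The only points demanding care are bookkeeping ones: in the case $\KK = \CC$ one must respect the conjugate-linear slot of $\inner{\cdot}{\cdot}$ when moving $U$ across the inner product, which is precisely why $U^\ast$, and not the transpose $U^\tau$, appears, and one should confirm that the columns of $U$ forming an ONB is equivalent to $U$ being unitary (orthogonal). Because the identity requires no self-adjointness, the remark preceding the theorem, that over $\CC$ the same conclusion holds for normal $A$, is immediate from this very same argument.
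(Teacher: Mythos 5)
Your proof is correct, but it takes a genuinely different route from the paper's. The paper diagonalizes $A$ via the spectral theorem, writes ${\rm tr}(A) = \sum_j \lambda_j \norm{e_j}^2$, expands each $\norm{e_j}^2$ by Parseval's identity against the given ONB $\{x_n\}$, and reorders the double sum to recover $\sum_n \inner{A(x_n)}{x_n}$. You instead conjugate by the unitary (orthogonal) change-of-basis matrix $U$ and invoke the cyclic invariance ${\rm tr}(U^\ast A U) = {\rm tr}(AUU^\ast) = {\rm tr}(A)$. Your argument is more elementary (no spectral theorem) and strictly more general: it establishes $\sum_j \inner{A(x_j)}{x_j} = {\rm tr}(A)$ for an \emph{arbitrary} linear operator on $\KK^d$, which immediately subsumes the remark preceding the theorem about normal operators over $\CC$, whereas the paper's route needs an orthonormal eigenbasis and hence self-adjointness (or normality). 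What the paper's approach buys is a template that is reused verbatim for Theorem \ref{thm:gleapars}, where the ONB is replaced by an $N$-element Parseval frame and the key step is again Parseval's identity $\norm{e_j}^2 = \sum_{n=1}^N |\inner{e_j}{x_n}|^2$; that said, your argument also adapts to that setting by replacing $U$ with the $d \times N$ synthesis matrix $S$ of the frame, for which $SS^\ast = I_d$, giving $\sum_n \inner{A(x_n)}{x_n} = {\rm tr}(S^\ast A S) = {\rm tr}(ASS^\ast) = {\rm tr}(A)$. Your component-wise check via the completeness relation $\sum_j (x_j)_k \overline{(x_j)_i} = \delta_{ik}$ is the same computation in coordinates and is likewise sound.
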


\begin{proof}
By the spectral
theorem, there exists an orthonormal eigenbasis $\{e_j\}_{j=1}^d$
associated with the set $\{ \lambda_j \}_{j=1}^d$ of eigenvalues 
of $A$. Hence, for all $x\in \HH$ we have $x=\sum_{j=1}^d \inner{x}{e_j}e_j$ and 
$A(x)=\sum_{j=1}^d \inner{x}{e_j}\lambda_j e_j.$
If $\{x_j\}_{j=1}^d$ is an ONB for $\HH$, then we have
\[
       {\rm tr}(A)=\sum_{j=1}^{d}\lambda_j
=\sum_{j=1}^d \lambda_j\norm{e_j}^2
=\sum_{j=1}^d \lambda_j\sum_{n=1}^d |\inner{e_j}{x_n}|^2,
\]
where the last equality is due to 
the Parseval identity.
Reordering the finite sums, and using the orthogonality
of $\{e_j\}$ yields the desired result:
\[
         {\rm tr}(A) = \sum_{n=1}^d\sum_{j=1}^d \lambda_j\inner{e_j}{x_n}\inner{x_n}{e_j}
 =\sum_{n=1}^d\sum_{j=1}^d\inner{\inner{x_n}{e_j}\lambda_j e_j}{x_n}
      =\sum_{n=1}^d\inner{A(x_n)}{x_n}=\sum_{n=1}^d g(x_n).
\]
Therefore, $g$ is a Gleason function of weight $W = {\rm tr}(A)$ for the ONBs for $\HH$.
\end{proof}

The converse assertion of Theorem \ref{thm:saimpliesglea} is true directly for $d=1$. 
In fact, if $d=1$ and $g$ is a Gleason function of weight $W$ for the two ONBs for $\HH = \KK = \RR$,
then $A$ is defined by the action $A(x) := Wx$. The same operator works for  $\HH = \KK = \CC$, but in this case
the ONBs are the uncountable set, $\{z_u = e^{iu} : u \in [0, 2\pi) \}$, and $A$ is again defined by the 
action $A(x) :=Wx$.

The converse assertion of Theorem \ref{thm:saimpliesglea} is not true 
for the case $d=2$, see Subsection \ref{sec:d=2}.

Although the situation is substantially more intricate for $d \geq 3$, 
Gleason's Theorem \ref{thm:glea1} asserts that the converse 
of Theorem \ref{thm:saimpliesglea} is still true, but with restrictions on the 
given Gleason function. As Gleason was well aware, some restrictions are necessary, see Proposition
\ref{prop:disc}.

\begin{thm}
\label{thm:glea1}
Let $\HH =\KK^d$ and let $g:S^{d-1} \longrightarrow \RR$ 
be a non-negative Gleason function for the ONBs for $\HH$, where $d \geq 3$. There
exists a positive self-adjoint operator $A:\HH \longrightarrow \HH$ such that
\[
       \forall x\in S^{d-1}, \quad g(x)=\langle A(x),x \rangle.
\]
The result is also true for any separable Hilbert space $\HH$.
\end{thm}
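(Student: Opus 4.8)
The plan is to follow the classical architecture of Gleason's argument: reduce everything to the real three-dimensional case and then bootstrap. First I would record the reduction to $\HH = \RR^3$. If $g$ is a non-negative Gleason function on $S^{d-1} \subseteq \RR^d$ with $d \geq 3$, then for every three-dimensional subspace $V \subseteq \RR^d$ the restriction $g|_{V \cap S^{d-1}}$ is again a Gleason function on $V$: completing an ONB $\{u_1,u_2,u_3\}$ of $V$ by a \emph{fixed} ONB $\{e_4,\dots,e_d\}$ of $V^\perp$ shows that $g(u_1)+g(u_2)+g(u_3) = W - \sum_{i\geq 4} g(e_i)$ is independent of the chosen basis of $V$. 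Thus it suffices to treat $d=3$, obtain a symmetric operator on each $V$, and then check that these local quadratic forms patch into a single global $A$ on $\RR^d$: the diagonal entries are read off as $a_{i,i}=g(e_i)$ and the off-diagonal entries from $g$ on each coordinate plane, and pairwise agreement on shared two-dimensional planes forces a well-defined symmetric matrix whose form $\langle A(x),x\rangle$ then agrees with $g$ at every point (each point lying in some three-dimensional subspace). The complex and infinite-dimensional statements follow by further standard reductions (realification $\CC^d \cong \RR^{2d}$ and restriction to finite-dimensional subspaces), as in \cite{glea1957}, \cite{part1992}.

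The heart is the case $\HH = \RR^3$, which I would split into an algebraic classification and an analytic regularity step. For the classification, assume first that $g$ is \emph{continuous} on $S^2$. Writing $g = W/3 + g_0$ reduces to a weight-zero continuous frame function $g_0$, which I would expand in spherical harmonics, $g_0 = \sum_{\ell \geq 0} g_\ell$ with $g_\ell \in \Hcal_\ell$. The $L^2$-closure of the weight-zero continuous frame functions is a closed, $SO(3)$-invariant subspace (precomposition by a rotation preserves ONBs and weights), hence a sum of the irreducible pieces $\Hcal_\ell$; moreover the equivariant projection $\Pi_\ell$ onto $\Hcal_\ell$ is a zonal average, so it sends frame functions to frame functions, and it remains only to decide which $\ell$ occur. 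Since $\{-u_1,u_2,u_3\}$ is an ONB whenever $\{u_1,u_2,u_3\}$ is, every frame function is even, killing all odd $\ell$. The traceless quadratic forms exhaust $\Hcal_2$ and sum to ${\rm tr}(A)=0$ over any ONB, so $\Hcal_2$ does occur; and for even $\ell \geq 4$ I would show $\Hcal_\ell$ carries no nonzero frame function by testing the zonal harmonic $x \mapsto P_\ell(\langle x,n\rangle)$: summing over an orthonormal triple $\{e_i\}$ and using $\sum_i \langle e_i,n\rangle^2 = 1$, the triple-sum collapses to a constant precisely when $\ell \in \{0,2\}$ (yielding $3$ and $0$) but varies with the triple once $\ell \geq 4$. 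Hence $g_0 \in \Hcal_2$ is a traceless quadratic form, $g$ itself equals $\langle A(x),x\rangle$ for symmetric $A$, and non-negativity of $g$ gives $\langle A(x),x\rangle \geq 0$, i.e. $A$ is positive semi-definite.

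The main obstacle is the analytic step: upgrading the hypothesis from pointwise non-negativity to continuity so that the classification applies. That some regularity is genuinely necessary is the content of Proposition \ref{prop:disc}, where a Hamel-basis construction produces pathological frame functions. Here I would follow Gleason's oscillation estimates. A non-negative frame function on $S^2$ is automatically bounded, since for any ONB $\{e_1,e_2,e_3\}$ each non-negative summand is at most the total $W$, so $0 \leq g \leq W$. One then shows that $g$ cannot oscillate too much on any great circle: comparing the frames through a continuously varying family of intermediate orthonormal triples and using non-negativity to squeeze the increments, the difference $|g(x)-g(y)|$ for nearby $x,y$ is controlled, and iterating yields uniform continuity on all of $S^2$. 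I expect this squeezing argument to be the most delicate part of the proof, and the place where the full strength of the hypotheses is actually used; once continuity is established the spherical-harmonic classification finishes the three-dimensional case and the reductions of the first paragraph deliver the general statement. Alternatively, one may simply invoke the established proofs of Gleason \cite{glea1957} and Parthasarathy \cite{part1992}.
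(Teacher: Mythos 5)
The paper does not prove this theorem: it is quoted as Gleason's result, and Remark \ref{rem:glear3} records exactly the two-stage structure you describe --- the non-negative case on $\RR^3$ (Gleason's Theorem 2.8) resting on the continuous case on $\RR^3$ (his Theorem 2.3) --- with the proofs deferred to \cite{glea1957} and \cite{part1992}. So the last sentence of your proposal is literally the paper's route, and your sketch is a faithful outline of the argument being cited: the reduction to three real dimensions, the patching of the local quadratic forms via their common restrictions to shared two-dimensional subspaces, the spherical-harmonic classification of continuous weight-zero frame functions on $S^2$ (evenness killing odd $\ell$, the zonal test isolating $\ell \in \{0,2\}$), and the regularity step are all correctly placed. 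Two caveats. First, the regularity step --- that a non-negative frame function on $S^2$ is continuous --- is the entire difficulty of the theorem, and your description of it (squeezing increments through intermediate orthonormal triples) is a gesture at Gleason's argument rather than a proof; as written, the hard content is still carried by the citation. Second, the complex case is not a ``standard reduction'' by realification: a complex ONB of $\CC^d$ has $d$ elements while a real ONB of $\RR^{2d}$ has $2d$, so the two ONB structures do not correspond, and Gleason instead passes through completely real two-dimensional subspaces together with a phase-invariance argument. A minor point: non-negativity of $g$ yields only $\langle A(x),x\rangle \geq 0$, i.e., $A$ positive semi-definite, which is what the statement intends by ``positive'' even though Subsection \ref{sec:prelim} reserves that word for the strict inequality.
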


\begin{rem}[Gleason's theorem for $\RR^3$]
\label{rem:glear3}
The proof of Theorem \ref{thm:glea1} depends on Gleason's theorem that 
{\it non-negative Gleason functions for the ONBs for $\RR^3$ 
satisfy \eqref{eq:glea1}} (\cite{glea1957}, Theorem 2.8); and this, in turn,
depends on his result that {\it continuous Gleason functions for the ONBs for $\RR^3$ 
satisfy \eqref{eq:glea1}}
(\cite{glea1957}, Theorem 2.3).
Both proofs are ingenious.
\end{rem}

 Theorem \ref{thm:glea1} is essential and significant for the proof of the following result.
  The positivity hypothesis 
 in Theorem \ref{thm:glea1} is natural given the measure theoretic nature of Theorem
\ref{thm:glea2}. Theorem \ref{thm:glea2} was our starting point in Subsection \ref{sec:back}.

\begin{thm}
\label{thm:glea2}
Let $\mu$ be a measure on the 
closed subspaces of $\HH$, where  ${\rm dim} (\HH) \geq 3$. There
exists a positive semi-definite self-adjoint trace class operator 
$A:\HH \longrightarrow \HH$ such that, for all
closed subspaces $X\subseteq \HH$, 
\[
       \mu(X)={\rm tr}(AP_X),
\] 
where $P_X$ is the
orthogonal projection of $\HH$ onto $X$. 
\end{thm}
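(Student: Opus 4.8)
The plan is to reduce this statement to Gleason's frame function theorem, Theorem \ref{thm:glea1}, which I take as given (including its validity for an arbitrary separable $\HH$). The first step is to manufacture a Gleason function out of $\mu$. For a unit vector $x \in S$, the line $\mathrm{span}\{x\}$ is a one-dimensional, hence closed, subspace, so I set $g(x) := \mu(\mathrm{span}\{x\})$. This is well-defined on $S$ because $\mathrm{span}\{x\}$ depends only on the line through $x$, and it is non-negative since $\mu$ takes values in $[0,\infty)$. To check that $g$ is a Gleason function for the ONBs, let $\{x_j\}_{j\in J}$ be any ONB for $\HH$: the lines $\mathrm{span}\{x_j\}$ are mutually orthogonal closed subspaces whose closed linear span is all of $\HH$, so countable additivity of $\mu$ gives $\sum_{j} g(x_j) = \mu(\HH) =: W$, a value independent of the ONB. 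Note $W = \mu(\HH) < \infty$, since by definition $\mu$ assigns a finite non-negative number to every closed subspace.

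Next I would apply Theorem \ref{thm:glea1} to obtain a self-adjoint operator $A : \HH \to \HH$ with $g(x) = \langle A(x), x \rangle$ for every $x \in S$; since $g \geq 0$, this $A$ is positive semi-definite. I then claim $\mu(X) = \mathrm{tr}(A P_X)$ for every closed subspace $X$. Fixing $X$, I choose an ONB $\{u_k\}$ for $X$ and extend it to an ONB $\{u_k\} \cup \{v_\ell\}$ for $\HH$. Additivity of $\mu$ over the orthogonal lines $\mathrm{span}\{u_k\}$, whose closed linear span is $X$, yields $\mu(X) = \sum_k g(u_k) = \sum_k \langle A(u_k), u_k \rangle$. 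On the other hand, computing the trace of $AP_X$ in the adapted basis $\{u_k\}\cup\{v_\ell\}$ and using $P_X u_k = u_k$ and $P_X v_\ell = 0$ gives $\mathrm{tr}(A P_X) = \sum_k \langle A(u_k), u_k \rangle$, which is exactly $\mu(X)$.

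Finally I would verify that $A$ is trace class, which is what legitimizes the traces above and makes $\mathrm{tr}(AP_X)$ independent of the basis used to evaluate it. Since $A$ is positive semi-definite and self-adjoint, $(A^{*}A)^{1/2} = A$, so by the definition in Remark \ref{rem:trace}, for any ONB $\{x_n\}$ one has $\norm{A}_1 = \sum_n \langle A(x_n), x_n \rangle = \sum_n g(x_n) = W < \infty$; hence $A$ is trace class with $\mathrm{tr}(A) = W = \mu(\HH)$.

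The genuinely hard content is entirely contained in Theorem \ref{thm:glea1}, which I am assuming; the reduction itself is soft. The one point that requires care, beyond that invocation, is precisely the trace-class verification: in the separable infinite-dimensional setting one needs the finiteness of $W$ together with the positivity of $A$ to guarantee that the (a priori infinite) sums defining $\mu(X)$ and $\mathrm{tr}(AP_X)$ converge unconditionally and agree regardless of the chosen orthonormal basis. In finite dimensions these convergence issues are vacuous, so there the argument is a direct basis computation.
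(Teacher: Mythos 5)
Your proposal is correct and follows essentially the same route as the paper: define $g(x)=\mu(\overline{\rm span}(x))$ on the unit sphere, observe it is a non-negative Gleason function of weight $\mu(\HH)<\infty$ for the ONBs, invoke Theorem \ref{thm:glea1} to get the positive self-adjoint $A$, and then compute $\mu(X)$ and ${\rm tr}(AP_X)$ in an ONB of $\HH$ adapted to $X$ and $X^{\perp}$. Your explicit trace-class verification via $(A^{*}A)^{1/2}=A$ is a slightly more careful rendering of the step the paper handles by noting that $\sum_j\langle A(x_j),x_j\rangle=\mu(\HH)$ is finite, but the argument is the same.
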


\begin{proof} Let $B_x=\overline{\rm span}(x)$ for any unit norm vector
$x \in \HH$, i.e., $x \in S^{d-1}$ for $\HH = \KK^d$. Then,
$g(x) = \mu(B_x)$ defines a 
non-negative Gleason function for the ONBs
for $\HH$ by the definition of $\mu$. By Theorem \ref{thm:glea1}, there exists a positive
self-adjoint operator $A$ such that for all unit norm 
$x \in \HH$, we have $g(x)=\inner{A(x)}{x}$. 

Next, note that if  $\{x_j\}$ is an ONB for $\HH$, then 
\[
     \mu(\HH)=\sum_j \mu(B_{x_j})=\sum_j\inner{A(x_j)}{x_j}={\rm tr}(A),
\]
where the sums are finite since, by the definition of a measure $\mu$ on the closed subspaces
in Subsection \ref{sec:back},
we have assumed $\mu(\HH) < \infty$. Because the latter sum is finite, $A$ is trace class
and, in fact, ${\rm tr}(A) = \mu(\HH)$. These latter assertions are immediate for the
cases that $\HH = \KK^d, \, d \geq 3$.

If $X \subseteq \HH$ is an arbitrary closed subspace, choose an ONB $\{y_i\}$ for $X$,
and an ONB $\{z_j\}$ for the orthogonal complement $X^{\perp}$ of $X$. Then, the
projection mapping $P_X$ satisfies
$P_X (y_i)=y_i$ and $P_X( z_j)=0$ for all $i$ and $j$. Clearly, $\{y_i\}\cup \{z_j\}$ is an 
ONB for $\HH$. Therefore, we have
\[
        \mu(X)=\sum_i \mu(B_{y_i})=\sum_{i}\inner{A(y_i)}{y_i}
        =\sum_{i} \inner{A(P_X (y_i))}{y_i}+\sum_{j}\inner{A(P_X(z_j))}{z_j}={\rm tr}(AP_X),
\]
as desired. 
\end{proof}

Whereas Gleason formulated Theorem \ref{thm:glea1} only for the case
where $g$ takes on non-negative real values, we now show that it is not
difficult to extend the result to the more general case that $g: S^{d-1} \to \KK$
is bounded. (Here, $\KK$ is the base field of $\HH = \KK^d$.) In fact, this
generality is used in the sequel, e.g., in Theorem \ref{thm:GNsolutionGof0}.

\begin{thm} 
\label{thm:Gleasonbounded}
Let $\HH = \KK^d$ and let $g: S^{d-1} \to \KK$ be a 
bounded Gleason function for the ONBs for $\HH$, where $d \geq 3$. There exists a 
bounded (necessarily since $\HH = \KK^d$) linear operator $A: \HH \to \HH$ such that
\[
\forall x \in S^{d-1}, \hspace{2em} g(x) = \inner{A(x)}{x}.
\]
Furthermore, $A$ is self-adjoint if the bounded function $g$ is real-valued,
and, in particular, if $\HH = \RR^d$.

\end{thm}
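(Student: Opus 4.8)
The plan is to reduce the bounded case to Gleason's Theorem \ref{thm:glea1} by two elementary maneuvers: a constant shift to restore non-negativity, and, in the complex case, separation into real and imaginary parts.

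First I would treat the real-valued case, $g : S^{d-1} \to \RR$. Suppose $g$ is a bounded Gleason function of weight $W$, say $|g(x)| \le M$ for all $x \in S^{d-1}$. For any constant $c \in \RR$ the shifted function $h := g + c$ is again a Gleason function, now of weight $W + cd$: indeed, every ONB for $\HH = \KK^d$ has exactly $d$ elements, so for each ONB $\{x_j\}_{j=1}^d$ we have $\sum_{j=1}^d h(x_j) = \sum_{j=1}^d g(x_j) + cd = W + cd$. (This is precisely the step that forces the restriction to finite dimensions, since in infinite dimensions the added constants would make the defining sum diverge.) Choosing $c = M$ makes $h$ non-negative, so Theorem \ref{thm:glea1} supplies a positive self-adjoint operator $B : \HH \to \HH$ with $h(x) = \langle B(x), x \rangle$ for all $x \in S^{d-1}$. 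Since $\langle cI(x), x \rangle = c\norm{x}^2 = c$ on $S^{d-1}$, I would set $A := B - cI$, which is self-adjoint (as $c$ is real, and $B$ is self-adjoint) and satisfies $g(x) = h(x) - c = \langle (B - cI)(x), x \rangle = \langle A(x), x \rangle$, as required.

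For the complex case $\KK = \CC$ with $g : S^{d-1} \to \CC$ bounded, I would write $g = g_1 + i g_2$ with $g_1 = \Re g$ and $g_2 = \Im g$. Taking real and imaginary parts of the defining identity $\sum_{j} g(x_j) = W$ shows that $g_1$ and $g_2$ are each bounded, real-valued Gleason functions, of weights $\Re W$ and $\Im W$ respectively. Applying the real case to each produces self-adjoint operators $A_1, A_2$ with $g_k(x) = \langle A_k(x), x \rangle$ on $S^{d-1}$ for $k = 1, 2$, and then $A := A_1 + i A_2$ is a (necessarily bounded, since $\HH = \KK^d$) linear operator with $\langle A(x), x \rangle = \langle A_1(x), x \rangle + i \langle A_2(x), x \rangle = g_1(x) + i g_2(x) = g(x)$, using linearity of the inner product in its first slot. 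This $A$ is self-adjoint exactly when $A_2 = 0$, i.e., when $g$ is real-valued, which gives the final clause (and covers $\KK = \RR$ trivially).

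Since all three steps are short and formal, I do not expect a genuine obstacle; the only point deserving care is verifying that the constant shift preserves the Gleason property with the correct weight, which hinges on the fixed cardinality $d$ of every ONB for $\KK^d$ — and this is exactly why the argument is confined to $\HH = \KK^d$ rather than a general separable $\HH$.
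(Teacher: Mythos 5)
Your proposal is correct and follows essentially the same route as the paper's proof: shift $g$ by a constant (absorbed into a multiple of the identity) to reduce to the non-negative case handled by Gleason's Theorem \ref{thm:glea1}, then split into real and imaginary parts when $\KK = \CC$. The only difference is immaterial: the paper shifts by $-\inf_{x \in S^{d-1}} g(x)$ rather than by $\sup_{x \in S^{d-1}} |g(x)|$.
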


\begin{proof}
{\it i.} First suppose that the image of $g$ lies in $\RR$. Let $W$ denote the weight of $g$ and 
let $\lambda = \inf_{x \in S^{d-1}} g(x)$. Then, the function $f: S^{d-1} \to \KK$ defined by $f(x) := g(x) - \lambda$ is a Gleason function of weight $W - \lambda d$ for the ONBs for $\HH$, since if $\{ x_i \}_{i=1}^d$ is an ONB for $\HH$, then
\[
\sum_{i=1}^d f(x_i) = \Big(\sum_{i=1}^d g(x_i)\Big) - \lambda d = W - \lambda d.
\]
Furthermore, $f$ is non-negative, since $f(x) \geq \lambda - \lambda = 0$ for all $x \in S^{d-1}$. 
Hence, Gleason's Theorem \ref{thm:glea1} implies that there exists a self-adjoint operator 
$B: \HH \to \HH$ such that $f(x) = \inner{B(x)}{x}$ for all $x \in S^{d-1}$. 
Setting $A := B - \lambda I$, we obtain 
$g(x) = \inner{(B - \lambda I)(x)}{x} = \inner{A(x)}{x}$ for all $x \in S^{d-1}$. Note that $A$ is 
the difference of two self-adjoint operators and is therefore self-adjoint.

{\it ii.} Now we proceed to the general case, where the image of $g$ lies in $\CC$. 
Both ${\rm Re}\, g$ and ${\rm Im}\, g$ are Gleason functions for the ONBs for $\HH$, since for any ONB 
$\{ x_i \}_{i=1}^d$ for $\HH$, we have
\[
  \sum_{i=1}^n {\rm Re}\, g(x_i) + i \sum_{i=1}^n {\rm Im}\, g(x_i) = \sum_{i=1}^n g(x_i) = W = 
  {\rm Re}\, W + i {\rm Im}\, W.
\]
Clearly, both ${\rm Re}\, g$ and ${\rm Im}\, g$ are bounded. Hence, by part {\it a} we obtain linear operators 
$B, C: \HH \to \HH$ such that ${\rm Re}\, g(x) = \inner{B(x)}{x}$ and ${\rm Im}\, 
g(x) = \inner{C(x)}{x}$ for all $x \in S^{d-1}$. 
Setting $A := B + iC$, we obtain $g(x) = \inner{A(x)}{x}$ for all $x \in S^{d-1}$.
\end{proof}

Although $B + i C$ is not self-adjoint, i.e., $\langle (B+iC)(x),y \rangle \neq 
\langle x, (B^\ast + i C^\ast)(y) \rangle$, we do have $\langle (B+iC)(x),y \rangle =
\langle x, (B^\ast - i C^\ast)(y) \rangle$.

\begin{example}[Gleason functions as compositions]
\label{ex:comp}
{\it a.} Let $g$ be the composition $g = f \circ h$, 
restricted to $S^{d-1}$, of $h : \KK^d \to\KK$ and $ f :\KK \to \KK$,
and where $h$ takes an arbitrary constant value $c \in \KK$ on $S^{d-1}$.
Then, for any ONB
$\{x_j\}_{j=1}^d\subset \KK^d$, we have
 \[
      \sum_{j=1}^{d} g(x_j)=\sum_{j=1}^{d} (f \circ h)(x_j)=\sum_{j=1}^d f(c)= d \cdot f(c) :=W.
\]
Thus, $g$ is Gleason function of weight $W$ for the ONBs for $\KK^d$.
We can write $g$ as 
 \[
      \forall x\in S^{d-1}, \quad g(x)=\inner{A(x)}{x},
\]
 where $A=\frac{W}{d}I$ and where $I:\KK^d \to \KK^d$ is the identity. In fact,
 for $x \in S^{d-1}$, we have
 \[
    \langle A(x),x \rangle =\frac{W}{d} \langle x,x \rangle = f(c) = (f \circ h)(x) = g(x).
 \]
Note that $A$ is self-adjoint if $f : \KK \to \RR$. Further,
in this case, 
if $\KK = \RR$, then $y^T A(y)$ is the quadratic form $(W/d)(y_1^2 + \ldots + y_d^2)$.

It is natural to consider the special 
case $c = \norm{x} =1$ since $x \in S^{d-1}$.

{\it b.} Let $g$ be the composition $g = f \circ h$, where $h: S^{d-1} \to \KK$
is a Gleason function of weight $W$ for the ONBs for $\KK^d$ and where
$f : \KK \to \KK$ is a homomorphism on the additive group $\KK$. For any ONB
$\{x_j\}_{j=1}^d\subset \KK^d$, we have
\[
  \sum_{j=1}^d g(x_j) = f(h(x_1)) + \ldots + f(h(x_d)) = f(h(x_1) + \ldots + h(x_d)) = f(W),
\]
and so $g$ is a Gleason function of weight $f(W)$ for the ONBs for $\KK^d$.

{\it c.} Let $f : \RR \to \RR$ be a homomorphism on the additive group $\RR$. Since
$f(0)= 0$, and setting $c := f(1) \in \RR$, we obtain $f(q) = cq$ for all $q \in \QQ$ by direct calculation.
If $f$ is continuous on $\RR$, we can then assert that $f(x) = cx$ for all $x \in \RR$.
As is well-known, the hypothesis of continuity can be relaxed to assuming only that
$f$ is continuous at a point or even only that $f$ is Lebesgue measurable, and
one still verifies that  $f(x) = cx$ for all $x \in \RR$.
\end{example}

$\RR$ is an infinite dimensional vector space over the rational field $\QQ$. For this
setting, we say that $H$ is a {\it Hamel basis} for $\RR$ if
\[ 
    \forall x \in \RR, \; \exists \{r_{\alpha}\} \subseteq \QQ \; {\rm and} \; \exists
    \{h_{\alpha}\} \subseteq H, \; \text{such that} \, x = \sum_{\alpha} r_{\alpha}h_{\alpha},
\]
where the sum is finite and the representation is unique. Using Zorn's lemma, 
which is an equivalent form of the axiom of choice, it is straightforward to 
see that Hamel bases exist by the following argument: let $\Ical$ be the family of all
subsets of $\RR$ that are linearly independent over $\QQ$; then there is a maximal element
$H \in \Ical$, and this $H$ can be shown to be a Hamel basis 
$\RR$. We have ${\rm card}(H) = {\rm card}(\RR)$. Further, any vector space
over any field has a Hamel basis over that field. See \cite{BenCza2009}, pages 88--89, 150, 153, 162
for this material, its relation to measure theory, and classical references beyond
Hausdorff's fundamental book. 

\begin{prop}
\label{prop:disc}

{\it a.} There are discontinuous homomorphisms $f : \RR \to \RR$.

{\it b.} There are discontinuous, and, in fact, non-Lebesgue-measurable Gleason functions for the
ONBs for $\HH = \KK^d$.

{\it c.} There are Gleason functions $g$ for the ONBs for $\HH = \KK^d$ that do not satisfy
\eqref{eq:glea1} for any self-adjoint operator $A$.

\end{prop}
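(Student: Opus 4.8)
The plan is to build all three counterexamples from one source: a discontinuous additive homomorphism of $\RR$ coming from a Hamel basis, composed with an honest quadratic Gleason function. For part \emph{a}, I would fix a Hamel basis $H$ for $\RR$ over $\QQ$ and let $f:\RR \to \RR$ be the unique $\QQ$-linear extension of an assignment on $H$ chosen so that $f$ is \emph{not} $\RR$-linear. Concretely, pick two distinct basis vectors $h_1,h_2 \in H$ and set $f(h_1)=h_1$, $f(h_2)=0$, with $f=0$ on all remaining basis vectors; then $f$ is additive by construction, but $f(h_1)/h_1 = 1 \neq 0 = f(h_2)/h_2$ rules out any constant $c$ with $f(x)=cx$. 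Since every additive homomorphism that is continuous — or even merely Lebesgue measurable, or continuous at a single point — must have the form $f(x)=cx$ by Example \ref{ex:comp}\emph{c}, our $f$ is both discontinuous and non-measurable. This proves \emph{a} and, as a by-product, records the non-measurability of $f$ that drives \emph{b}.

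For part \emph{b}, let $P_1$ be the orthogonal projection of $\HH=\KK^d$ onto its first coordinate axis and set $h(x):=\inner{P_1(x)}{x}=|x_1|^2$. By Theorem \ref{thm:saimpliesglea}, $h$ is a continuous Gleason function of weight $\mathrm{tr}(P_1)=1$ for the ONBs for $\HH$, and its range on $S^{d-1}$ is the full interval $[0,1]$. With $f$ as in part \emph{a} (we only need $f$ additive on $\RR$, since for any ONB $\{x_j\}$ the arguments sum to the real number $\sum_j h(x_j)=\mathrm{tr}(P_1)=1$; if $\KK=\CC$ we may extend $f$ additively to $\CC$, which is harmless), Example \ref{ex:comp}\emph{b} shows that $g:=f\circ h$ is a Gleason function of weight $f(1)$. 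Because $f$ is unbounded on every interval and $h$ is an open map onto $(0,1)$, the function $g$ is unbounded on every nonempty relatively open subset of $S^{d-1}$, so $g$ is strongly discontinuous.

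The real content of \emph{b} is the non-measurability of $g$, and this is the step I expect to be the main obstacle. The idea is to transport the non-measurability of $f$ through $h$. The push-forward $h_\ast\sigma$ of the surface measure $\sigma$ on $S^{d-1}$ is absolutely continuous with strictly positive density on $(0,1)$, hence mutually absolutely continuous with Lebesgue measure there. I would fix a Borel set $B$ for which $T:=f^{-1}(B)\cap(0,1)$ is Lebesgue non-measurable (possible since $f$ is non-measurable) and show that $g^{-1}(B)=h^{-1}(T)$ is $\sigma$-non-measurable. The delicate point is exactly that $h^{-1}$ preserves non-measurability; I would establish this from the fibration structure of $h$, whose level sets off a $\sigma$-null critical set are products of lower-dimensional spheres, so that a local change of coordinates turns $h$ into a coordinate projection and reduces the claim to the Fubini fact that a set of the form (non-measurable)\,$\times$\,(positive-measure fiber) is non-measurable.

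Finally, part \emph{c} is immediate from \emph{b}. For any bounded operator $A$ on $\HH=\KK^d$ — in particular any self-adjoint one — the map $x\mapsto \inner{A(x)}{x}$ is a polynomial in the real and imaginary parts of the coordinates of $x$, hence continuous on $S^{d-1}$. The Gleason function $g$ constructed in \emph{b} is discontinuous (indeed non-measurable), so it cannot agree with $\inner{A(x)}{x}$ on $S^{d-1}$ for any self-adjoint $A$; thus $g$ witnesses \emph{c} and shows that the regularity hypotheses in Theorems \ref{thm:glea1} and \ref{thm:Gleasonbounded} cannot simply be dropped.
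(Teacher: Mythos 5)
Your proposal is correct and follows essentially the same route as the paper: part \emph{a} via a $\QQ$-linear, non-$\RR$-linear extension from a Hamel basis, part \emph{b} by composing that homomorphism with a quadratic Gleason function as in Example \ref{ex:comp}\emph{b}, and part \emph{c} from the continuity of $x \mapsto \inner{A(x)}{x}$ on $S^{d-1}$. The only difference is one of detail: the paper dispatches part \emph{b} in a single line, whereas you make the explicit choice $h(x)=|x_1|^2$ and supply an honest (and essentially correct, for $d\geq 2$) argument that the non-measurability of $f$ transports through the fibration $h$ to the sphere, which is precisely the point the paper leaves implicit.
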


\begin{proof}
{\it a.} Define $f(t) \in \RR$ for each $t \in H$, a Hamel basis for $\RR$ over  $\QQ$.
This definition of $f$ on $H$ is arbitrary.
Each $u \in \RR$ has a unique finite sum representation $u = \sum_{t \in H} r_t(u) t$, where
$r_t(u) \in \QQ$. Define $f : \RR \to \RR$ by $f(u) = \sum_{t \in H} r_t(u) f(t)$. $f$ is
well-defined on $\RR$ since $H$ is a Hamel basis. By the unique representations,
$u = \sum_{t \in H} r_t(u) t$, $v = \sum_{t \in H} r_t(v) t$, $u + v = \sum_{t \in H} r_t(u +v) t$,
we can assert that $f$ is a homomorphism because
\[
    \forall u,v \in \RR, \quad  r_t(u) + r_t(v)  = r_t(u +v),
\]
and so 
\begin{equation}
\label{eq:homom}
   f(u) + f(v) = \sum_{t \in H} r_t(u) f(t) + \sum_{t \in H} r_t(v) f(t)  
   = \sum_{t \in H}( r_t(u) + r_t(v)) f(t) = \sum_{t \in H} r_t(u +v) f(t) = f(u + v).
\end{equation}
The facts that ${\rm card} (H) = {\rm card}(\RR) = {\bf c}$ and $f$ can be
defined arbitrarily on $H$ allow us to conclude that the
homomorphism equation \eqref{eq:homom} has ${\bf c}^{\bf c}$ solutions.

On the other hand, any of these solutions $f$ that is continuous at a point
is of the form $f(u) = cu$ for some $c \in \RR$ (Example \ref{ex:comp} c),
i.e., there are {\bf c} such solutions, and so all of the other solutions are discontinuous.

{\it b.} Combining Example \ref{ex:comp} b with part {\it a} gives part {\it b}.

{\it c.} Suppose $g$ is a Gleason function that satisfies \eqref{eq:glea1}.
The continuity of $g$ on the compact set $S^{d-1}$ follows since $A$ is bounded.
More concretely, if $\norm{x_n - x}_{\HH} \to 0$, where $x_n, x \in S^{d-1}$, then
\[
    |g(x_n) - g(x)| = |\langle A(x_n - x), x_n \rangle + \langle A(x), x_n \rangle - \langle A(x),x \rangle | 
\]
\[
    \leq \norm{A(x_n - x)}_{\HH} + |\langle x_n - x, A(x) \rangle| \leq 2 \norm{A} \norm{x_n - x}_{\HH},
\]
which goes to $0$ in the limit.

Choose a discontinuous Gleason function from part {\it b}. It can not satisfy  \eqref{eq:glea1} for
then it would be continuous.
\end{proof}
 
%%%%%%%%%%%%%%%%%%%%%%%%

\subsection{The case $d=2$}
\label{sec:d=2}

Although, as noted in Subsection \ref{sec:glea}, the converse assertion of Gleason's 
Theorem \ref{thm:saimpliesglea} is elementary to verify in the 1-dimensional case and is true
 for separable
Hilbert spaces of dimension $d \geq 3$, the theorem does not hold for $\HH = \KK^2$.
This means that there are Gleason functions $g$ for the ONBs for $\KK^2$ for which there
are no self-adjoint operators $A_g :\KK^2 \to \KK^2$ with the property that 
$g(x) = \langle A_g(x),x \rangle$ on $S^1$. Our only insight about this
assertion and that Gleason did not explicitly make is that our proof of Proposition
\ref{prop:Gleason 2d} requires the characterization of quadratic forms over $\RR^2$
that we give in Proposition \ref{prop:Quadratic form zeros}.
Further, the fact 
remarked by Gleason \cite{glea1957}, page 886, that, on $\RR^2$
for example, Gleason functions for the ONBs for $\RR^2$ can be defined arbitrarily
on the quadrant $\theta \in [0, \pi/2)$ of $S^1$ is routinely quantified in Example \ref{ex:2}
and Proposition \ref{prop:glearem}.

\begin{example}[Counterexample for converse in $\RR^2$]
\label{ex:2}
Define the often used $0,1$-valued function,
\[
      g(\theta)=\begin{cases} 
1, & \theta\in [0,\pi/2)\cap\QQ,\\
0, & \theta\in [0,\pi/2)\cap\QQ^c,\\
1-g(\theta-\pi/2), & \theta\in [\pi/2,\pi),\\\
g(\theta-\pi), & \theta\in [\pi, 2\pi),
\end{cases}
\]
where $\QQ^c$ is the set of irrational numbers and where
$\theta$ is the angle that a unit vector $u\in \RR^2$ takes with the positive $x$ axis in $\RR^2$.
Since every ONB for $\RR^2$ is of the form 
$$\{({\rm cos}\, (\theta),{\rm sin}\, (\theta)), (-{\rm sin}\, (\theta),{\rm cos}\, (\theta)) \},$$
for some $\theta \in [0,2\pi)$,
an elementary calculation shows that $g$ is a Gleason function for the ONBs
for $\RR^2$.
On the other hand, $g$ is clearly not a quadratic form since it is not continuous, see
\eqref{eq:symmquad}. Thus, Gleason's theorem does not extend to $2$-dimensional
real inner product spaces. For the more difficult case of a continuous counterexample
for the converse, see Proposition \ref{prop:Gleason 2d}.

\end{example}

The following result, a modified form of which was used by Gleason in his original paper, 
viz., his Lemma 2.2,
illustrates why Theorem \ref{thm:saimpliesglea} does not generalize to such spaces.

\begin{prop}
\label{prop:Gleason cos lemma}
Let $n \equiv 2 \mod 4$. The function $g(\theta) = 1 + \cos(n \theta)$,
defined on the unit circle $S^1 \subseteq \RR^2$, i.e., the polar coordinate $\theta \in [0,2\pi)$, 
is a non-negative Gleason function 
of weight $2$
for the ONBs for $\RR^2$.
\end{prop}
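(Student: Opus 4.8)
The plan is to use the explicit parametrization of orthonormal bases of $\RR^2$ recorded in Example \ref{ex:2}: every ONB for $\RR^2$ has the form $\{(\cos\theta, \sin\theta), (-\sin\theta, \cos\theta)\}$ for some $\theta \in [0, 2\pi)$. In polar terms the two basis vectors sit at angles $\theta$ and $\theta + \pi/2$, so verifying the Gleason property reduces to checking that $g(\theta) + g(\theta + \pi/2) = 2$ for every $\theta$, where $g(\theta) = 1 + \cos(n\theta)$.

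First I would expand the sum directly,
\[
g(\theta) + g(\theta + \pi/2) = 2 + \cos(n\theta) + \cos\!\left(n\theta + \tfrac{n\pi}{2}\right).
\]
The whole argument then turns on evaluating the phase shift $n\pi/2$. Writing $n = 4k + 2$, I get $n\pi/2 = 2k\pi + \pi$, so that $\cos(n\pi/2) = -1$ and $\sin(n\pi/2) = 0$. Feeding this into the angle-addition formula yields $\cos(n\theta + n\pi/2) = \cos(n\theta)\cos(n\pi/2) - \sin(n\theta)\sin(n\pi/2) = -\cos(n\theta)$, whence the two cosine terms cancel and the sum collapses to $2$. This identifies the weight $W = 2$. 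Non-negativity is immediate, since $\cos(n\theta) \geq -1$ gives $g(\theta) \geq 0$ on all of $S^1$.

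To be complete I would also note that the second basis vector may instead be taken at angle $\theta - \pi/2$ (the opposite orientation); the same computation, using $\cos(n\theta - n\pi/2) = \cos(n\theta)\cos(n\pi/2) + \sin(n\theta)\sin(n\pi/2) = -\cos(n\theta)$, again gives cancellation, so both families of ONBs are covered by a single calculation.

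The computation itself is routine, so the only genuine content is explaining why the hypothesis $n \equiv 2 \bmod 4$ is exactly the right one, and this is the point I expect to be the crux. It is precisely this congruence that forces $\cos(n\pi/2) = -1$ together with $\sin(n\pi/2) = 0$; for $n \equiv 0 \bmod 4$ one instead gets $\cos(n\pi/2) = 1$, so the sum becomes $2 + 2\cos(n\theta)$, which is not constant in $\theta$, and for odd $n$ the surviving sine term $\sin(n\pi/2) \neq 0$ destroys the clean cancellation. Thus the mod-$4$ restriction is not a mere convenience but the exact arithmetic condition making $g$ a Gleason function, and isolating this is the substance of the proof.
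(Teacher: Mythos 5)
Your proposal is correct and follows essentially the same route as the paper's proof: reduce to checking $g(\theta) + g(\theta + \pi/2) = 2$ via the standard parametrization of ONBs for $\RR^2$, then apply the cosine angle-addition formula using that $n\pi/2$ is an odd multiple of $\pi$ when $n \equiv 2 \bmod 4$. Your explicit treatment of the opposite orientation and the discussion of why the mod-$4$ condition is sharp are welcome additions but do not change the argument.
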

\begin{proof}
As noted in Example \ref{ex:2}, two unit vectors in $\RR^2$ form an orthonormal basis for $\RR^2$ if and only
if they have an angle of $\pi / 2$ radians between them. Indeed, this latter condition is 
equivalent to the orthogonality of the two vectors by the definition of angles in inner product 
spaces; and any two orthogonal unit vectors in $\RR^2$ form an ONB for $\RR^2$, 
since non-zero orthogonal vectors are linearly independent. 

Clearly, $1 + \cos (n \theta)$ takes on values in the range $[0, 2]$, so it remains to show that 
\begin{equation*}
        g(\theta_1) + g(\theta_2) = 2
\end{equation*}
for any $\theta_1, \theta_2$ giving the angles relative to the origin of the vectors in an 
ONB for $\RR^2$. Reordering if necessary, we can assume without loss of generality  
that $\theta_2 = \theta_1 + \pi / 2$. Using the trigonometric identity 
$\cos(\alpha + \beta) = \cos\alpha \cos\beta - \sin\alpha \sin\beta$ and the fact that $n \pi / 2$ 
is an odd multiple of $\pi$, we obtain
\begin{align*}
          g(\theta_1) + g(\theta_2) &= 2 + \cos(n \theta_1) + \cos(n \theta_1 + n \pi / 2) \\
         &= 2 + \cos(n \theta_1) + \cos(n \theta_1) \cos(n \pi / 2) - \sin(n \theta_1) \sin(n \pi / 2) \\
         &= 2 + \cos(n \theta_1) - \cos(n \theta_1) = 2
\end{align*}
for all angles $\theta_1$.
\end{proof}

In 
Proposition \ref{prop:Gleason 2d} we shall show that the Gleason functions defined in 
Proposition \ref{prop:Gleason cos lemma} are not 
quadratic forms on the unit circle $S^1 \subseteq \RR^2$ when $|n| \neq 2$.  
To this end we shall use the following result.

\begin{prop} 
\label{prop:Quadratic form zeros}
Any quadratic form over $\RR^2$ with more than four zeros on the unit circle is identically zero.
\end{prop}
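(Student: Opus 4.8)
The plan is to reduce the statement to a count: how many times can a fixed nonzero homogeneous quadratic vanish on $S^1$, and to show this count never exceeds four, so that exceeding four forces the form to be identically zero. Following \eqref{eq:quad}, I would write a general real quadratic form over $\RR^2$ as $Q(x_1,x_2) = a x_1^2 + b x_1 x_2 + c x_2^2$ with $a,b,c \in \RR$. The structural fact I would exploit is that a nonzero such $Q$ is a product of two linear forms over $\CC$; consequently its real zero locus in $\RR^2 \setminus \{0\}$ is a union of \emph{at most two} lines through the origin. Since any line through the origin meets $S^1$ in exactly two (antipodal) points, the real zero locus meets $S^1$ in at most four points, and the proposition is the contrapositive of this bound.

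To make the factorization precise I would prove the contrapositive directly: assuming $Q \not\equiv 0$, I would show $Q$ has at most four zeros on $S^1$. Treating $Q=0$ as a homogeneous equation, I would pass to the chart $x_2 \neq 0$, set $t = x_1/x_2$, and analyze the discriminant $b^2 - 4ac$ of $a t^2 + b t + c$ (assuming for the moment $a \neq 0$). A negative discriminant yields no real zeros away from the origin; a zero discriminant yields a single repeated line; a positive discriminant yields two distinct lines. In each case the zero locus is at most two lines, hence at most four points on $S^1$.

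The only place any care is required — and thus the main bookkeeping obstacle — is the degenerate coefficient patterns in which the ``leading'' term drops out, i.e. $a = 0$ or $c = 0$, so that the quadratic-in-$t$ reduction is unavailable. I would dispatch these by direct factoring: if $a = c = 0$ with $b \neq 0$ then $Q = b\, x_1 x_2$ vanishes precisely on the two coordinate axes (four points on $S^1$); if $a = 0$ but $c \neq 0$ then $Q = x_2\,(b x_1 + c x_2)$ again factors into two linear forms, with the symmetric case handled identically. In every surviving case the zero locus is a union of at most two lines, so again at most four points. Therefore more than four zeros on $S^1$ is possible only when $a = b = c = 0$, that is $Q \equiv 0$, which is exactly the claim. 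As a sanity check consistent with Proposition \ref{prop:Gleason cos lemma}, the same bound arises from the trigonometric substitution $x_1 = \cos\theta,\ x_2 = \sin\theta$, which rewrites $Q|_{S^1}$ via double-angle identities as $A + R\cos(2\theta - \phi)$ with $A = \tfrac{a+c}{2}$ and $R = \sqrt{(\tfrac{a-c}{2})^2 + (\tfrac{b}{2})^2}$; a nonzero such function solves $A + R\cos(2\theta-\phi)=0$ at most four times for $\theta \in [0,2\pi)$, since the cosine runs through two full periods.
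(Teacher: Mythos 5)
Your proof is correct, and it takes a genuinely different route from the paper's. You bound the zero set of a nonzero form directly: writing $Q(x_1,x_2)=ax_1^2+bx_1x_2+cx_2^2$ and factoring (or analyzing the discriminant of the dehomogenization $at^2+bt+c$), you conclude that the real zero locus off the origin is a union of at most two lines through the origin, hence meets $S^1$ in at most four antipodal points; the proposition is the contrapositive. The paper instead assumes more than four zeros, picks two linearly independent zeros $v_1,v_2$, and applies the invertible linear map $T$ with $T(e_i)=v_i$ to normalize them to the standard basis; the transformed form $Q_2=Q\circ T$ then has $a=c=0$, so $Q_2(x,y)=bxy$, which has exactly four circle zeros unless $b=0$. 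The trade-off is that your argument requires the case analysis on degenerate coefficient patterns ($a=0$ or $c=0$) but yields the sharper quantitative fact (at most four zeros for any nonzero form) as the primary conclusion, whereas the paper's change of basis eliminates the coefficient bookkeeping at the cost of verifying that zeros of $Q$ and $Q_2$ on $S^1$ correspond bijectively under the normalization $u\mapsto T(u)/\norm{T(u)}$. Your closing trigonometric computation, rewriting $Q|_{S^1}$ as $A+R\cos(2\theta-\phi)$, is also a complete self-contained proof on its own and connects naturally to the $1+\cos(n\theta)$ examples of Propositions \ref{prop:Gleason cos lemma} and \ref{prop:Gleason 2d}.
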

\begin{proof}
Let $Q(v) = \inner{A(v)}{v}$ be a quadratic form over $\RR^2$ with more than four zeros
on the unit circle. Then, there exist unit vectors $v_1, v_2$ such that $Q(v_1) = Q(v_2) = 0$, 
but $v_2 \neq \pm v_1$. Thus, $v_1$ and $v_2$ are linearly independent, 
so they form a basis for the 2-dimensional space $\RR^2$. Consider the linear transformation 
$T$ defined over the standard basis for $\RR^2$ by $Te_i = v_i$ for $i = 1, 2$. 
$T$ is invertible because it sends a basis for $\RR^2$ to a basis for $\RR^2$. 

Define a new quadratic form $Q_2(v) :=  Q(T(v)) = \inner{AT(v)}{T(v)}$. We shall
 relate $Q_2$ to $Q$. Let $u$ be a unit vector in $\RR^2$. If $Q(u) = 0$,
  then $Q_2(T^{-1}(u) / \norm{T^{-1}(u)}) = 0$. Likewise, if $Q_2(u) = 0$, then 
  $Q(T(u)/ \norm{T(u)}) = 0$. The correspondences 
  $u \mapsto T^{-1}(u)/ \norm{T^{-1}(u)}$ and $u \mapsto T(u) / \norm{T(u)}$ 
 give inverse automorphisms of the unit circle. Indeed, for any unit vector $u$,
\begin{equation*}
     \frac{T^{-1}(T(u) / \norm{T(u)})}{\norm{T^{-1}(T(u) / \norm{T(u)})}} 
     = \frac{\norm{T(u)}}{\norm{T(u)}} \frac{T^{-1}T(u)}{\norm{T^{-1}T(u)}} = \frac{u}{\norm{u}} = u,
\end{equation*}
and, likewise,
\begin{equation*}
       \frac{T(T^{-1}(u) / \norm{T^{-1}(u)})}{\norm{T(T^{-1}(u) / \norm{T^{-1}(u)})}} = u.
\end{equation*}
In particular, the sets of zeros of $Q$ and $Q_2$ on the unit circle are in bijective correspondence, 
and, if either $Q$ or $Q_2$ is identically zero, then the other is as well.

The quadratic form $Q_2$ can be expressed in rectangular coordinates as 
$Q_2(x, y) = ax^2 + bxy + cy^2$ for some constants $a, b, c \in \RR$. 
Since $Q_2(e_1) = Q_2(e_2) = 0$ 
it follows that $a = c = 0$, and so $Q_2(x, y) = bxy$. If $b \neq 0$ then $Q_2$ is only zero 
on the unit circle at the values $\pm e_1, \pm e_2$, contradicting the hypothesis that $Q$, 
and hence $Q_2$, has more than four zeros. Thus, $b = 0$ and $Q_2$ is identically zero. 
By the previous comments, this implies that $Q$ is identically zero.
\end{proof}

There exist quadratic forms that have exactly four zeros on the unit circle, e.g., the 
quadratic form $Q$ defined in rectangular coordinates by $Q(x, y) = xy$. Hence, the 
hypothesis of {\it more than four zeros} in Proposition \ref{prop:Quadratic form zeros} cannot be further relaxed.

\begin{prop} 
\label{prop:Gleason 2d}
Let $n \equiv 2 \mod 4$ with $|n| \neq 2$. The function $g(\theta) = 1 + \cos (n \theta)$, 
defined on the unit circle $S^1 \subseteq \RR^2$, i.e.,
the polar coordinate $\theta \in [0,2\pi)$, is a Gleason function of weight $2$ for the 
ONBs for $\RR^2$, but it is not the restriction of a quadratic form to $S^1$.
\end{prop}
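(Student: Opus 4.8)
The plan is to deduce the non-quadratic-form assertion from Proposition \ref{prop:Quadratic form zeros} by using its contrapositive: a quadratic form over $\RR^2$ that is not identically zero has at most four zeros on $S^1$. Since Proposition \ref{prop:Gleason cos lemma} already establishes that $g(\theta) = 1 + \cos(n\theta)$ is a Gleason function of weight $2$ for the ONBs for $\RR^2$, only the statement that $g$ is not the restriction of a quadratic form remains, and for this it suffices to produce more than four zeros of $g$ on $S^1$ together with the observation that $g \not\equiv 0$.

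First I would reduce to $n$ positive: because $\cos$ is even, $g$ depends only on $|n|$, so I may assume $n > 0$. The hypotheses $n \equiv 2 \pmod 4$ and $|n| \neq 2$ then force $n \geq 6$. Next I would count the zeros of $g$ on the circle. One has $g(\theta) = 0$ exactly when $\cos(n\theta) = -1$, i.e. when $n\theta \equiv \pi \pmod{2\pi}$, which gives $\theta = (2k+1)\pi/n$ for $k = 0, 1, \ldots, n-1$. These are $n$ distinct points of the parameter range $[0, 2\pi)$, hence $n$ distinct zeros of $g$ on $S^1$; since $n \geq 6 > 4$, $g$ has strictly more than four zeros on $S^1$.

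Finally, I would note that $g$ is not identically zero, since $g(0) = 1 + \cos 0 = 2$. If $g$ were the restriction to $S^1$ of a quadratic form $Q$ over $\RR^2$, then $Q$ would be a quadratic form with more than four zeros on the unit circle, so Proposition \ref{prop:Quadratic form zeros} would force $Q \equiv 0$, and hence $g \equiv 0$, contradicting $g(0) = 2$. Therefore $g$ is not the restriction of a quadratic form, completing the proof. Honestly, with Proposition \ref{prop:Quadratic form zeros} in hand there is no serious obstacle here; the only point requiring care is confirming that the hypotheses genuinely yield $n \geq 6$, since this is precisely what pushes the zero count past the threshold of four and explains why $|n| = 2$ is the unique exception. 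As a cross-check I might record the alternative Fourier-analytic viewpoint: the restriction of any quadratic form to $S^1$ has the form $\alpha + \beta\cos(2\theta) + \gamma\sin(2\theta)$, carrying only angular frequencies $0$ and $2$, so it can equal $1 + \cos(n\theta)$ only when $|n| = 2$, which independently confirms the result.
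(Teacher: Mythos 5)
Your proof is correct and follows essentially the same route as the paper: both invoke Proposition \ref{prop:Gleason cos lemma} for the Gleason-function part, count the $|n| \geq 6$ zeros of $1 + \cos(n\theta)$ on $S^1$, and apply Proposition \ref{prop:Quadratic form zeros} to derive a contradiction with $g(0) = 2$. The explicit reduction to $n > 0$ and the Fourier-analytic cross-check are pleasant additions but do not change the argument.
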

\begin{proof}
The fact that $g$ is a Gleason function is the content of Proposition \ref{prop:Gleason cos lemma}. 
It remains to show that $g$ is not the restriction of a quadratic form to the unit circle. 
Suppose that $g$ is
 a quadratic form on the unit circle. 
 Noting that $|n| = 6, 10, 14, \ldots$, we easily check that $g(\theta) = 1 + \cos (n \theta)$
 has at least $|n| \geq 6 > 4$ distinct zeros at $\theta = k\pi/|n|$, for
 $1 \leq k \leq 2|n| - 1$ and $k$ odd. Thus, $g$ is identically $0$ over $\RR^2$
 by Proposition \ref{prop:Quadratic form zeros}. We obtain the desired
 contradiction since $g(0) =1 + {\rm cos} (0) = 2 \neq 0$.
 
\end{proof}

The hypothesis $|n| \neq 2$ is necessary in Proposition \ref{prop:Gleason 2d}.
In fact, using the double-angle and 
Pythagorean trigonometric identities, the function 
$g(\theta) = 1 + \cos(\pm 2\theta)$ can be rewritten as 
\begin{equation}
\label{eq:quad2}
        g(\theta) = 1 + \cos^2(\pm \theta) - \sin^2(\pm \theta) = 2\cos^2(\pm \theta).
\end{equation}
Viewed in rectangular coordinates $(x,y)$ for inputs lying on the unit circle, 
the right side of \eqref{eq:quad2} is 
$2x^2$, which is a quadratic form.

The following is a quantitative version of Gleason's remark noted at the beginning of
this subsection.

\begin{prop}
\label{prop:glearem}
Let $f : \RR \to \RR$ be a bounded, non-negative, $\pi/2$-periodic
function, and let $W \geq {\rm sup} f$. Define $g$ in polar coordinates $\theta$ on $S^1$
by the formula
\begin{equation*}
     g(\theta) = \begin{cases}
    f(\theta) & \theta \in [0, \pi / 2) \cup [\pi, 3\pi / 2) \\
     W - f(\theta) & \theta \in [\pi / 2, \pi) \cup [3\pi / 2, 2\pi).
\end{cases}
\end{equation*}
Then, $g$ is a non-negative Gleason 
function of weight $W$ for the ONBs for $\RR^2$. 
\end{prop}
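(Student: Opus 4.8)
The plan is to check the two defining properties of a Gleason function separately: non-negativity of $g$, and the constant-sum condition $\sum g(x_j) = W$ over every ONB. Non-negativity I would dispose of immediately. Writing $R_1 := [0, \pi/2) \cup [\pi, 3\pi/2)$ and $R_2 := [\pi/2, \pi) \cup [3\pi/2, 2\pi)$ for the two defining regions, one has $g(\theta) = f(\theta) \geq 0$ on $R_1$ since $f$ is non-negative, and $g(\theta) = W - f(\theta) \geq W - \sup f \geq 0$ on $R_2$ by the hypothesis $W \geq \sup f$. So $g \geq 0$ everywhere on $S^1$.

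For the Gleason property, I would first invoke the ONB structure established in Example \ref{ex:2} and in the proof of Proposition \ref{prop:Gleason cos lemma}: every ONB for $\RR^2$ has the form $\{(\cos\theta_1, \sin\theta_1), (-\sin\theta_1, \cos\theta_1)\}$, so that the two basis vectors correspond to polar angles $\theta_1$ and $\theta_2 \equiv \theta_1 + \pi/2 \pmod{2\pi}$. It therefore suffices to prove $g(\theta_1) + g(\theta_2) = W$ for every such pair. The structural fact I would record next is that adding $\pi/2$ interchanges the two regions, i.e. $R_1 + \pi/2 \equiv R_2$ and $R_2 + \pi/2 \equiv R_1 \pmod{2\pi}$, which one verifies directly by translating each of the four quarter-intervals. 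Consequently exactly one of $\theta_1, \theta_2$ lies in $R_1$ and the other in $R_2$.

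Assume $\theta_1 \in R_1$ (the case $\theta_1 \in R_2$ is entirely symmetric). Then $g(\theta_1) = f(\theta_1)$ and $g(\theta_2) = W - f(\theta_2)$. Because $f$ has period $\pi/2$ and $2\pi$ is an integer multiple of $\pi/2$, the reduction mod $2\pi$ is invisible to $f$, so $f(\theta_2) = f(\theta_1 + \pi/2) = f(\theta_1)$. Hence $g(\theta_1) + g(\theta_2) = f(\theta_1) + W - f(\theta_1) = W$, and the symmetric case yields the same value.

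The only step requiring genuine care — and the nearest thing to an obstacle — is the bookkeeping of the mod-$2\pi$ reduction in the quadrant $\theta_1 \in [3\pi/2, 2\pi)$, where $\theta_2 = \theta_1 + \pi/2$ wraps past $2\pi$ back into $[0, \pi/2)$; there one must confirm that evaluating $f$ after reducing $\theta_2$ into $[0,2\pi)$ still returns $f(\theta_1)$, which holds precisely because the period $\pi/2$ divides $2\pi$. Once the region-swap identity is in hand, all four quadrant cases collapse into the single symmetric computation above, so I expect the write-up to be short.
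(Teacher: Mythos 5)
Your proposal is correct and follows essentially the same route as the paper's proof: non-negativity from $W \geq \sup f$, reduction to the single identity $g(\theta) + g(\theta+\pi/2) = W$ via the standard form of ONBs for $\RR^2$, the observation that adding $\pi/2$ swaps the two defining regions modulo $2\pi$, and the $\pi/2$-periodicity of $f$ to cancel $f(\theta+\pi/2)$ against $f(\theta)$. No substantive differences.
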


\begin{proof}
Since $f$ is non-negative and $W \geq \sup f$, it follows that $g$ is non-negative
on $S^1$.

As also noted in the proof of Proposition \ref{prop:Gleason cos lemma} and by the definition of
angles in inner product spaces,
 two non-zero vectors in $\RR^2$ 
are orthogonal if and only if they are separated by an angle of $\pi / 2$. Thus, it suffices to 
show that $g(\theta) + g(\theta + \pi / 2) = W$ for any angle $\theta$. 
If $\theta \in [0, \pi / 2) \cup [\pi, 3\pi / 2)$, then, taking angles modulo $2\pi$ as necessary, 
$\theta + \pi / 2 \in [\pi / 2, \pi) \cup [3\pi / 2, 2\pi)$; and, consequently,
\begin{equation*}
      g(\theta) + g(\theta + \pi / 2) = f(\theta) + W - f(\theta + \pi / 2) = f(\theta) + W - f(\theta) = W.
\end{equation*}
Otherwise, $\theta \in [\pi / 2, \pi) \cup [3\pi / 2, 2\pi)$, and, again taking angles modulo $2\pi$ as 
necessary, $\theta + \pi / 2 \in [0, \pi / 2) \cup [\pi, 3\pi / 2)$. Hence,
\begin{equation*}
     g(\theta) + g(\theta + \pi / 2) = f(\theta) + C - f(\theta + \pi / 2) = f(\theta) + W - f(\theta) = W.
\end{equation*}
Thus, $g$ is a Gleason function of weight $W$ for the ONBs for $\RR^2$ with weight $W$.
\end{proof}

%%%%%%%%%%%%%%%%%%%%%%%%%%%
%%%%%%%%%%%%%%%%%%%%%%%%%%%%

\section{Parseval frames and POVMs}
\label{sec:parsframpovm}

%%%%%%%%%%%%%%%%%

\subsection{Properties of frames}
\label{sec:framprop}

The following definition 
for Hilbert spaces is equivalent to Definition \ref{defn:framekd} for frames 
for ${\mathbb K}^d,$
and is formulated in terms of bounds that are often useful in computation
and coding.

\begin{defn}[Frames]
\label{defn:frame}
{\it a.} Let $\HH$ be a separable Hilbert space over the field $\KK$,
where $\KK = \RR$ or $\KK = \CC,$ 
e.g., 
$\HH = L^2({\mathbb R}^d), {\mathbb R}^d, {\mathbb C}^d.$
A finite or countably infinite sequence, 
$X = \{x_h\}_{h \in J},$ of elements of $\HH$ 
is a {\it frame} for $\HH$ if 
\begin{equation}
\label{eq:frame}
      \exists A, B > 0  \; \text{such that}  \;
      \forall x \in \HH, \quad A\norm{x}^2 \leq \sum_{h \in J} |\langle {x},{x_h}\rangle|^2 \leq B\norm{x}^2.
\end{equation}
The optimal constants, viz., the supremum over all such $A$ and infimum over all such $B$, are 
the {\it lower} and {\it upper frame bounds} respectively. When we refer to {\it frame bounds} 
or {\it constants} $A$ and $B$, we shall mean these optimal constants. Otherwise, we use the terminology,
{\it a lower frame bound} or {\it an upper frame bound}.

{\it b.} A frame $X$ for $\HH$ is an {\it A-tight frame} if $A = B.$ If a tight frame has the further property 
that $A = B = 1,$ then the frame is a {\it Parseval frame} for ${\mathbb H}.$   
A tight frame $X$ for ${\mathbb H}$ is a {\it unit norm tight frame} 
if each of the elements 
of $X$ has norm $1.$  
Finite unit norm tight frames for finite dimensional ${\mathbb H}$
are designated as FUNTFs. ONBs are both Parseval frames and FUNTFs 
for finite dimensional ${\mathbb H}$.

{\it c.} A set $X = \{ x_j \}_{j=1}^N \subset 
\HH = \KK^d$ is 
{\it equiangular} if 
\[  
 \exists \,  \alpha \geq 0 \quad {\rm such \; that} \quad \forall j \neq k, \quad |\langle x_j, x_k \rangle| = \alpha.
\]
An equiangular tight frame is designated as an ETF.
It is well known and elementary to verify that, for any $d \geq 1$, the simplex consisting of $N=d+1$
elements is an 
equiangular FUNTF, and that such ETFs are so-called group frames, see \cite{wald2018}.
\end{defn}

Amazingly, and elementary to prove, 
the finite {\it frames} for $\HH = {\mathbb K}^d$ are precisely the finite sequences, $X = \{x_h\}_{h=1}^N \subseteq
{\mathbb K}^d,$ that span ${\mathbb K}^d,$ i.e.,
\begin{equation}
\label{eq:span}
    \forall \,x \in {\mathbb \KK}^d, \; \exists\,c_1,\dots, c_N\in {\mathbb K}
    \quad  {\rm such \; that} \quad x = \sum_{h=1}^N\, c_h\,x_h.
\end{equation}
The innocent and Parseval-like Definition \ref{defn:frame} is the basis (sic) for the
power of frames, and it belies the
power of frames in dealing with numerical stability, robust signal representation,
and noise reduction problems,
see, e.g., \cite{daub1992}, \cite{BenFra1994} Chapters 3 and 7,
\cite{chri2016}, \cite{KovChe2007a}, and \cite{KovChe2007b}.

Let $X = \{x_h\}_{h \in J}$ be a frame for $\HH$. We define 
the following operators associated with every frame; they 
are crucial to frame theory. The {\it analysis operator} $L : \HH \rightarrow \ell^2(J)$ is defined by
\[
       \forall x \in  \HH,  \quad Lx = \{\inner{x}{x_h} \}_{h \in J}.
\]
The adjoint of the analysis operator is the {\it synthesis operator} 
$L^\ast : \ell^2(J) \rightarrow \HH$, and it is defined by
\[
        \forall a \in \ell^2(J),  \quad L^\ast a= \sum_{h \in J} a_h x_h.
\]
The {\it frame operator} is the mapping $\Fcal : \HH \rightarrow \HH$ defined as 
$\Fcal = L^\ast L$, i.e., 
\[
\forall x \in \HH, \quad \Fcal(x) = \sum_{h \in J} \inner{x}{x_h}  x_h.
\]

The following is a fundamental theorem.

\begin{thm}
\label{thm:framerecon}
Let $\HH$ be a separable Hilbert space, and let $X = \{x_h\}_{h \in J} \subseteq \HH$. 

{\it a.} $X$ is
 a frame for $\HH$ with frame bounds $A$ and $B$ if and only if 
 $\Fcal : \HH \rightarrow \HH$ is a topological isomorphism with norm bounds
 $\norm{\Fcal}_{op} \leq B$ and $\norm{\Fcal}_{op}^{-1} \leq A^{-1}$.  
 
 {\it b.} In the case of either condition of part {\it a}, we have the following:
 \begin{equation}\label{eq:frameopinverse}
B^{-1}I \leq \Fcal^{-1} \leq A^{-1} I,
\end{equation}
$\{ \Fcal^{-1} x_h \}$ is a frame for $\HH$ with frame bounds $B^{-1}$ and $A^{-1}$, and 
\begin{equation}
\label{eqn:recon}
        \forall x \in \HH, \quad x = \sum_{h \in J} \inner{x}{x_h}\Fcal^{-1} x_h = 
        \sum_{h \in J} \inner{x}{\Fcal^{-1}x_h} x_h
        = \sum_{h \in J} \inner{x}{\Fcal^{-1/2}x_h}\Fcal^{-1/2}x_h.
\end{equation}

\end{thm}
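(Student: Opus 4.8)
The plan is to exploit the factorization $\Fcal = L^{\ast}L$, so that the frame inequality \eqref{eq:frame} becomes a purely operator-theoretic statement about $\Fcal$. First I would record the identity
\[
\inner{\Fcal x}{x} = \inner{L^{\ast}Lx}{x} = \inner{Lx}{Lx} = \sum_{h \in J}|\inner{x}{x_h}|^2,
\]
valid for every $x \in \HH$, which shows at once that $\Fcal$ is self-adjoint and positive semi-definite and that \eqref{eq:frame} is \emph{equivalent} to the operator sandwich $A I \leq \Fcal \leq B I$. Here the upper frame bound $B$ is exactly what guarantees that $L$ (and hence $\Fcal = L^{\ast}L$) is bounded in the first place, so that the expressions above make sense in the infinite-dimensional case.

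For part a I would translate these two operator inequalities into the asserted norm bounds. Since $\Fcal$ is positive and self-adjoint, $\norm{\Fcal}_{op} = \sup_{\norm{x}=1}\inner{\Fcal x}{x}$, so $\Fcal \leq BI$ is precisely $\norm{\Fcal}_{op} \leq B$. The lower bound $AI \leq \Fcal$ with $A > 0$ forces $\Fcal$ to be bounded below, hence injective with closed range; because $\Fcal$ is self-adjoint, its range is dense, being the orthogonal complement of $\ker \Fcal = \{0\}$, and a closed dense subspace is all of $\HH$, so $\Fcal$ is a topological isomorphism whose spectrum lies in $[A, B]$, whence $\norm{\Fcal^{-1}}_{op} \leq A^{-1}$. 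The converse direction simply reverses this chain: the stated norm bounds give back $AI \leq \Fcal \leq BI$, which is \eqref{eq:frame}.

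For part b I would first invoke the continuous functional calculus for the positive invertible operator $\Fcal$ (Theorem \ref{thm:spectral}) to produce $\Fcal^{-1/2}$ and to observe that inversion reverses the order on positive operators, so $AI \leq \Fcal \leq BI$ yields \eqref{eq:frameopinverse} immediately. The key computational step is to identify the frame operator of the dual system $\{\Fcal^{-1} x_h\}$: using that $\Fcal^{-1}$ is self-adjoint,
\[
\sum_{h \in J}\inner{x}{\Fcal^{-1}x_h}\,\Fcal^{-1}x_h = \Fcal^{-1}\Big(\sum_{h \in J}\inner{\Fcal^{-1}x}{x_h}\,x_h\Big) = \Fcal^{-1}\Fcal\Fcal^{-1}x = \Fcal^{-1}x,
\]
so the frame operator of $\{\Fcal^{-1}x_h\}$ is exactly $\Fcal^{-1}$. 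Applying part a to this system, whose frame operator satisfies $B^{-1}I \leq \Fcal^{-1} \leq A^{-1}I$, shows that $\{\Fcal^{-1}x_h\}$ is a frame with bounds $B^{-1}$ and $A^{-1}$.

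Finally, the reconstruction identities \eqref{eqn:recon} follow by inserting the resolutions $I = \Fcal^{-1}\Fcal = \Fcal\Fcal^{-1} = \Fcal^{-1/2}\Fcal\Fcal^{-1/2}$. Writing $x = \Fcal^{-1}\Fcal x$ and pushing $\Fcal^{-1}$ through the sum defining $\Fcal x$ gives the first equality; writing $x = \Fcal\Fcal^{-1}x$ and using self-adjointness of $\Fcal^{-1}$ gives the second; and the analogous computation with the self-adjoint operator $\Fcal^{-1/2}$ gives the third, which is equivalently the statement that $\{\Fcal^{-1/2}x_h\}$ is a Parseval frame. I expect the only genuine obstacle to be the surjectivity of $\Fcal$ in infinite dimensions, where boundedness below alone does not suffice; it is the self-adjointness of $\Fcal$, forcing dense range, that closes this gap, and this is the single place where the Hilbert-space structure is essential rather than cosmetic.
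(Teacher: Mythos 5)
Your argument is correct and matches the route the paper itself takes: the paper defers part \emph{a} to the literature and for part \emph{b} sketches exactly the ingredients you supply --- invertibility of $\Fcal$, positivity and self-adjointness of $\Fcal^{-1}$, and the square root $\Fcal^{-1/2}$ commuting with $\Fcal$ --- so your write-up is essentially a self-contained version of the standard proof, with the key identity $\inner{\Fcal x}{x} = \sum_{h}|\inner{x}{x_h}|^2$ converting the frame inequalities into the operator sandwich $AI \leq \Fcal \leq BI$ and the computation identifying $\Fcal^{-1}$ as the frame operator of $\{\Fcal^{-1}x_h\}$ doing the rest. Your closing observation that surjectivity of $\Fcal$ in infinite dimensions rests on self-adjointness (dense range) plus boundedness below (closed range) is exactly the point that the cited references handle.
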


For a proof of part {\it a}, see \cite{BenWal1994}, pages 100--104.
For part {\it b},
let $X = \{x_h\}_{h \in J}$ be a frame for $\HH$. Then, the frame operator
$\Fcal$ is invertible (\cite{daub1992}, \cite{bene1994}); and $\Fcal$ is a multiple 
of the identity precisely when $X$ is a tight frame. 
Further, $\Fcal^{-1}$ is a positive self-adjoint operator and has a square root $\Fcal^{-1/2}$ 
(Theorem 12.33 in \cite{rudi1991}). This square root can be written as a power series in $\Fcal^{-1}$; 
consequently, it commutes with every operator 
that commutes with $\Fcal^{-1},$ and, in particular, with $\Fcal.$ These properties allow us to assert that 
$\{ \Fcal^{-1/2} \, x_h \}$ {\it is a Parseval frame for} $\HH$, and give the third equality of \eqref{eqn:recon},
see \cite{chri2016}, page 155.

The following is straightforward to prove, e.g., see \cite{CasKut2012}, \cite{wald2018}.

\begin{prop}
\label{prop:parsprops}
Given $\HH = \KK^d$ and $N \geq d$, and let $X = \{ x_j \}_{j=1}^N \subset \HH$.

{\it a.} If $X$ is a Parseval frame for $\HH$ and each $\norm{x_j} =1$, then $N=d$ and $X$ is an ONB for
$\HH$.

{\it b.} If $X$ is a FUNTF for $\HH$ and not an ONB for $\HH$, then the frame constant $A \neq 1$.

{\it c.} A FUNTF, resp., Parseval frame, for $\HH$ is not a Parseval frame, resp.,
FUNTF for $\HH$, unless $N=d$ and $X$ is an ONB for
$\HH$.

{\it d.} If $X$ is an equi-normed $A$-tight frame for $\HH$, then each  
$\norm{x_j} = \big(\frac{Ad}{N}\big)^{1/2}$.

{\it e.} If $X$ is a Parseval frame for $\HH$, then each $\norm{x_j} \leq1$.
The same result is true for any separable Hilbert space over $\KK$.

{\it f.} If $X$ is an equiangular, $A$-tight frame for $\HH$, then 
\[
   \forall j,k = 1, \ldots, N, \quad \norm{x_j} = \Big(\frac{Ad}{N}\Big)^{1/2} \;\; {\rm and} \;\;
   \big| \langle x_j, x_k \rangle \big| = \frac{A}{N} \sqrt {\frac{d(N-d)}{N - 1}}.
\]

{\it g.} If $X$ is an equiangular, Parseval frame for $\HH$, then each $\norm{x_j} <1$.

\end{prop}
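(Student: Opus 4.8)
The plan is to reduce everything to two tools: the trace identity $\mathrm{tr}(\Fcal) = \sum_{j} \norm{x_j}^2$ and the fact that an $A$-tight frame has frame operator $\Fcal = A I$ (Theorem \ref{thm:framerecon} together with the observation that $\Fcal$ is a multiple of the identity exactly for tight frames). The trace identity follows by writing $\mathrm{tr}(\Fcal) = \sum_{i=1}^d \inner{\Fcal(e_i)}{e_i} = \sum_{i=1}^d \sum_{j} |\inner{e_i}{x_j}|^2 = \sum_{j} \norm{x_j}^2$ for the standard ONB $\{e_i\}$. I would record these at the outset and then treat the seven items in an order governed by logical dependence rather than by their labelling.

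Parts (e) and (a) are the core of the Parseval statements. For (e), apply the Parseval identity to the frame vector $x_k$ itself: $\norm{x_k}^2 = \sum_{j} |\inner{x_k}{x_j}|^2 \geq |\inner{x_k}{x_k}|^2 = \norm{x_k}^4$, whence $\norm{x_k} \leq 1$; this uses only $\Fcal = I$, so it is valid in any separable Hilbert space. For (a), the same computation with $\norm{x_k} = 1$ forces $\sum_{j \neq k} |\inner{x_k}{x_j}|^2 = 0$, so the $x_j$ are pairwise orthogonal unit vectors; since a frame spans $\KK^d$ and orthonormal vectors are linearly independent, $N = d$ and $X$ is an ONB. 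Part (b) is then immediate: were $A = 1$, a FUNTF would be a unit-norm Parseval frame, hence an ONB by (a), contrary to hypothesis. Part (c) is likewise a direct corollary of (a), since a frame that is simultaneously a FUNTF and a Parseval frame is a unit-norm Parseval frame.

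Parts (d) and (f) are the norm computations. For (d), $\Fcal = A I$ gives $\mathrm{tr}(\Fcal) = Ad$, and the equi-normed hypothesis turns $\sum_j \norm{x_j}^2 = Ad$ into $N \norm{x_j}^2 = Ad$. For (f) I would first extract a per-vector identity from $\Fcal = A I$: for each $k$,
\[
A \norm{x_k}^2 = \inner{\Fcal(x_k)}{x_k} = \sum_{j} |\inner{x_k}{x_j}|^2 = \norm{x_k}^4 + (N-1)\alpha^2,
\]
so $\norm{x_k}^2$ is a root of $t^2 - At + (N-1)\alpha^2 = 0$. Once the frame is known to be equi-normed, (d) supplies $\norm{x_j}^2 = Ad/N$, and substituting this back into the quadratic solves for $\alpha$, giving $\alpha = \frac{A}{N}\sqrt{\frac{d(N-d)}{N-1}}$ after simplification. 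Finally (g) is the Parseval ($A=1$) specialization of (f): it gives $\norm{x_j}^2 = d/N$, and since a Parseval frame with $N = d$ is forced to be an ONB (a spanning set of $d$ vectors is a basis, and $\Fcal = I$ makes the synthesis operator unitary), a genuinely equiangular, i.e. non-orthogonal, Parseval frame must have $N > d$, so $\norm{x_j} = \sqrt{d/N} < 1$.

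The main obstacle is the equi-normedness invoked in (f): the per-vector quadratic only shows that each $\norm{x_j}^2$ lies in the two-element root set $\{r_+, r_-\}$, and the trace constraint $\sum_j \norm{x_j}^2 = Ad$ alone does not exclude a genuine split between the two roots (indeed the conclusion fails outright in the degenerate case $N=2$, $d=1$, where the equiangular condition is vacuous). To close this gap I would pass to the Gram matrix $G = (\inner{x_j}{x_k})$, note that tightness makes $G = A P$ for an orthogonal projection $P$ of rank $d$, and show that such a projection with constant off-diagonal modulus and $N \geq 3$ has constant diagonal, reducing to the rank-one case via the symmetry $P \leftrightarrow I - P$ where the ranks permit it and otherwise appealing to the structure theory of equiangular systems. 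Alternatively, if one adopts the standard convention that an equiangular tight frame is equi-normed, then (f) collapses to (d) together with the quadratic, and the whole proposition becomes the routine chain above.
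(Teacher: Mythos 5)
The paper offers no proof of this proposition --- it is dismissed as ``straightforward to prove'' with citations to \cite{CasKut2012} and \cite{wald2018} --- so there is nothing to compare against line by line; I am judging your argument on its own. Parts (a)--(e) are correct and complete: the trace identity $\mathrm{tr}(\Fcal)=\sum_j\norm{x_j}^2$, the self-comparison $\norm{x_k}^2=\sum_j|\inner{x_k}{x_j}|^2\geq\norm{x_k}^4$, and the reduction of (b), (c) to (a) are exactly the standard route, and your derivation of the quadratic $t^2-At+(N-1)\alpha^2=0$ for $t=\norm{x_k}^2$ in (f) is also correct and yields the stated value of $\alpha$ once equi-normedness is in hand.

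The genuine problem is the repair you propose for (f). The claim that a rank-$d$ orthogonal projection with constant off-diagonal modulus and $N\geq 3$ must have constant diagonal is \emph{false}, so the gap you correctly identified cannot be closed: part (f) is actually false under the paper's Definition \ref{defn:frame}\,{\it c}, which imposes no norm condition in the word ``equiangular.'' Concretely, take $\HH=\RR^2$, $N=4$, and
\[
   v_1=\Big(\tfrac12,0\Big),\quad v_2=\Big(-\tfrac12,0\Big),\quad
   v_3=\Big(\tfrac12,\tfrac{\sqrt2}{2}\Big),\quad v_4=\Big(\tfrac12,-\tfrac{\sqrt2}{2}\Big).
\]
One checks directly that $\sum_j v_jv_j^{\tau}=I$, so this is a Parseval frame, and that every off-diagonal inner product has modulus $\tfrac14$, so it is equiangular with $\alpha=\tfrac14>0$; yet the norms are $\tfrac12,\tfrac12,\tfrac{\sqrt3}{2},\tfrac{\sqrt3}{2}$, not the predicted $\sqrt{d/N}=1/\sqrt2$, and $\alpha=\tfrac14$ differs from the predicted $\tfrac{1}{2\sqrt3}$. (The squared norms are precisely the two roots $\tfrac14,\tfrac34$ of your quadratic $t^2-t+\tfrac{3}{16}=0$, split two-and-two so that the trace constraint is satisfied --- exactly the split you worried the trace identity could not exclude.) The only viable reading is your fallback: build equi-normedness into the definition of an ETF, as the cited references do, after which (f) collapses to (d) plus the quadratic and your proof goes through. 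Part (g) has the same defect in miniature --- an ONB is equiangular with $\alpha=0$ and Parseval with $\norm{x_j}=1$ --- which your parenthetical ``genuinely equiangular, i.e.\ non-orthogonal'' silently patches; that caveat should be made an explicit hypothesis rather than an aside.
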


\begin{rem}[Frames and bases for $\HH$]
In light of 
the fact that ONBs are 
frames, it is natural to ask to what extent frames can be constructed in terms of
ONBs. 
\begin{enumerate}

\item It may  be considered surprising that any infinite dimensional $\HH$ contains a frame for $\HH$
which does not contain a basis for $\HH$. The result is due to Casazza and Christensen, see \cite{chri2016}, 
Chapter 7, for details.
\item The first result relating frames and sums of bases is due to Casazza \cite{casa1998}.
{\it 
Let $\HH$ be a separable Hilbert space, and let $X =\{ x_h \}_{h \in J}$ be a frame for $\HH$
with upper frame bound $B$. Then, for every $\epsilon > 0$, there are ONBs $\{u_h\}_{h \in J},
\{v_h\}_{h \in J}, \{w_h\}_{h \in J}$ for $\HH$ and a constant $C =B(1+\epsilon)$ such that
}
\[
        \forall h \in J,\quad x_h = C(u_h + v_h + w_h).
\]
The proof depends on an operator-theoretic argument.
\end{enumerate}
\end{rem}

%%%%%%%%%%%%%%%%%%%%

\subsection{POVMs}
\label{sec:povm}
Definition \ref{defn:povmmeas} is a measure theoretic formulation of POVMs, see 
\cite{AliAntGaz1993}, \cite{BusGraLah1997} for applications to coherent states
and quantum physics, and see
\cite{BenCza2009} for the measure theory.

\begin{defn}[POVM]
\label{defn:povmmeas}
Let $\SS$ be a set, let $\Bcal$ be a $\sigma$-algebra of subsets of
$\SS$, and let $\HH$ be a separable Hilbert space. In this setting, a POVM 
on $\Bcal$ is a representation-like
mapping, $\mu: \Bcal \longrightarrow \Lcal(\HH)$, with the following properties:
\begin{enumerate}
\item $\forall \,U \in \Bcal, \, \mu(U) \in \Lcal(\HH)$  is a positive semi-definite self-adjoint operator;
\item $\mu(\emptyset) = 0$, the $0$-operator;
\item For every disjoint collection, $\{U_j \}_{j=1}^{\infty} \subseteq \Bcal $, if $x,y \in \HH$, then
\[
       \inner{\mu( \cup_{j=1}^{\infty} U_j)(x)}{y}  = \sum_{j=1}^{\infty} \inner{\mu(U_j)(x)}{y};
\]
\item  $\mu(\SS) = I$, the identity operator.
\end{enumerate}

\end{defn}

$\Lcal(\HH)$ is a non-commutative $\ast$-Banach algebra with unit, see \cite{AndBenDon2019}.

\begin{prop}
\label{povmmeas}
 Let $\{x_j\}_{j\in J}$ be a Parseval frame for $\HH$, where $\SS = J \subseteq \ZZ$.
Define a family $\{ \mu(U)\}_{U \subseteq J}$ of linear operators on $\HH$
by the formula,
\[
     \forall \, x \in \HH, \quad  \mu(U)(x)  =  \sum_{j \in U} \inner{x}{x_j} x_j.
\]
Then, $\mu$ is a POVM on $\Bcal$. If $\HH = \KK^d$, then we typically take $J = \{1, \ldots, N\}, \, N\geq d$.
\end{prop}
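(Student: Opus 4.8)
The plan is to verify directly the four defining properties of a POVM listed in Definition~\ref{defn:povmmeas}, taking $\Bcal$ to be the full power set of $J$ so that measurability imposes no restriction. Throughout, the engine of every estimate is the Parseval identity $\sum_{j\in J} |\inner{x}{x_j}|^2 = \norm{x}^2$, valid for all $x \in \HH$.

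For property (1), I would first note that for each fixed $U \subseteq J$ the subfamily $\{x_j\}_{j\in U}$ inherits the Bessel bound $1$ from the Parseval frame, since $\sum_{j\in U}|\inner{x}{x_j}|^2 \leq \sum_{j\in J}|\inner{x}{x_j}|^2 = \norm{x}^2$. Consequently the restricted synthesis map $a \mapsto \sum_{j\in U} a_j x_j$ is bounded on $\ell^2(U)$, the series defining $\mu(U)(x)$ converges in $\HH$, and $\mu(U)$ is the well-defined partial frame operator $\mu(U) = L_U^\ast L_U$, where $L_U$ is the analysis operator associated with $\{x_j\}_{j\in U}$. This gives $\mu(U) \in \Lcal(\HH)$ with $\norm{\mu(U)}_{op} \leq 1$. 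Self-adjointness follows from the symmetric expression $\inner{\mu(U)(x)}{y} = \sum_{j\in U} \inner{x}{x_j}\inner{x_j}{y} = \inner{x}{\mu(U)(y)}$, and positive semi-definiteness from $\inner{\mu(U)(x)}{x} = \sum_{j\in U} |\inner{x}{x_j}|^2 \geq 0$. Property (2) is immediate, since $\mu(\emptyset)$ is an empty sum and hence the zero operator.

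Property (4) identifies $\mu(\SS) = \mu(J)$ with the frame operator $\Fcal$ of the Parseval frame. Since the frame constant equals $1$, one has $\inner{\Fcal(x)}{x} = \sum_{j\in J}|\inner{x}{x_j}|^2 = \norm{x}^2 = \inner{x}{x}$ for every $x$; as $\Fcal$ is self-adjoint, polarization forces $\Fcal = I$, which is exactly $\mu(\SS) = I$.

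The step demanding genuine care, and the main obstacle, is the countable additivity in property (3). Given a disjoint collection $\{U_k\}_{k=1}^\infty \subseteq \Bcal$ and $x,y \in \HH$, I must justify interchanging a sum over $k$ with the sum over $j$ defining each $\mu(U_k)$. The decisive point is absolute convergence of the doubly-indexed scalar family: by Cauchy--Schwarz together with the Parseval identity applied to both $x$ and $y$,
\[
   \sum_{j\in J} |\inner{x}{x_j}|\,|\inner{x_j}{y}| \leq \Big(\sum_{j\in J}|\inner{x}{x_j}|^2\Big)^{1/2}\Big(\sum_{j\in J}|\inner{x_j}{y}|^2\Big)^{1/2} = \norm{x}\,\norm{y} < \infty.
\]
Because the $U_k$ are disjoint, absolute convergence permits the rearrangement $\sum_{j\in \cup_k U_k} = \sum_k \sum_{j\in U_k}$, yielding
\[
   \inner{\mu(\cup_k U_k)(x)}{y} = \sum_{j\in \cup_k U_k}\inner{x}{x_j}\inner{x_j}{y} = \sum_{k}\sum_{j\in U_k}\inner{x}{x_j}\inner{x_j}{y} = \sum_k \inner{\mu(U_k)(x)}{y},
\]
which is precisely property (3). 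With all four properties verified, $\mu$ is a POVM on $\Bcal$.
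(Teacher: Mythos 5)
Your proposal is correct and follows essentially the same route as the paper, which simply asserts that properties (1)--(3) follow ``by direct manipulation'' and verifies (4) from the Parseval reconstruction identity $\sum_{j\in J}\inner{x}{x_j}x_j = x$. You supply the details the paper omits (the Bessel bound for well-definedness, Cauchy--Schwarz for the rearrangement in (3), and polarization for (4)), all of which are sound.
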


\begin{proof}
By direct manipulation with the definition of $\mu(U)$, we verify the first three
criteria of Definition \ref{defn:povmmeas}. The last criterion follows since 
$\{x_j\}_{j\in J}$ is a Parseval frame for $\HH$; in fact,
\[
       \forall \, x \in \HH, \quad  \mu(\SS)(x)  =  \sum_{j \in J} \inner{x}{x_j} x_j = x.
\]
\end{proof}

\begin{prop}
\label{parsmeas}
Let $\Bcal = \Pcal (\SS)$ be the power set $\sigma$-algebra of a countable set $\SS$,
and let
$\mu : \Bcal \longrightarrow \Lcal(\HH)$ be a POVM. 
Then, there is a countable set $J$, a Parseval frame
$\{ x_j\}_{j \in J}$ for $\HH$, and a disjoint partition $\{B_i\}_{i \in \SS}$ of $J$ such
that
\[
    \forall i \in \SS \;\; {\rm and} \;\; \forall x \in \HH, \quad  \mu(i)(x)  =  
    \sum_{j \in B_i} \inner{x}{x_j} x_j.
\]
Furthermore, if $\HH = {\KK}^d$, then each $B_i$ may be taken to be finite.
\end{prop}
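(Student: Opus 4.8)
The plan is to prove a converse to Proposition \ref{povmmeas}: given an arbitrary POVM $\mu$ on the power set of a countable set $\SS$, I would build a single Parseval frame whose index set $J$ is partitioned by $\SS$ so that each atom $\mu(i)$ is recovered by summing over its block $B_i$. The starting observation is that each $\mu(i)$ is a positive semi-definite self-adjoint operator on $\HH$, and the key constraint \eqref{eq:frame}-like condition from Definition \ref{defn:povmmeas}(4) is that $\sum_{i \in \SS} \mu(i) = \mu(\SS) = I$ in the weak sense given by criterion (3). So each $\mu(i)$ is a piece of a resolution of the identity, and I want to realize each piece as a partial frame operator $x \mapsto \sum_{j \in B_i} \inner{x}{x_j} x_j$.

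The central step is a local spectral decomposition of each atom. Since $\mu(i) \in \Scal_+(\HH)$, I would first handle the case $\HH = \KK^d$ using Theorem \ref{thm:spectral}a: write $\mu(i) = \sum_{k=1}^{d} \lambda_k^{(i)} \inner{\cdot}{u_k^{(i)}} u_k^{(i)}$ with $\lambda_k^{(i)} \geq 0$ and $\{u_k^{(i)}\}$ an ONB of eigenvectors, then set $x_{i,k} := \sqrt{\lambda_k^{(i)}}\, u_k^{(i)}$, discarding the zero eigenvalues so that $B_i$ is finite. By construction,
\[
\mu(i)(x) = \sum_{k} \inner{x}{x_{i,k}} x_{i,k},
\]
which is exactly the desired block formula. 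For the general separable case I would instead invoke Theorem \ref{thm:spectral}b, noting that $\mu(i)$ need not be compact, so I would use a positive square root $\mu(i)^{1/2}$ (which exists for positive self-adjoint operators, as cited after Theorem \ref{thm:framerecon}) and expand $\mu(i)^{1/2}$ against any fixed ONB $\{e_n\}$, taking $x_{i,n} := \mu(i)^{1/2}(e_n)$, so that $\inner{x}{x_{i,n}} x_{i,n}$ sums to $\mu(i)$ since $\mu(i) = \mu(i)^{1/2}\,\mu(i)^{1/2}$.

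I would then collect $J := \bigsqcup_{i \in \SS} B_i$ as a disjoint union and verify the Parseval property: for all $x \in \HH$,
\[
\sum_{j \in J} |\inner{x}{x_j}|^2 = \sum_{i \in \SS} \sum_{j \in B_i} \inner{\mu(i)(x)}{x} \cdot \text{(reorganized)} = \inner{\mu(\SS)(x)}{x} = \norm{x}^2,
\]
where the middle equality uses criterion (3) of Definition \ref{defn:povmmeas} to exchange the sum over $\SS$ with the inner product. More precisely, $\sum_{j \in B_i} |\inner{x}{x_j}|^2 = \inner{\mu(i)(x)}{x}$ by the block formula, and summing over $i$ gives $\inner{\mu(\SS)(x)}{x} = \norm{x}^2$. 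The main obstacle I anticipate is the bookkeeping in the infinite-dimensional case: justifying that the countable double sum over $\SS$ and over each (possibly countably infinite) $B_i$ converges unconditionally and that the interchange of summation with the weak-additivity property (3) is legitimate. I would resolve this by first establishing the Parseval identity as an equality of non-negative scalar series, where Tonelli-type rearrangement is unrestricted, and only afterward upgrade to the operator statement; the countability of $\SS$ together with the power-set $\sigma$-algebra ensures every subset is measurable, so criterion (3) applies directly to the disjoint singletons $\{i\}$.
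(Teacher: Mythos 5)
Your proposal is correct and follows essentially the same route as the paper: spectrally decompose each atom $\mu(i)$ into rank-one pieces, set $x_{j} = \sqrt{\lambda_j}\,v_j$, take $J$ to be the disjoint union of the blocks $B_i$, and use $\mu(\SS) = I$ to obtain the Parseval identity. Your explicit square-root construction $x_{i,n} = \mu(i)^{1/2}(e_n)$ for the general separable case is a concrete and valid way to fill in what the paper only gestures at with its parenthetical appeal to ``appropriate versions of the spectral theorem,'' but it does not change the substance of the argument.
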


\begin{proof}
For each $i \in \mathbb{S}$, $\mu(i) \in \mathcal{L}(\HH)$ is self-adjoint and positive 
semi-definite by definition.
To fix ideas, let $\HH = \KK^d$. (For more general $\HH$, there are appropriate
versions of the spectral theorem, that we now apply to $\KK^d$.)
By Theorem \ref{thm:spectral},
for each $i \in \mathbb{S}$, 
there is a $d$-element indexing set $B_i$, an orthonormal set 
$\{ v_j \}_{j \in B_i} \subseteq \KK^d$, and 
a set $\{ \lambda_j \}_{j \in B_i}$ of non-negative numbers  such that
\begin{equation*}
\forall x \in \KK^d, \;\; \mu(i)(x) = \sum_{j \in B_i} \lambda_j \inner{x}{v_j} v_j 
       = \sum_{j \in B_i} \inner{x}{x_j} x_j,
\end{equation*}
where
\begin{equation*}
\forall j \in B_i, \hspace{1em} x_j = \sqrt{\lambda_j}{v_j}.
\end{equation*}
Further, all the $B_i$ are disjoint.
Set $J = \cup_{i \in \mathbb{S}} B_i$. Because $\mathbb{S}$ is countable and each $B_i$ 
is finite, $J$ itself is countable. Since $\mu(\mathbb{S}) = I$, we have that
\begin{equation*}
\forall x \in \KK^d, \hspace{1em} x = \mu(\mathbb{S})(x) 
    = \sum_{i \in \mathbb{S}} \sum_{j \in B_i} \inner{x}{x_j} x_j = \sum_{j \in J} \inner{x}{x_j} x_j.
\end{equation*}
It follows that $\{ x_j \}_{j \in J}$ is a Parseval frame for $\KK^d$.
\end{proof}

Proposition \ref{parsmeas} is a converse of Proposition \ref{povmmeas} for the $\sigma$-algebra
of all subsets of $\SS = \ZZ$. Applicably, we can also say that
analyzing quantum measurements with a discrete set $\SS$ of outcomes 
is equivalent to analyzing
Parseval frames. Propositions \ref{povmmeas} and \ref{parsmeas} were established 
to illustrate  
the role of POVMs in quantum detection 
\cite{BenKeb2008} (2008), which itself depends on frame potential theory  \cite{BenFic2003} (2003).
Definition \ref{defn:povmmeas} is stated in some generality 
so as to be able eventually to extend the 
analysis of quantum measurements for more robust sets of outcomes. Also, in the 
setting of more general measurable spaces $\SS$ than $\ZZ$, there are corresponding
equivalences with Parseval frames. In fact, each of the propositions in this 
subsection has a significant, straightforward 
generalization, which we do not pursue herein.

\begin{example} [Resolution of the identity]
\label{ex:resid}
{\it a.} Given $\SS$, $\Bcal$, and $\HH$ as in Definition \ref{defn:povmmeas}. Take $\HH$ over $\CC$.
A {\it resolution of the identity} on $\Bcal$ is a mapping, 
$\rho: \Bcal \longrightarrow \Lcal(\HH)$, with the following properties:
$\rho(\emptyset) = 0$, $\rho(\SS) = I$; each $\rho(U)$ is a self-adjoint
projection and so each $\rho(U)$ is positive semi-definite ($\langle \rho(U)(x), x \rangle=
\norm{\rho(U)(x)}^2$ for all $x \in \HH$); $\rho(U \cap V) = \rho(U)\rho(V)$ (composition) on 
$\Bcal$; $\rho$ is finitely additive on $\Bcal$; and $\rho_{x,y}: \Bcal \to \CC$ defined by
\[
   \forall x, y \in \HH, \quad \rho_{x,y}(U) = \langle \rho(U)(x), y \rangle
\]
is a complex measure on $\Bcal$. The importance/existence of resolutions of
the identity is the spectral theorem that asserts that every bounded 
self-adjoint (and more generally) operator $A$ on $\HH$ induces a resolution of the identity $\rho$,
and $A$ can be reconstructed from $\rho$ in terms of a certain type of integral.

{\it b.} With the set-up of part {\it a}, suppose (the weak hypothesis) that $\SS$ can
be written as a disjoint union $\cup U_n$ of a sequence $\{ U_n \} \subseteq \Bcal$.
Define $E_n = \mu(U_n) : \HH \to \HH$, where $\mu$ is given in
Definition \ref{defn:povmmeas}. Then, we have the resolution of the identity
$I = \sum E_n$, because
\[
    \forall x, y \in \HH, \; \langle I(x), y \rangle = \langle \mu(\SS)(x), y \rangle
    =\langle \mu (\cup U_n)(x), y \rangle = \langle \sum \mu(U_n)(x), y  \rangle
    = \langle \sum E_n (x), y \rangle.
\]

{\it c.} We mention parts {\it a} and {\it b} since we shall be dealing with special POVMs
described in Definition \ref{defn:povmsaop}. These will correspond to discrete
observables in quantum measurement, and the domain
$\Bcal$ does not play an explicit role.

\end{example}

\begin{defn}[POVMs, effects, and projections]
\label{defn:povmsaop}
 Let $\Ucal(\HH)$ denote the set of operators $U \in \Lcal_+(\HH)$
 for which $0 \leq U \leq I$; and let $\mathcal{E}(\HH)$ denote the set of 
 operators $E \in \Scal_+(\HH)$ for which $0 \leq E \leq I$,
 i.e., $\mathcal{E}(\HH) = \Scal_+(\HH) \cap \Ucal(\HH)$.
 Note that if $A \in \Lcal_+(\HH) \setminus \Ucal(\HH)$, then there is $a > 1$ and $U \in \Ucal(\HH)$
 such that $A = aU$.
 In fact, set $a = \norm{A} > 1$ and $U = (1/ \norm{A}) A$. The verification is immediate, e.g.,
 $ U \leq I$ since $\langle (I-U)(x),x \rangle = \langle x,x \rangle - (1/ \norm{A}) \langle A(x),x \rangle$
 and $(1/ \norm{A}) \langle A(x),x \rangle \leq \norm{x}^2 = \langle x,x \rangle$.
 Similarly, if $A \in \Scal_+(\HH) \setminus \Ecal(\HH)$, then there is $a > 1$ and $E \in \Ecal(\HH)$
 such that $A = aE$.
  $\mathcal{E}(\HH)$
is the set of all {\it effects}.

A {\it positive operator-valued measurement} on $\HH$, that we also designate
 by POVM, is a 
sequence $\{ E_n \} \subseteq \mathcal{E}(\HH)$ such that $I = \sum E_n$,
see \cite{BusGraLah1997}, \cite{busc2003},
\cite{RenBluScoCav2004}. 
For example, if $E \in \Ecal(\HH)$, then $I - E \in \Ecal(\HH)$ 
since $I - E \in \Scal_+(\HH)$ and $I - E \leq I$, 
in particular, $0 \leq I - E \leq I$; and thus $\{ E, I-E \}$ is a 
POVM on $\HH$.

 Let $\Pcal_+(\HH) \subseteq \Lcal(\HH)$ be the space of {\it self-adjoint
projections}. 
If $P \in \Pcal_+(\HH)$, then $P \geq 0$ since $\langle P(x),x \rangle = \langle P^2(x), x \rangle
= \norm{P(x)}^2$. Further, $P \in \Pcal_+(\HH)$ implies $I - P \in \Pcal_+(\HH)$, and so we have
that $\Pcal_+(\HH) \subseteq \Ecal(\HH)$.

POVMs, effects, and projections are the topic of Section \ref{sec:appl}.

\end{defn}

\begin{defn}[Tensor product and ket-bra notation]
\label{defn:tensorket}
{\it a.} For given $\mathbb{H}$ over $\KK$, let $\mathbb{H}^\prime$ 
be the 
dual space $\Lcal(\HH, \KK)$ of bounded linear functionals $L: \mathbb{H} \to \KK$,
taken with the operator norm topology given by $\norm{L} = {\rm sup}_{\norm{x} \leq 1} |L(x)|$. 
A fundamental result, which is essentially the Riesz representation theorem for Hilbert spaces,
is that {\it there is a conjugate-linear surjective isometry}, $\HH \to \HH^{\prime}$,
$y \mapsto L_y = y^*$ {\it defined by the formula}
\[
    \forall x \in \HH, \quad L_y(x) = y^*(x) = \langle x, y \rangle,
\]
{\it where conjugate-linear means that} $\langle x, a_1 y_1 + a_2 y_2 \rangle =
\overline{a_1} \langle x,y_1\rangle  + \overline{a_2} \langle x, y_2 \rangle$.

{\it b.} The {\it tensor product} $\otimes: \mathbb{H} \times \mathbb{H}^\prime \to \mathcal{L}(\mathbb{H})$ 
is the bilinear mapping sending pairs $(x, y^*)$ to linear operators $x \otimes y^*$ 
defined by the action,
\begin{equation*}
       \forall z \in \mathbb{H}, \quad (x \otimes y^*)(z) = (y^*(z)) x = \inner{z}{y} x.
\end{equation*}
In this definition, we note that if $\KK = \CC$, then we do not generally have $ay^* = (ay)^*$, and the
bilinearity follows since $(x \otimes (ay^*))(z) = \langle z, \overline{a} y \rangle x = a (x \otimes y^*)(z)$.

{\it c.} Let $x = (x_1, \ldots, x_d)$, $y^* = (y_1, \ldots, y_d) \in \HH =\KK^d$. The {\it outer product} 
$x y^*$ is the $d \times d$
matrix $(z_{i,j})$, where $z_{i,j} = x_i \overline{y_j}$; and, in fact, this is the 
{\it tensor product} $x \otimes y^*$ defined more generally in part {\it b}.  $x \otimes y^*$ is clearly a rank-$1$ operator
on $\KK^d$ since each of the columns of $(z_{i,j})$ is a multiple of the
first column. In Dirac notation, $x \otimes y^*$ is the {\it ket-bra} $|x \rangle\langle y |$.
\end{defn}

\begin{lem}
\label{lem:sapossd}
Given $\HH$ and let $x \in \HH$. Define $E = x \otimes x^* \in \Lcal(\HH)$.
Then, $E \in \Scal_{+}(\HH)$, i.e., $E$ is self-adjoint and positive semi-definite.
\end{lem}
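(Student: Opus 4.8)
The plan is to work directly from the definition of the tensor product in Definition \ref{defn:tensorket}. The first step is to unwind $E = x \otimes x^*$ into an explicit action: by the formula $(x \otimes y^*)(z) = \inner{z}{y} x$ applied with $y = x$, we have $E(z) = \inner{z}{x}\, x$ for every $z \in \HH$. This is a rank-$1$ operator, and boundedness (so that indeed $E \in \Lcal(\HH)$) is immediate from Cauchy--Schwarz, since $\norm{E(z)} = |\inner{z}{x}|\,\norm{x} \leq \norm{x}^2 \norm{z}$.

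Next I would verify self-adjointness by comparing $\inner{E(z)}{w}$ with $\inner{z}{E(w)}$ for arbitrary $z, w \in \HH$. Using linearity of the inner product in its first slot, $\inner{E(z)}{w} = \inner{\inner{z}{x}x}{w} = \inner{z}{x}\inner{x}{w}$; using conjugate-linearity in the second slot (the convention recorded in Definition \ref{defn:tensorket}a), $\inner{z}{E(w)} = \inner{z}{\inner{w}{x}x} = \overline{\inner{w}{x}}\,\inner{z}{x} = \inner{x}{w}\inner{z}{x}$. These two scalars coincide by the conjugate-symmetry $\overline{\inner{w}{x}} = \inner{x}{w}$, so $E^{*} = E$.

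Finally I would establish positive semi-definiteness by the same unwinding applied to $w = z$: one computes $\inner{E(z)}{z} = \inner{z}{x}\inner{x}{z} = \inner{z}{x}\,\overline{\inner{z}{x}} = |\inner{z}{x}|^{2} \geq 0$ for all $z \in \HH$, which is exactly the defining inequality for a positive semi-definite operator (with the non-strict inequality, as must be the case since $E$ annihilates $x^{\perp}$). Combining self-adjointness with this inequality places $E$ in $\Scal_{+}(\HH)$, as claimed.

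The computation is entirely routine; the only point that requires care — and the one I would flag as the sole ``obstacle'' — is bookkeeping the conjugation conventions in the case $\KK = \CC$, namely remembering that the scalar $\inner{z}{x}$ factors out linearly from the first argument but with a conjugate from the second. In the real case $\KK = \RR$ all conjugations are trivial and the argument collapses to a one-line symmetry check. No appeal to the spectral theorem or to any structural result is needed; the statement is purely a verification against the definitions of self-adjointness and positivity given in Subsection \ref{sec:prelim}.
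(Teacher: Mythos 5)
Your proposal is correct and follows essentially the same route as the paper's proof: unwind $E(z)=\inner{z}{x}x$, verify $\inner{E(y)}{z}=\inner{y}{E(z)}$ by tracking the conjugation in the second slot, and observe $\inner{E(y)}{y}=|\inner{y}{x}|^2\geq 0$. The added remark on boundedness via Cauchy--Schwarz is a harmless supplement the paper leaves implicit.
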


\begin{proof}
For any any $y, z \in \mathbb{H}$ we have the equations,
\begin{equation*}
\inner{E (y)}{z} = \inner{(x \otimes x^*)(y)}{z} = \inner{(x^*(y))x}{z} 
= \inner{y}{x} \inner{x}{z} 
\end{equation*}
and
\begin{equation*}
\inner{y}{E(z)} = \inner{y}{(x \otimes x^*)(z)} = \inner{y}{(x^*(z))x} 
= \overline{\inner{z}{x}} \inner{y}{x} = \inner{x}{z} \inner{y}{x}.
\end{equation*}
Therefore,
\begin{equation*}
\inner{E (y)}{z} = \inner{y}{x} \inner{x}{z} = \inner{x}{z} \inner{y}{x} = \inner{y}{E (z)}
\end{equation*}
and
\begin{equation}
\label{eq:possd}
\inner{E(y)}{y} = \inner{y}{x} \inner{x}{y} = \inner{y}{x} \overline{\inner{y}{x}} = |\inner{y}{x}|^2 \geq 0.
\end{equation}
Consequently, $E$ is self-adjoint and positive semi-definite.
\end{proof}

\begin{prop}
\label{prop:povm}
 Let $\{x_j\}_{j = 1}^N$ be a Parseval frame for $\HH = \KK^d$. Then, 
 $\{E_j :=x_j \otimes x_j^*\}_{j = 1}^N \subseteq \Ecal(\KK^d)$ 
is a POVM on $\HH$.
\end{prop}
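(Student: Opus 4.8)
The plan is to check the two defining conditions of a POVM in the sense of Definition \ref{defn:povmsaop}: that every $E_j$ is an effect, i.e., $E_j \in \Ecal(\KK^d)$, and that the completeness relation $\sum_{j=1}^N E_j = I$ holds. The membership $E_j \in \Ecal(\KK^d)$ itself splits into three parts---self-adjointness, positive semi-definiteness ($E_j \geq 0$), and the upper bound $E_j \leq I$---of which only the last requires any work.

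First I would invoke Lemma \ref{lem:sapossd} with $x = x_j$. This immediately yields that each $E_j = x_j \otimes x_j^*$ is self-adjoint and positive semi-definite, so $E_j \in \Scal_+(\HH)$ and in particular $0 \leq E_j$. Next I would establish the completeness relation by unwinding the tensor products: for every $x \in \HH$,
\[
\Big(\sum_{j=1}^N E_j\Big)(x) = \sum_{j=1}^N (x_j \otimes x_j^*)(x) = \sum_{j=1}^N \inner{x}{x_j}\, x_j = x,
\]
where the last equality is precisely the Parseval reconstruction identity (the frame operator of a Parseval frame is the identity). Thus $\sum_{j=1}^N E_j = I$.

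The only step demanding a genuine, if brief, argument is the upper bound $E_j \leq I$, which completes the verification that $E_j \in \Ecal(\KK^d)$. Here I would combine equation \eqref{eq:possd} of Lemma \ref{lem:sapossd}, namely $\inner{E_j(x)}{x} = |\inner{x}{x_j}|^2 \geq 0$, with the completeness relation just obtained. For any $x \in \HH$,
\[
\inner{I(x)}{x} = \sum_{k=1}^N \inner{E_k(x)}{x} = \sum_{k=1}^N |\inner{x}{x_k}|^2 \geq |\inner{x}{x_j}|^2 = \inner{E_j(x)}{x},
\]
since every summand is non-negative; hence $\inner{(I - E_j)(x)}{x} \geq 0$ for all $x$, i.e., $E_j \leq I$.

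Assembling these facts gives $E_j \in \Ecal(\KK^d)$ for each $j$ together with $\sum_{j=1}^N E_j = I$, which is exactly Definition \ref{defn:povmsaop}. I anticipate no real obstacle: the statement is a direct synthesis of Lemma \ref{lem:sapossd} and the Parseval frame reconstruction formula, and the single mildly delicate point---dominating $E_j$ by $I$---reduces to the elementary fact that one non-negative term of a sum cannot exceed the whole sum.
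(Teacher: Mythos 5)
Your proposal is correct and follows essentially the same path as the paper: Lemma \ref{lem:sapossd} for self-adjointness and positive semi-definiteness, the Parseval condition for completeness, and a short argument for $E_j \leq I$. The only (immaterial) differences are that you obtain $\sum_j E_j = I$ directly from the reconstruction formula of Theorem \ref{thm:framerecon} rather than from the paper's quadratic-form/eigenvalue argument, and you deduce $E_j \leq I$ from completeness plus non-negativity of the summands --- a route the paper itself notes as an alternative to its Cauchy--Schwarz argument using $\norm{x_j} \leq 1$.
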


\begin{proof}
First note that each $E_j \in  \Ecal(\HH)$. In fact, $E_j \leq I$ by \eqref{eq:possd} since
\[
       \inner{E_j(y)}{y} = |\inner{y}{x_j}|^2 \leq \norm{y}^2\norm{x_j}^2 \leq \langle y,y \rangle,
\]
where we have used the fact that Parseval frames are contained in 
the closed unit ball of $\HH$ (stated for $\HH = \KK^d$
in Proposition \ref{prop:parsprops}). This is also a consequence of the resolution of the
identity formula that we shall now verify, since the $E_j$ are positive semi-definite.
 For any $y \in \mathbb{H}$, we have
\begin{equation*}
\inner{\sum_{j =1}^N E_j (y)}{y} = \sum_{j =1}^N \inner{E_j y}{y} = \sum_{j =1}^N |\inner{y}{x_j}|^2 = \norm{y}^2
\end{equation*}
by the Parseval condition, so that any eigenvalue of $\sum_{j = 1}^N E_j$ must have 
absolute value $1$. Each $E_j$ is self-adjoint and positive semi-definite (Lemma \ref{lem:sapossd}),
 and hence $\sum_{j = 1}^N E_j$ 
is self-adjoint and positive semi-definite. Thus, each eigenvalue of $\sum_{j = 1}^N E_j$ 
must be real and non-negative. Combining these facts shows that $1$ is the only 
eigenvalue of the operator $\sum_{j = 1}^N E_j$. The spectral theorem then implies that $\sum_{j =1}^N E_j$ 
is the identity operator.
\end{proof}

Conversely, given any POVM $\{E_j\}_{j \in J}$ on $\KK^d$, we can construct a Parseval
frame $\{x_j\}_{j = 1}^N$ for $\KK^d$ from 
the eigenvectors of the $E_j$ in the following way. The hypothesis that we are given a POVM
is only used in the penultimate equality of the following proof.

\begin{prop}
\label{prop:pars}
Let $\HH = \KK^d$, and let $\{ E_j \}_{j \in J} \subseteq \Ecal({\HH})$ be a POVM on $\mathbb{H}$. 
There exists a Parseval frame 
$\{ x_{j,k} \}_{j \in J, 1 \leq k \leq d}$ for $\mathbb{H}$ such that for each $E_j$ we have 
$E_j = \sum_{k=1}^d x_{j,k} \otimes x_{j,k}^*$.
\end{prop}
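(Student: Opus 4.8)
The plan is to invert the construction of Proposition~\ref{prop:povm} by applying the spectral theorem to each effect $E_j$ individually and then recognizing that the resulting collection of scaled eigenvectors is automatically a Parseval frame. First I would observe that each $E_j \in \Ecal(\HH)$ is, by definition, self-adjoint and positive semi-definite, so Theorem~\ref{thm:spectral}a applies: there is an orthonormal eigenbasis $\{v_{j,k}\}_{k=1}^d$ of $\KK^d$ and corresponding non-negative eigenvalues $\{\lambda_{j,k}\}_{k=1}^d$ with
\[
     \forall y \in \HH, \quad E_j(y) = \sum_{k=1}^d \lambda_{j,k} \inner{y}{v_{j,k}} v_{j,k}.
\]
Setting $x_{j,k} := \sqrt{\lambda_{j,k}}\, v_{j,k}$ and invoking the identity $(x \otimes x^*)(y) = \inner{y}{x} x$ from Definition~\ref{defn:tensorket}, this rewrites as $E_j = \sum_{k=1}^d x_{j,k} \otimes x_{j,k}^*$, since $\lambda_{j,k} \inner{y}{v_{j,k}} v_{j,k} = \inner{y}{x_{j,k}} x_{j,k}$ (the scalar $\sqrt{\lambda_{j,k}} \geq 0$ is real, so it passes unchanged through the conjugate-linear slot). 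This gives the desired decomposition of each $E_j$ and defines the candidate family $\{x_{j,k}\}_{j \in J,\, 1 \leq k \leq d}$.

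The remaining step is to verify that this family is a Parseval frame for $\HH$, and this is where the POVM hypothesis $I = \sum_{j \in J} E_j$ is used. For any $y \in \HH$ I would compute the sum of squared coefficients directly:
\[
     \sum_{j \in J} \sum_{k=1}^d |\inner{y}{x_{j,k}}|^2
     = \sum_{j \in J} \sum_{k=1}^d \inner{E_j(y)}{y}
     = \inner{\Big(\sum_{j \in J} E_j\Big)(y)}{y} = \inner{I(y)}{y} = \norm{y}^2,
\]
where the first equality follows from \eqref{eq:possd} applied to each $x_{j,k}$, namely $\inner{(x_{j,k} \otimes x_{j,k}^*)(y)}{y} = |\inner{y}{x_{j,k}}|^2$, summed over $k$ to reconstitute $\inner{E_j(y)}{y}$. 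This establishes the Parseval identity $A = B = 1$ and hence that $\{x_{j,k}\}$ is a Parseval frame, completing the proof.

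I expect the main subtlety, rather than obstacle, to be bookkeeping with the conjugate-linearity of the ket-bra: one must confirm that the non-negative scalar $\lambda_{j,k}$ distributes correctly so that $\sqrt{\lambda_{j,k}}\, v_{j,k}$ is genuinely the vector whose rank-one operator reproduces the $k$-th spectral term, and that no spurious conjugate appears because the eigenvalues are real. A secondary point is the interchange of the double sum over $(j,k)$ when $J$ is infinite; but since the inner sum over $k$ is finite and, for each fixed $y$, the total sum equals $\norm{y}^2 < \infty$ with all terms non-negative, Tonelli-type rearrangement is unproblematic. In the finite-dimensional setting $\HH = \KK^d$ the entire argument is purely algebraic, so no genuine analytic difficulty arises.
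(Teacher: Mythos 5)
Your proposal is correct and follows essentially the same route as the paper's proof: diagonalize each effect $E_j$ via the spectral theorem, set $x_{j,k} = \sqrt{\lambda_{j,k}}\, v_{j,k}$, and use the POVM identity $\sum_{j} E_j = I$ to verify the Parseval condition. The only addition is your explicit remark on rearranging the double sum for infinite $J$, which the paper leaves implicit.
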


\begin{proof}
For each $E_j$ we invoke the spectral theorem to choose an eigenbasis 
$\{ e_{j,k} \}_{k=1}^d$ for $E_j$ corresponding to the (real, non-negative, not necessarily distinct) 
eigenvalues $\lambda_{j,k}, \, k=1, \ldots, d$ of $E_j$. Then, for each $j \in J$ 
and $k \in \{1, \dots, d\}$ set $x_{j,k} = \sqrt{\lambda_{j,k}} e_{j,k}$. For any $y \in \mathbb{H}$ 
and any $j \in J$, we can write $y = \sum_{k=1}^d y_k e_{j,k}$. Next, we compute
\begin{align*}
(\sum_{k=1}^d x_{j,k} \otimes x_{j,k}^*) y = \sum_{k=1}^d \inner{y}{x_{j,k}} x_{j,k} 
&= \sum_{k=1}^d \inner{y}{e_{j,k}} \lambda_{j,k} e_{j,k} \\
&= \sum_{k=1}^d y_k \lambda_{j,k} e_{j,k} \\
&= \sum_{k=1}^d y_k E_j (e_{j,k}) = E_j (\sum_{k=1}^d y_k e_{j,k}) = E_j(y),
\end{align*}
so that $E_j = \sum_{k=1}^d x_{j,k} \otimes x_{j,k}^*$. Furthermore,
\begin{align*}
\sum_{j \in J} \sum_{k=1}^d |\inner{x_{j,k}}{y}|^2 &= \sum_{j \in J} \sum_{k=1}^d \inner{y}{x_{j,k}} \inner{x_{j,k}}{y} \\
&= \sum_{j \in J} \sum_{k=1}^d \inner{\inner{y}{x_{j,k}} x_{j,k}}{y} \\
&= \sum_{j \in J} \sum_{k=1}^d \inner{(x_{j,k} \otimes x_{j,k}^*)(y)}{y} \\
&= \inner{(\sum_{j \in J} \sum_{k=1}^d x_{j,k} \otimes x_{j,k}^*)y}{y} 
= \inner{(\sum_{j \in J} E_j)(y)}{y} = \inner{y}{y} = \norm{y}^2,
\end{align*}
where the penultimate equality follows since 
$\{ E_j \}_{j \in J} \subseteq \Ecal({\HH})$ is a POVM on $\HH$.
Therefore, $\{ x_{j,k} \}_{j \in J, 1 \leq k \leq d}$ is a Parseval frame for $\HH = \KK^d$.
\end{proof}

\section{Gleason functions for Parseval frames}
\label{sec:gleapars}

\subsection{Quadratic forms are Gleason functions for Parseval frames}
\label{sec:props}
Suppose $f: \KK^d \longrightarrow \KK$ is
a function for which there exists $W \in \KK$ such that for any frame $\{x_j\}_{j\in J}$ for $\KK^d$
we have $\sum_{j\in J} f(x_j) =W$. Such a function must be 
identically zero since one can add an arbitrary vector
to any frame and the set remains a frame. Therefore,
a more specific class of frames must be examined in order to extend Gleason's
theorem to frames. The clear choices are 
Parseval frames or FUNTFs,
since Parseval frames and FUNTFs reduce to 
ONBs when the cardinality of the frame is the dimension of the 
Hilbert space, see Proposition \ref{prop:parsprops}. Given that Gleason was originally concerned with 
measures corresponding to quantum measurement, and since Parseval frames directly correspond
to positive operator-valued measures (Subsection \ref{sec:povm}), which in turn are a 
general form of quantum measurement,
we shall extend the notion of Gleason's functions 
to Parseval frames as promised in Subsection 
\ref{sec:gleagoal}.

The spectral theorem and a straightforward calculation give Theorem \ref{thm:gleapars}, which
is a direct generalization of Theorem \ref{thm:saimpliesglea}.

\begin{thm}
\label{thm:gleapars}
Let $\HH = \KK^d$, and let
$A$ be a self-adjoint linear operator $A: \HH \longrightarrow \HH$.
The function $g: B^d \longrightarrow \KK$, defined by
\[
     \forall x\in B^d, \quad  g(x)=\langle A(x),x\rangle,
 \]
is a Gleason function 
of weight $W = {\rm tr}(A)$ for the finite Parseval frames $X$ for $\HH$.
Clearly, $g \in L^{\infty}(B^d)$ with $\norm{g}_{L^{\infty}(B^d)} \leq \norm{A}_{op}$, $g(0) = 0$, and
\[
   \forall x \in B^d \; {\rm and} \; \forall \alpha \in \KK, \; {\rm where} \; |\alpha| \leq 1, \quad 
   \alpha x \in B^d \; {\rm and} \; g(\alpha x) = |\alpha|^2 g(x).
\]
\end{thm}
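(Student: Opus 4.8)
The plan is to mimic the proof of Theorem \ref{thm:saimpliesglea} almost verbatim, replacing the Parseval identity for ONBs by the defining Parseval-frame equality. First I would invoke the spectral theorem (Theorem \ref{thm:spectral}a) to fix an orthonormal eigenbasis $\{e_k\}_{k=1}^d$ for the self-adjoint operator $A$, with corresponding real eigenvalues $\{\lambda_k\}_{k=1}^d$, so that $A(x) = \sum_{k=1}^d \lambda_k \inner{x}{e_k} e_k$ for every $x \in \HH$ and hence
\[
    g(x) = \inner{A(x)}{x} = \sum_{k=1}^d \lambda_k |\inner{x}{e_k}|^2.
\]

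Given a finite Parseval frame $X = \{x_j\}_{j \in J}$ for $\HH = \KK^d$, the next step is to substitute $x = x_j$, sum over $j \in J$, and interchange the two finite sums, yielding
\[
    \sum_{j \in J} g(x_j) = \sum_{k=1}^d \lambda_k \sum_{j \in J} |\inner{x_j}{e_k}|^2.
\]
The crucial observation is then that the inner sum collapses: applying the Parseval defining condition of Definition \ref{defn:framekd} to the unit vector $y = e_k$, and using $|\inner{x_j}{e_k}| = |\inner{e_k}{x_j}|$, gives $\sum_{j \in J} |\inner{x_j}{e_k}|^2 = \norm{e_k}^2 = 1$ for each $k$. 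Substituting, I obtain $\sum_{j \in J} g(x_j) = \sum_{k=1}^d \lambda_k = {\rm tr}(A)$ by \eqref{eq:trace}, so $g$ is a Gleason function of weight $W = {\rm tr}(A)$ for the finite Parseval frames for $\HH$. Since $J$ is finite and $d < \infty$, the interchange of summation is unconditionally valid, so there is no genuine analytic obstacle; the only conceptual point is that the Parseval equality is applied to the eigenvectors of $A$ rather than to the frame vectors themselves. I note in passing a slicker route via the trace: writing $\inner{A(x_j)}{x_j} = {\rm tr}\big(A\,(x_j \otimes x_j^*)\big)$ and summing gives ${\rm tr}\big(A\,\sum_{j} x_j \otimes x_j^*\big) = {\rm tr}(A\,\Fcal)$, which equals ${\rm tr}(A)$ because $\Fcal = \sum_j x_j \otimes x_j^* = I$ for a Parseval frame (Proposition \ref{prop:povm}).

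It remains to verify the three auxiliary assertions, each of which is immediate. For boundedness, Cauchy--Schwarz and the operator-norm bound give $|g(x)| = |\inner{A(x)}{x}| \leq \norm{A(x)}\,\norm{x} \leq \norm{A}_{op}\,\norm{x}^2 \leq \norm{A}_{op}$ for every $x \in B^d$, whence $g \in L^{\infty}(B^d)$ with $\norm{g}_{L^{\infty}(B^d)} \leq \norm{A}_{op}$. The vanishing at the origin, $g(0) = \inner{A(0)}{0} = 0$, is trivial by linearity of $A$. Finally, for the degree-$2$ homogeneity, if $x \in B^d$ and $|\alpha| \leq 1$ then $\norm{\alpha x} = |\alpha|\,\norm{x} \leq 1$ shows $\alpha x \in B^d$, and linearity of $A$ together with the sesquilinearity of $\inner{\cdot}{\cdot}$ yields $g(\alpha x) = \inner{A(\alpha x)}{\alpha x} = \alpha \overline{\alpha}\, \inner{A(x)}{x} = |\alpha|^2 g(x)$, consistent with the quadratic-form convention of Remark \ref{rem:quadform}. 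I expect no real obstacle in this proof: the entire content is the summation interchange followed by the eigenvector-wise Parseval collapse, and the rest is bookkeeping.
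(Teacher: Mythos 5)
Your proof is correct and follows essentially the same route as the paper's: spectral decomposition of $A$, interchange of the two finite sums, and the Parseval equality applied to each eigenvector $e_k$ to collapse the inner sum to $\norm{e_k}^2 = 1$. The only cosmetic difference is that you start from $\sum_j g(x_j)$ and arrive at ${\rm tr}(A)$ while the paper runs the same chain of equalities in the reverse direction, and your verification of the auxiliary claims (which the paper leaves as ``clearly'') is also fine.
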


\begin{proof} 
By the spectral
theorem, there exists an orthonormal eigenbasis $\{e_j\}_{j=1}^d$
associated with the set $\{ \lambda_j \}_{j=1}^d$ of eigenvalues 
for $A$. Hence, for all $x \in \HH$, we have $x=\sum_{j=1}^d \inner{x}{e_j}e_j$ and 
$A(x)=\sum_{j=1}^d \inner{x}{e_j}\lambda_j e_j.$
If $X = \{x_j\}_{j=1}^N$ is a
Parseval frame for $\HH$, then we compute
\[
        {\rm tr}(A)=\sum_{j=1}^{d}\lambda_j
=\sum_{j=1}^d \lambda_j\norm{e_j}^2
=\sum_{j=1}^d \lambda_j\sum_{n=1}^N |\inner{x_n}{e_j}|^2,
\]
where the last equality follows since the frame
is Parseval. 
Re-ordering the finite sums yields the desired result:
\[
       {\rm  tr}(A)= \sum_{n=1}^N\sum_{j=1}^d \lambda_j  \inner{e_j}{x_n} \inner{x_n}{e_j}
 =\sum_{n=1}^N\sum_{j=1}^d \inner{\inner{x_n}{e_j} \lambda_j e_j}{x_n}
  =\sum_{n=1}^N\inner{A(x_n)}{x_n}=\sum_{n=1}^N g(x_n).
\]
Therefore, $g$ is a Gleason function of weight $W = {\rm tr}(A)$ for the finite Parseval
frames $X$
 for $\HH$.
\end{proof}

\begin{rem}[Generalizations of Theorem \ref{thm:gleapars}]
\label{rem:normal}
{\it a.} Theorem \ref{thm:gleapars} is true for normal operators over $\CC$, 
since the spectral theorem is true in that setting, e.g., see
\cite{FriInsSpe1997}, page 377.
Further, Theorem \ref{thm:gleapars} is true for arbitrary linear operators $A$ over $\HH = \CC^d$, 
since every such $A$ can be written as $A = B + iC$, 
where $B$ and $C$ are self-adjoint linear operators, see, e.g.,
\cite{halm1958}, Section 70.

{\it b.} The proof of Theorem \ref{thm:gleapars} can be extended to infinite Parseval frames 
$X = \{x_n\}_{n=1}^{\infty}$ for $\KK^d$ by noting that the series
$\sum_{j=1}^d \sum_{n=1}^{\infty} \lambda_j  |\inner{x_n}{e_j}|^2 < \infty$
is absolutely convergent so that the terms can be rearranged,
see Remark \ref{rem:trace}.

{\it c.} See the Problem stated in Subsection \ref{sec:thmprob} for a role that the condition
$g(0) = 0$ in Theorem \ref{thm:gleapars} plays.

\end{rem}

In the remainder of this Section \ref{sec:gleapars}, we shall establish broad
conditions which imply that a Gleason function $g$ for all Parseval frames for $\HH = \KK^d$
is of the form $\langle A(x),x\rangle$ for some linear operator $A$. We shall first focus on
the case in which $g$ is continuous or non-negative, and then extend these results to bounded 
Gleason functions.

%%%%%%%%%%%%%%%%%%%%%%%%%%%%

\subsection{Basic properties}
\label{sec:basicprops}

This subsection collects some general facts about Gleason functions 
for the finite Parseval frames for $\HH = \KK^d$, that we shall use in the sequel.

The first two results are elementary and are stated on their own to avoid unnecessary repetition.
\begin{prop}
\label{prop:prop1}
Let $\HH = \KK^d$. Then, the Gleason functions for the finite Parseval frames for $\HH$ 
form a $\KK$-vector space under pointwise addition of functions and scalar multiplication.
\end{prop}
\begin{proof}
Suppose $f, g: B^d \to \KK$ are two Gleason functions of weights $W_1, W_2$, 
respectively, for the Parseval frames for $\HH$, and let $\alpha, \beta \in \KK$. 
We show that $\alpha f + \beta g$ is a Gleason function of weight 
$\alpha W_1 + \beta W_2$ for the Parseval frames for $\HH$. Let $\{ x_i \}_{i=1}^N \subseteq B^d$
be a Parseval 
frame for $\HH$. Then,
\[
\sum_{i=1}^N (\alpha f + \beta g)(x_i) = \sum_{i=1}^N \alpha f(x_i) + \beta g(x_i) = \alpha \sum_{i=1}^N f(x_i) 
+ \beta \sum_{i=1}^N g(x_i) = \alpha W_1 + \beta W_2,
\]
as claimed.
\end{proof}

\begin{prop}
\label{prop:prop2}
Let $\HH = \KK^d$ and let $g$ be a Gleason function of weight $W$ for the finite Parseval 
frames for $\HH$. Then $g(0) = 0$.
\end{prop}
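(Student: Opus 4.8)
The plan is to exploit two facts: that the zero vector $0$ lies in the closed unit ball $B^d$, so that $g(0)$ is defined, and that appending $0$ to any Parseval frame produces another Parseval frame. Since $g$ is required to have the \emph{same} weight $W$ over all finite Parseval frames for $\HH$, regardless of cardinality, comparing a frame with its zero-augmentation will isolate the single value $g(0)$.

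First I would fix a concrete Parseval frame to anchor the computation, say the standard orthonormal basis $\{e_1,\dots,e_d\}$ for $\HH = \KK^d$, which is a Parseval frame by Definition \ref{defn:framekd}\,{\it b}. The defining property of $g$ then gives $\sum_{j=1}^d g(e_j) = W$. Next I would verify that the augmented sequence $\{e_1,\dots,e_d,0\}$ is still a Parseval frame for $\HH$. This is the only step requiring any check, and it is immediate: for every $x \in \HH$ one has $\inner{x}{0} = 0$, so the zero vector contributes nothing to the frame sum, and hence
\[
   \sum_{j=1}^d |\inner{x}{e_j}|^2 + |\inner{x}{0}|^2 = \sum_{j=1}^d |\inner{x}{e_j}|^2 = \norm{x}^2 .
\]
Applying the defining property of $g$ to this augmented frame yields $\sum_{j=1}^d g(e_j) + g(0) = W$.

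Subtracting the two weight identities gives $g(0) = 0$, as desired. There is no genuine obstacle here: the argument is the harmless specialization, to the zero vector, of the observation made at the start of Subsection \ref{sec:props} that one may freely enlarge a frame. The subtlety that forced the restriction to Parseval frames in the first place — namely that adding an \emph{arbitrary} vector preserves the frame property and forces any unrestricted weight function to vanish — is precisely what makes this proof work in the opposite direction, since the zero vector is the one addition that leaves the Parseval identity intact.
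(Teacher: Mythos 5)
Your proof is correct and is essentially identical to the paper's: both compare an orthonormal basis (a Parseval frame) with its zero-augmented version and subtract the two weight identities to isolate $g(0)$. The only difference is that you explicitly verify the Parseval condition for the augmented sequence, which the paper merely asserts.
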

\begin{proof}
Let $\{ v_1, \dots, v_d \}$ be any ONB for $\HH$. Then, as a consequence of the 
Parseval identity for ONBs,
both $\{ v_1, \dots, v_d \}$ and $\{ 0, v_1, \dots, v_d \}$ are Parseval frames for $\HH$. Because 
$g$ is a Gleason function of weight $W$ for the Parseval frames for $\HH$, we have
\[
      \sum_{i=1}^d g(v_i) = W = g(0) + \sum_{i=1}^d g(v_i).
\]
Thus, $g(0) = 0$.
\end{proof}

The following is a key lemma in obtaining information on the values of Gleason functions for Parseval frames.

\begin{lem} 
\label{lem:alphai}
Let $\HH = \KK^d$ and let $g$ be a Gleason function of weight $W$ for the finite Parseval frames 
for $\HH$. Let $\{ \alpha_1, \dots, \alpha_n \}$ be a finite sequence in $\KK$ such that 
$\sum_{i=1}^n |{\alpha_i}|^2 = 1$. Then,
\[
\forall x \in B^d, \hspace{2em} \sum_{i=1}^n g(\alpha_i x) = g(x).
\]
\end{lem}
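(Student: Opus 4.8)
The plan is to exhibit $x$ and its rescaled copies $\alpha_1 x, \dots, \alpha_n x$ as interchangeable blocks inside two Parseval frames that otherwise coincide, and then subtract the two resulting instances of the Gleason identity. First I would record two preliminary facts. Since $\sum_{i=1}^n |\alpha_i|^2 = 1$, each $|\alpha_i| \leq 1$, so $\norm{\alpha_i x} = |\alpha_i|\norm{x} \leq 1$ and each $\alpha_i x$ lies in $B^d$ whenever $x$ does; thus every evaluation of $g$ below is legitimate. Second, by Definition \ref{defn:tensorket} the rank-one operator attached to a scaled vector satisfies $(\alpha x) \otimes (\alpha x)^* = |\alpha|^2\, x \otimes x^*$, since $((\alpha x) \otimes (\alpha x)^*)(y) = \inner{y}{\alpha x}\,\alpha x = |\alpha|^2 \inner{y}{x}\, x$.

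The key construction is a frame completion of $x$. Because $x \in B^d$, Lemma \ref{lem:sapossd} gives that $x \otimes x^*$ is self-adjoint and positive semi-definite, and $\inner{(x \otimes x^*)(y)}{y} = |\inner{y}{x}|^2 \leq \norm{x}^2 \norm{y}^2 \leq \norm{y}^2$, so $x \otimes x^* \leq I$ and hence $I - x \otimes x^* \in \Scal_+(\HH)$. Applying the spectral theorem (Theorem \ref{thm:spectral}) I would write $I - x \otimes x^* = \sum_{k=1}^d \mu_k\, u_k \otimes u_k^*$ with each $\mu_k \geq 0$ and $\{u_k\}$ an ONB, and set $y_k := \sqrt{\mu_k}\, u_k$, so that $\sum_{k=1}^d y_k \otimes y_k^* = I - x \otimes x^*$.

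Next I would consider the two finite families $F_1 := \{x, y_1, \dots, y_d\}$ and $F_2 := \{\alpha_1 x, \dots, \alpha_n x, y_1, \dots, y_d\}$ and check that both are Parseval frames. For $F_1$, the frame operator is $x \otimes x^* + \sum_{k=1}^d y_k \otimes y_k^* = I$; for $F_2$, using the scaling identity above and $\sum_i |\alpha_i|^2 = 1$, it is $\big(\sum_{i=1}^n |\alpha_i|^2\big)\, x \otimes x^* + \sum_{k=1}^d y_k \otimes y_k^* = x \otimes x^* + (I - x \otimes x^*) = I$. Since for a finite family $\{z_j\}$ one has $\sum_j |\inner{y}{z_j}|^2 = \inner{\Fcal(y)}{y}$, in each case this equals $\norm{y}^2$, so $F_1$ and $F_2$ are Parseval frames by Definition \ref{defn:framekd}; in particular all their elements lie in $B^d$ by Proposition \ref{prop:parsprops}. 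Applying the defining identity of the Gleason function $g$ of weight $W$ to $F_1$ and to $F_2$ gives $g(x) + \sum_{k=1}^d g(y_k) = W$ and $\sum_{i=1}^n g(\alpha_i x) + \sum_{k=1}^d g(y_k) = W$, and subtracting yields $\sum_{i=1}^n g(\alpha_i x) = g(x)$.

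The step I expect to be the crux is the completion combined with the verification that $F_2$ is again Parseval: everything hinges on the identity $(\alpha_i x) \otimes (\alpha_i x)^* = |\alpha_i|^2\, x \otimes x^*$ together with $\sum_i |\alpha_i|^2 = 1$, which guarantee that replacing the single block $x$ by the block $\alpha_1 x, \dots, \alpha_n x$ leaves the frame operator unchanged. The degenerate case $x = 0$ needs no separate treatment, as both sides collapse to $g(0)$, which is $0$ by Proposition \ref{prop:prop2}.
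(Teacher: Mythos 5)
Your proof is correct and takes essentially the same route as the paper's: complete $x$ to a finite Parseval frame, observe that replacing the single block $x$ by the block $\alpha_1 x,\dots,\alpha_n x$ preserves the Parseval property because $\sum_{i=1}^n|\alpha_i|^2=1$, and subtract the two instances of the Gleason identity. Your spectral-theorem completion of $I - x\otimes x^*$ reproduces exactly the paper's explicit completion $\{\beta x\}\cup\{u_1,\dots,u_{d-1}\}$ with $\beta=(1-\norm{x}^2)^{1/2}/\norm{x}$, only packaged in operator language, with Parsevalness checked via the frame operator rather than by direct computation of $\sum_j|\inner{y}{x_j}|^2$.
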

\begin{proof}
If $x = 0$, then $\alpha_i x = \alpha_i \cdot 0 = 0$ for all $i$; and,
using Proposition \ref{prop:prop2}, it follows that
\[
\sum_{i=1}^n g(\alpha_i x) = 0 = g(0),
\]
as claimed.

Otherwise $x \in B^d \setminus \{0\}$, and set $x_1 = \beta x$ where 
$\beta = (1 - \norm{x}^2)^{1/2} / \norm{x}$. 
Choose an ONB $X_1 := \{ u_1, \dots, u_{d-1} \}$ for the orthogonal complement 
$Y^\bot$ of the closed linear span $Y$ of $x$. Then, the sequences 
$X := \{ x \} \cup \{ x_1 \} \cup X_1$ and 
$X' := \{ \alpha_1 x, \dots, \alpha_n x \} \cup \{ x_1 \} \cup X_1$ are both Parseval
frames for $\HH$. To verify this claim, begin by taking any
$y \in \HH$. Let $v = x / \norm{x}$, so that $\{ v, u_1, \dots, u_{d-1} \}$ is an 
ONB for $\HH$. Then, by Parseval's identity for ONBs,
\begin{align*}
    \sum_{u \in X} |\inner{y}{u}|^2 
    &= |\inner{y}{x}|^2 + 
    \frac{1 - \norm{x}^2}{\norm{x}^2} |\inner{y}{x}|^2 + 
    \sum_{u \in X_1} |\inner{y}{u}|^2 + |\inner{y}{v}|^2 - |\inner{y}{v}|^2 \\
    &= |\inner{y}{x}|^2 + \frac{1 - \norm{x}^2}{\norm{x}^2} |\inner{y}{x}|^2 + 
    \norm{y}^2 - \frac{1}{\norm{x}^2} |\inner{y}{x}|^2 = \norm{y}^2.
\end{align*}
Hence, $X$ is a Parseval frame for $\HH$. Because
\[
\sum_{i=1}^n |\inner{y}{\alpha_i x}|^2 = \sum_{i=1}^n |\alpha_i|^2 |\inner{y}{x}|^2
 = |\inner{y}{x}|^2,
\]
the above calculations imply that $X'$ is also a Parseval frame for $\HH$.

Because $X$ and $X'$ are both Parseval frames for $\HH$ and $g$ is a Gleason function of weight 
$W$ for the finite Parseval frames for $\HH$, we have
\[
g(x) + g(x_1) + \sum_{u \in X_1} g(u) = W = 
\sum_{i=1}^n g(\alpha_i x) + g(x_1) + \sum_{u \in X_1} g(u).
\]
Canceling like terms gives the desired result.
\end{proof}

\begin{lem} 
\label{lem:rationalapprox}
Let $\HH = \KK^d$ and let $g$ be a Gleason function of weight $W$ for the finite Parseval 
frames for $\HH$. Let $x \in B^d$, and let $q \in \QQ$ be 
a non-negative rational number with the property that $\sqrt{q} x \in B^d$. Then, $g(\sqrt{q}x) = q\cdot g(x)$.
\end{lem}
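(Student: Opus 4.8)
The plan is to reduce everything to repeated applications of Lemma \ref{lem:alphai}, the only nontrivial tool currently available. The starting point is the special case of that lemma obtained by choosing $n$ equal scalars $\alpha_1 = \cdots = \alpha_n = 1/\sqrt{n}$, for which $\sum_{i=1}^n |\alpha_i|^2 = 1$. Applying Lemma \ref{lem:alphai} to an arbitrary $y \in B^d$ then gives $n\, g(y/\sqrt{n}) = g(y)$, that is,
\[
    g\!\left(\frac{y}{\sqrt{n}}\right) = \frac{1}{n}\, g(y), \qquad \forall\, y \in B^d,\ \forall\, n \in \NN .
\]
This single-scalar scaling law will be the workhorse of the argument. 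Note that it is automatically legitimate, since $|\alpha_i| = 1/\sqrt{n} \le 1$ forces each $\alpha_i y \in B^d$, so no separate membership verification is needed.

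Next I would write $q = p/r$ with $p, r$ nonnegative integers and $r \ge 1$. If $p = 0$, then $\sqrt{q}\,x = 0$ and the claim reduces to $g(0) = 0$, which is exactly Proposition \ref{prop:prop2}, so I may assume $p \ge 1$. I then apply the scaling law twice. First, with $y = x$ and $n = r$, I obtain $g(x/\sqrt{r}) = (1/r)\, g(x)$, which is valid because $x \in B^d$. Second, I set $z := \sqrt{q}\,x$, which lies in $B^d$ precisely by the hypothesis of the lemma, and apply the scaling law with $y = z$ and $n = p$ to get $g(z/\sqrt{p}) = (1/p)\, g(z)$, hence $g(z) = p\, g(z/\sqrt{p})$.

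The two computations are linked by the elementary identity $z/\sqrt{p} = (\sqrt{p/r}\,x)/\sqrt{p} = x/\sqrt{r}$. Substituting, I conclude
\[
    g(\sqrt{q}\,x) = g(z) = p\, g\!\left(\frac{z}{\sqrt{p}}\right) = p\, g\!\left(\frac{x}{\sqrt{r}}\right) = p \cdot \frac{1}{r}\, g(x) = q\, g(x),
\]
as desired.

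I do not anticipate a genuine obstacle here: the entire content is organizing two instances of Lemma \ref{lem:alphai} around the correct intermediate vector $x/\sqrt{r}$. The only points that require a moment's care are the degenerate case $p = 0$ (dispatched by Proposition \ref{prop:prop2}) and checking that every vector fed into Lemma \ref{lem:alphai} lies in $B^d$ — which is automatic in the first application and guaranteed by hypothesis for $z = \sqrt{q}\,x$ in the second.
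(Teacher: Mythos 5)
Your proof is correct and follows essentially the same route as the paper's: both derive the integer scaling law $g(y/\sqrt{n}) = \tfrac{1}{n}g(y)$ from Lemma \ref{lem:alphai} with $n$ equal scalars and then combine two instances of it around the intermediate vector $x/\sqrt{r}$. The only (minor, and arguably cleaner) difference is that you invoke the hypothesis $\sqrt{q}\,x \in B^d$ directly to treat all $q$ uniformly, whereas the paper first handles $q \in [0,1]$ and then reduces $q > 1$ to that case by inverting.
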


\begin{proof}
First observe that for any $z \in B^d$ and for any positive integer $P$, the sequence 
$\{ \alpha_1, \dots, \alpha_P \}$ defined by $\alpha_i := 1 / \sqrt{P}$ for all $i$ satisfies the 
hypotheses of Lemma \ref{lem:alphai}. Hence,
\begin{equation}
   \label{eq:Zscaling}
    \forall z \in B^d \textrm{ and } \forall P \in \NN, \quad g(z) = P g\left(\frac{z}{\sqrt{P}}\right).
\end{equation}

Now let $x \in B^d$, and suppose first that $q = M / N \in \QQ \cap [0, 1]$, where $M, N$ 
are nonnegative integers. Clearly $N \neq 0$, and if $M = 0$ the proof 
that $g(\sqrt{q} x) = q \cdot g(x)$ is immediate since $g(0) = 0$. 
Thus, we may assume $M \neq 0$. By \eqref{eq:Zscaling}, $g(x) = N g(x / \sqrt{N})$. 
For 
$y = \sqrt{\frac{M}{N}} x$ we have $y \in B^d$
because $\norm{y} \leq \norm{x}$, and \eqref{eq:Zscaling} gives $g(y) = M g(y / \sqrt{M})$. 
Hence,
\[
\frac{M}{N} g(x) = M g\left(\frac{x}{\sqrt{N}}\right) = M g\left(\frac{y}{\sqrt{M}}\right) = g(y) = g\left(\sqrt{\frac{M}{N}} x\right).
\]

Otherwise, $q > 1$. Then, $1 / \sqrt{q} \in \QQ \cap [0, 1]$, and the results 
of the preceding paragraph imply that
\[
g(x) = g\left(\frac{1}{\sqrt{q}}  \sqrt{q}x\right) = \frac{1}{q} g(\sqrt{q} x).
\]
Multiplying both sides by $q$ yields the claim.
\end{proof}

%%%%%%%%%%%%%%%%%%%%%%%%%%%%%

\subsection{Gleason functions and quadratic forms}
\label{sec:gleaquad}

We showed in Theorem \ref{thm:gleapars} that 
a self-adjoint operator generates a Gleason function 
for Parseval frames that is a quadratic form. 
%By definition, quadratic forms 
%over $\HH = \KK^d$ are homogeneous functions $Q$
%of degree $2$; and, therefore, if $\alpha \in \KK$, then $Q(\alpha x)=|\alpha|^2 Q(x)$. 
%Following 's calculation \cite{busc2003}, where he showed that functions
%that are constant when
%summed over elements of POVMs are homogeneous of degree 1, we shall show that
%all Gleason functions for the Parseval frames for $\KK^d$ are homogeneous functions of degree 
%$2$ for $|\alpha| \in [0,1]$.
We have the following result, cf. Busch \cite{busc2003} and Caves et al. \cite{CavFucManRen2004}.

\begin{thm} 
\label{thm:continuoushomog}
Let $\HH = \KK^d$ and let $g$ be a Gleason function of weight $W$ for the finite Parseval frames 
for $\HH$. Suppose $g$ is continuous on $B^d$. Then, 
\[
       \forall x \in B^d \textrm{ and } \forall \alpha \in \KK \textrm{ with } |\alpha| \leq 1, \quad 
      g(\alpha x) = |\alpha|^2 g(x).
\]
\end{thm}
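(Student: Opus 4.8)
The plan is to bootstrap from the rational-scaling result already in hand, Lemma~\ref{lem:rationalapprox}, which gives $g(\sqrt{q}\,x) = q\cdot g(x)$ for every non-negative rational $q$ with $\sqrt{q}\,x \in B^d$. The first step is to upgrade this to a real-scalar identity $g(tx) = t^2 g(x)$ for all $t \in [0,1]$ (or more precisely all $t$ with $tx \in B^d$), and the tool for this is exactly the hypothesis of continuity that has not yet been used. Given such a $t$, I would choose a sequence of non-negative rationals $q_n \to t^2$ with $\sqrt{q_n}\,x \in B^d$; then $\sqrt{q_n}\,x \to tx$ in $B^d$, so continuity of $g$ gives $g(\sqrt{q_n}\,x) \to g(tx)$, while Lemma~\ref{lem:rationalapprox} gives $g(\sqrt{q_n}\,x) = q_n\, g(x) \to t^2 g(x)$. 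Equating the two limits yields the real-homogeneity-of-degree-$2$ statement $g(tx) = t^2 g(x)$.

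The second step is to remove the phase, i.e.\ to show $g(\alpha x) = g(|\alpha| x)$ for any $\alpha \in \KK$ with $|\alpha| \le 1$ (this is only content when $\KK = \CC$; for $\KK = \RR$ one combines $t \in [0,1]$ with the sign, handled by the single scalar $\alpha_1 = \alpha$ in the lemma below). The natural device is Lemma~\ref{lem:alphai}: applied to the single-element sequence $\{\alpha_1\}$ with $|\alpha_1|^2 = 1$, it already forces $g(\alpha_1 x) = g(x)$ whenever $|\alpha_1| = 1$. Thus $g$ is invariant under multiplication by any unimodular scalar. Writing $\alpha = |\alpha|\, e^{i\theta}$ with $e^{i\theta}$ unimodular, phase-invariance gives $g(\alpha x) = g(|\alpha| x)$, and then the real-scaling identity from the first step gives $g(|\alpha| x) = |\alpha|^2 g(x)$, completing the chain
\[
     g(\alpha x) = g(|\alpha|x) = |\alpha|^2 g(x).
\]

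The main obstacle I anticipate is not the limiting argument itself but the bookkeeping of staying inside the closed ball $B^d$: Lemma~\ref{lem:rationalapprox} requires $\sqrt{q}\,x \in B^d$, so when approximating a target $t^2$ from a value $t \in [0,1]$ I must pick the rationals $q_n$ so that $\sqrt{q_n}\,x$ never leaves $B^d$, which is automatic if I take $q_n \le t^2 \le 1$ but needs a word of care near the boundary $\|x\| = 1$. One clean way to sidestep boundary issues entirely is to first prove the identity for $x$ in the open ball, where for all sufficiently close rationals $\sqrt{q_n}\,x$ stays inside $B^d$, and then recover the boundary case $\|x\|=1$ by a final continuity pass, letting interior points $r x \uparrow x$ with $r \uparrow 1$ and invoking continuity of $g$ on the compact set $B^d$ once more. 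With that caveat dispatched, the two displayed identities assemble into the claimed $g(\alpha x) = |\alpha|^2 g(x)$ for all $x \in B^d$ and all $\alpha \in \KK$ with $|\alpha| \le 1$.
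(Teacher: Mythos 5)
Your proposal is correct and follows essentially the same route as the paper's proof: Lemma~\ref{lem:alphai} with a single unimodular scalar to reduce to $\alpha = |\alpha| \in (0,1]$, then Lemma~\ref{lem:rationalapprox} combined with continuity to pass from rational to real scaling. The boundary concern you raise is a non-issue once the approximating rationals are taken in $[0,1]$ (as the paper does, and as you yourself note), so the extra continuity pass is unnecessary.
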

\begin{proof}
Let $x \in B^d$, and let $\alpha \in \KK$ satisfy $|\alpha| \leq 1$. If $\alpha = 0$ then we need to show 
$g(0) = 0$; but this has already been shown. Thus we may assume $\alpha \neq 0$. 
Let $\zeta = |\alpha| / \alpha$; then $|\zeta| = 1$. By Lemma \ref{lem:alphai}, 
$g(\alpha x) = g(\zeta \alpha x) = g(|\alpha| x)$. 

Hence, without loss of generality, we may take $\alpha \in (0, 1]$. Let $\{ q_n \}_{n=1}^\infty$ be a 
sequence in $\QQ \cap [0, 1]$ with $q_n \to \alpha$. Then $q_n^2 \to \alpha^2$, so by continuity 
of $g$ and Lemma \ref{lem:rationalapprox},
\[
g(\alpha x) = \lim_{n \to \infty} g(q_n x) = \lim_{n \to \infty} q_n^2 g(x) = \alpha^2 g(x)
\]
as claimed.
\end{proof}

\begin{thm} 
\label{thm:nonnegativehomog}
Let $\HH = \KK^d$ and let $g$ be a Gleason function of weight $W$ for the finite Parseval 
frames for $\HH$. Suppose $g$ is non-negative. Then, 
\[
\forall x \in B^d \textrm{ and } \forall \alpha \in \KK \textrm{ with } |\alpha| \leq 1, \quad g(\alpha x) 
= |\alpha|^2 g(x).
\]
\end{thm}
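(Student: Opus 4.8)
The plan is to substitute non-negativity for continuity in the argument of Theorem \ref{thm:continuoushomog}, using it to force \emph{monotonicity} where before we had used continuity to pass to limits. First I would dispose of the phase and the degenerate cases exactly as in the continuous proof. If $\alpha = 0$, then $g(\alpha x) = g(0) = 0 = |\alpha|^2 g(x)$ by Proposition \ref{prop:prop2}. For $\alpha \neq 0$, applying Lemma \ref{lem:alphai} with the single scalar $\zeta = |\alpha|/\alpha$, which satisfies $|\zeta|^2 = 1$, to the vector $\alpha x \in B^d$ gives $g(\alpha x) = g(\zeta \alpha x) = g(|\alpha| x)$. Thus it suffices to prove that $g(r x) = r^2 g(x)$ for every real $r \in (0, 1]$, and it is convenient to reparametrize by setting $h(t) := g(\sqrt{t}\, x)$ for $t \in [0, 1]$; note $\sqrt{t}\, x \in B^d$ since $\norm{\sqrt{t}\,x} \leq \norm{x} \leq 1$, and the target identity becomes $h(\alpha^2) = \alpha^2 g(x)$, i.e.\ $h(t) = t\, g(x)$.

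The heart of the proof is to show that $h$ is non-decreasing on $[0,1]$. Fix $0 \leq s \leq t \leq 1$ with $t > 0$, and apply Lemma \ref{lem:alphai} to the vector $y := \sqrt{t}\, x \in B^d$ with the two scalars $\alpha_1 = \sqrt{s/t}$ and $\alpha_2 = \sqrt{(t-s)/t}$, which are real and satisfy $\alpha_1^2 + \alpha_2^2 = 1$. Since $\alpha_1 y = \sqrt{s}\, x$ and $\alpha_2 y = \sqrt{t-s}\, x$, the lemma yields
\[
h(t) = g(\sqrt{t}\, x) = g(\sqrt{s}\, x) + g(\sqrt{t-s}\, x) = h(s) + g(\sqrt{t-s}\, x).
\]
Because $t - s \leq 1$, the vector $\sqrt{t-s}\, x$ lies in $B^d$, so the hypothesis that $g$ is non-negative gives $g(\sqrt{t-s}\, x) \geq 0$, whence $h(t) \geq h(s)$. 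This monotonicity is precisely the mechanism that replaces continuity, and establishing it cleanly is the main obstacle: the decomposition from Lemma \ref{lem:alphai} must be arranged so that the ``leftover'' scalar $\alpha_2$ produces a genuine vector in $B^d$ on which $g$ is controlled, and it is the non-negativity of $g$ on that vector that does all the work.

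Finally I would combine monotonicity with the known rational values. By Lemma \ref{lem:rationalapprox}, $h(q) = g(\sqrt{q}\, x) = q\, g(x)$ for every rational $q \in [0,1]$. Given an arbitrary $t \in [0,1]$, choose rationals $q_1 \leq t \leq q_2$ in $[0,1]$; monotonicity of $h$ and non-negativity of $g(x)$ give $q_1 g(x) = h(q_1) \leq h(t) \leq h(q_2) = q_2 g(x)$. Letting $q_1 \uparrow t$ and $q_2 \downarrow t$ through rationals squeezes $h(t)$ to $t\, g(x)$ (the case $g(x) = 0$ being immediate, and $g(x) > 0$ following from the squeeze). Therefore $h(t) = t\, g(x)$ on $[0,1]$; taking $t = \alpha^2 = |\alpha|^2$ and recalling the phase reduction of the first paragraph yields $g(\alpha x) = |\alpha|^2 g(x)$ for all $x \in B^d$ and all $\alpha \in \KK$ with $|\alpha| \leq 1$, as desired.
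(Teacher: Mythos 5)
Your proof is correct and follows essentially the same route as the paper's: reduce to real $\alpha\in(0,1]$ via Lemma \ref{lem:alphai} with a unimodular scalar, use a two-scalar application of Lemma \ref{lem:alphai} together with non-negativity to get monotonicity along the ray through $x$ (the paper writes the two scalars as $\cos\theta,\sin\theta$ rather than $\sqrt{s/t},\sqrt{(t-s)/t}$, which is the same decomposition), and then squeeze using the rational scaling of Lemma \ref{lem:rationalapprox}. No gaps.
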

\begin{proof}
Let $x \in B^d$ and $\alpha \in \KK$ with $|\alpha| \leq 1$. As in the proof of 
Theorem \ref{thm:continuoushomog} we may assume $\alpha \in (0, 1]$. Also we may take 
$x \neq 0$, as if $x = 0$ then the claim is $g(0) = 0$, which has already been shown.

Let $\theta \in [0, \pi / 2]$. Then, by the Pythagorean theorem, the sequence $\{ \cos \theta, \sin \theta \}$ 
satisfies the hypotheses of Lemma \ref{lem:alphai}, so that $g(x) = g(\cos(\theta)x) + g(\sin(\theta)x)$. 

Consider the line segment $L_x := \{ \beta x : \beta \in [0, 1] \}$ extending from the origin to $x$. 
Let $0 \leq \gamma < \beta \leq 1$, and set $\theta = \cos^{-1}(\gamma / \beta) \in [0, \pi / 2]$. Set $y := \beta x$ and $z := \gamma x$, so that $y, z \in L_x$. Then
\[
g(y) - g(z) = g(y) - g(\gamma x) = g(y) - g(\cos(\theta) y) = g(\sin(\theta) y) \geq 0.
\]
Therefore, \emph{$g$ is monotonically increasing from $g(0) = 0$ on $L_x$}.

Now let $\{ p_n \}_{n=1}^\infty$ and $\{ q_n \}_{n=1}^\infty$ be sequences in $\QQ \cap [0, 1]$ with $\{ p_n \}$ decreasing, $\{ q_n \}$ increasing, and $\lim_{n \to \infty} p_n = \lim_{n \to \infty} q_n = \alpha^2$. Then $p_n g(x) \to \alpha^2 g(x)$ and $q_n g(x) \to \alpha^2 g(x)$. Also, by monotonicity of $g$ and Lemma \ref{lem:rationalapprox},
\[
q_n g(x) = g(\sqrt{q_n} x) \leq g(\alpha x) \leq g(\sqrt{p_n} x) = p_n g(x).
\]
Combining these claims gives the desired result. This is a standard technique, see
Busch \cite{busc2003} and Caves et al. \cite{CavFucManRen2004}.

\end{proof}

\begin{rem}[Continuity on rays]
 Let $g$ be a non-negative Gleason function for the Parseval frames for $\KK^d$.
 In Theorem \ref{thm:nonnegativehomog} we proved that along
any ray beginning at the origin, $g$ is an {\it increasing} function
beginning at $g(0) = 0$ and going out to the boundary of $B^d$.
%By the definition of continuity and $\eqref{eq:homog}$, an
%elementary calculation also allows us to assert further that $g$ {\it is continuous 
%on each such ray}.
\end{rem}

It is immediate from the definition
that, if $g$ is a Gleason function for the Parseval frames for $\KK^d$, then $g$ is a 
Gleason function for the ONBs for $\KK^d$. We have also noted that if $g$ is
defined by a self-adjoint linear operator, 
then $g$ is a Gleason function
for the Parseval frames for $\KK^d$ (Theorem \ref{thm:gleapars}). 
Using Gleason's original theorem, we shall now prove
various partial converses. 
%in Theorem \ref{thm:gleaparsiffsa}. This will be done by combining Gleason's original 
%theorem with Theorem \ref{thm:homog} and 
%the special case Theorem \ref{thm:naimark} of Naimark's dilation theorem (1940). 

We shall need the following result asserting that orthogonal projections
of ONBs are Parseval frames. It is an elementary converse to Naimark's theorem.

\begin{prop} 
\label{prop:frameprojection}
Let $\HH = \KK^d$ and let $\GG$ be a closed subspace of $\HH$. Write $P$ for the orthogonal 
projection of $\HH$ onto $\GG$. Let $\{ x_j \}_{j=1}^N$ be a Parseval frame for $\HH$. Then, 
$\{ P (x_j )\}_{j=1}^N$ is a Parseval frame for $\GG$. In particular, if $\{ x_j \}_{j=1}^N$ is an 
ONB for $\HH$, 
then $\{ P (x_j) \}_{j=1}^N$ is a Parseval frame for $\GG$.
\end{prop}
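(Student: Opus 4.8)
The plan is to verify the Parseval condition directly for $\{P(x_j)\}_{j=1}^N$ as a sequence in $\GG$. The key observation is that $P$ is self-adjoint (being an orthogonal projection) and acts as the identity on $\GG$, so for any $y \in \GG$ we have $\inner{P(x_j)}{y} = \inner{x_j}{P(y)} = \inner{x_j}{y}$. This lets me transfer the frame expansion from $\HH$ down to $\GG$ without any loss.

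First I would fix an arbitrary $y \in \GG$ and compute $\sum_{j=1}^N |\inner{y}{P(x_j)}|^2$. Using self-adjointness of $P$ together with $P(y) = y$ (valid since $y \in \GG$), each term becomes $|\inner{P(y)}{x_j}|^2 = |\inner{y}{x_j}|^2$. Then the Parseval property of $\{x_j\}_{j=1}^N$ in $\HH$ gives
\[
   \sum_{j=1}^N |\inner{y}{P(x_j)}|^2 = \sum_{j=1}^N |\inner{y}{x_j}|^2 = \norm{y}^2,
\]
which is exactly the Parseval condition for $\{P(x_j)\}$ viewed as a sequence in the Hilbert space $\GG$. Since $\{P(x_j)\}_{j=1}^N \subseteq \GG$ by construction, this shows $\{P(x_j)\}_{j=1}^N$ is a Parseval frame for $\GG$, establishing the general statement.

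The final sentence, concerning the case where $\{x_j\}_{j=1}^N$ is an ONB, is then immediate: by Definition \ref{defn:framekd} {\it b} (or Proposition \ref{prop:parsprops}), any ONB for $\HH$ is in particular a Parseval frame for $\HH$, so the general claim applies verbatim. There is no genuine obstacle here; the only point requiring mild care is distinguishing the two uses of $P$-self-adjointness, namely that I may move $P$ onto $y$ rather than onto $x_j$, and that $P$ fixes $y$ precisely because $y$ lies in the range $\GG$ of $P$. Everything else is a one-line application of the Parseval identity in $\HH$.
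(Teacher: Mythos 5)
Your proof is correct and follows essentially the same route as the paper: fix $y \in \GG$, use $P(y) = y$ together with self-adjointness of $P$ to rewrite $\inner{y}{P(x_j)}$ as $\inner{y}{x_j}$, and then apply the Parseval identity in $\HH$. The only difference is cosmetic (the paper attributes the key step to $P$ being ``normal'' rather than self-adjoint), so there is nothing further to add.
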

\begin{proof}
Let $y \in \GG$, so that $y = P(y)$. We verify the Parseval condition for $\{ P (x_j )\}_{j=1}^N$ by
direct calculation:
\[
\norm{y}^2 = \sum_{j=1}^N |\inner{y}{x_j}|^2 = 
\sum_{j=1}^N {\inner{P(y)}{x_j}}^2 = \sum_{j=1}^N |\inner{y}{P(x_j)}|^2
\]
where the last equality holds because orthogonal projections are normal.
\end{proof}

\begin{rem}[Naimark's theorem]
\label{rem:naim}
For Naimark's theorem generally, see,
e.g., Naimark \cite{naim1940}, \cite{naim1943}, Paulsen \cite{paul2003} (2003) 
in terms of POVMs, and Czaja \cite{czaj2008} (2008), cf. 
Chandler Davis \cite{davi1977} (1977). A beautiful idea dealing with their
dilation viewpoint on frames gave rise to
Han and Larson's theorem, see \cite{HanLar2000} (2000), Proposition 1.1. 
It is at once a special case of Naimark's theorem, it has an elementary proof different
from Naimark's formulation, it
generalizes significantly in terms of group representations, and it has broad
applicability, e.g., \cite{CasKov2003}, \cite{CasRedTre2008}.

\end{rem}

\begin{thm} 
\label{thm:nonnegativemainthm}
Let $\HH = \KK^d$. A non-negative function $g: B^d \to \RR$ is a Gleason function of weight 
$W$ for the finite Parseval frames for $\HH$ if and only if there exists a self-adjoint 
and positive semi-definite operator $A: \HH \to \HH$ with trace ${\rm tr} (A )= W$ such that
\[
       \forall x \in B^d, \quad g(x) = \inner{A(x)}{x}.
\]
\end{thm}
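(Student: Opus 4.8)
The forward implication is essentially already in hand: if $A$ is self-adjoint and positive semi-definite with ${\rm tr}(A) = W$, then Theorem \ref{thm:gleapars} shows that $x \mapsto \inner{A(x)}{x}$ is a Gleason function of weight ${\rm tr}(A) = W$ for the finite Parseval frames for $\HH$, and it is non-negative precisely because $A$ is positive semi-definite. So the entire content lies in the converse, which I would establish by treating $d \geq 3$ directly and then reducing the low-dimensional cases to $d = 3$.

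For $d \geq 3$ the plan is to restrict and then re-extend. Since every ONB for $\HH$ is a Parseval frame lying in $S^{d-1} \subseteq B^d$, the restriction $g|_{S^{d-1}}$ is a non-negative Gleason function of weight $W$ for the ONBs for $\HH$. Gleason's Theorem \ref{thm:glea1} therefore produces a positive (hence positive semi-definite) self-adjoint operator $A$ with $g(x) = \inner{A(x)}{x}$ for all $x \in S^{d-1}$. To pass from the sphere back to the ball I would invoke the homogeneity supplied by Theorem \ref{thm:nonnegativehomog}: for $x \in B^d \setminus \{0\}$, writing $x = \norm{x}\,(x/\norm{x})$ with $\norm{x} \leq 1$ gives $g(x) = \norm{x}^2 g(x/\norm{x}) = \norm{x}^2 \inner{A(x/\norm{x})}{x/\norm{x}} = \inner{A(x)}{x}$, while $g(0) = 0 = \inner{A(0)}{0}$ by Proposition \ref{prop:prop2}. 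Finally, evaluating both $g$ and $x \mapsto \inner{A(x)}{x}$ on a single ONB forces ${\rm tr}(A) = W$.

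The genuine obstacle is $d \leq 2$, where Gleason's theorem itself fails for ONBs (Proposition \ref{prop:Gleason 2d}); here the extra rigidity must come from the Parseval frames that are \emph{not} ONBs. My plan is to lift the problem to $\KK^3$. Embed $\KK^d$ as a subspace of $\KK^3$, let $P$ be the orthogonal projection of $\KK^3$ onto this copy of $\KK^d$, and define $G : B^3 \to \RR$ by $G(y) = g(P(y))$; this is well defined since $\norm{P(y)} \leq \norm{y} \leq 1$. The key point is that if $\{y_j\}$ is any Parseval frame for $\KK^3$, then $\{P(y_j)\}$ is a Parseval frame for $\KK^d$ by Proposition \ref{prop:frameprojection}, whence $\sum_j G(y_j) = \sum_j g(P(y_j)) = W$; thus $G$ is a non-negative Gleason function of weight $W$ for the finite Parseval frames for $\KK^3$. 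Applying the already-established $d = 3$ case yields a self-adjoint, positive semi-definite operator $\widetilde A$ on $\KK^3$ with $G(y) = \inner{\widetilde A(y)}{y}$.

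To descend I would compress $\widetilde A$: set $A := P\,\widetilde A|_{\KK^d}$, the compression of $\widetilde A$ to the embedded copy of $\KK^d$. For $x$ in this copy one has $P(x) = x$, so $g(x) = G(x) = \inner{\widetilde A(x)}{x} = \inner{A(x)}{x}$; self-adjointness and positive semi-definiteness of $A$ are inherited from $\widetilde A$, since $P$ is the adjoint of the inclusion and hence $\inner{A(x)}{x} = \inner{\widetilde A(x)}{x}$ for $x \in \KK^d$. As before, the trace condition ${\rm tr}(A) = W$ comes for free: evaluating $g(x) = \inner{A(x)}{x}$ on an ONB of $\KK^d$ gives $\sum_i \inner{A(e_i)}{e_i} = {\rm tr}(A)$ on one side and $\sum_i g(e_i) = W$ on the other. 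The delicate step throughout is guaranteeing that the operator recovered on the sphere (or on the subspace) genuinely reproduces the quadratic form on all of $B^d$; Theorem \ref{thm:nonnegativehomog} and Proposition \ref{prop:frameprojection} are exactly what legitimize these extensions, and I expect the $d = 2$ reduction to be where a careless argument would break down.
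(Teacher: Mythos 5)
Your proposal is correct and follows essentially the same route as the paper: the paper also lifts to $\KK^{N}$ with $N=\max(d,3)$, projects ONBs to Parseval frames via Proposition \ref{prop:frameprojection}, applies Gleason's theorem for ONBs, compresses the resulting operator as $A=PBP$, and uses Theorem \ref{thm:nonnegativehomog} to pass from the sphere to the ball. The only difference is organizational --- the paper handles all $d$ uniformly by working with $F=g\circ P$ on $S^{N-1}$ directly, whereas you split into $d\geq 3$ and $d\leq 2$ and bootstrap the low-dimensional case through the already-proved $d=3$ statement --- but the ingredients and logic are identical.
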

\begin{proof}
If such an $A$ exists then $g$ is a Gleason function of weight $W$ for the finite Parseval frames 
for $\KK^d$, and, in fact, for all Parseval frames for $\HH$, by Theorem \ref{thm:gleapars}.

For the converse, let $g$ be a Gleason function of weight $W$ for the finite Parseval frames 
for $\HH$. 
Let $N = \max(d, 3)$ and consider the Hilbert space $\KK^N$. We can naturally identify $\HH$ 
with the closed subspace of $\KK^N$ spanned by the first $d$ standard basis vectors. 
Let $P: \KK^N \to \HH$ be the projection onto the first $d$ coordinates, and define 
$F: B^N \to \RR$ by $F(x) := g(P(x))$ for all $x \in B^N$. 
For an arbitrary ONB $\{ e_i \}_{i=1}^N$ for $\KK^N$, $\{ P (e_i )\}_{i=1}^N$ is a Parseval 
frame for $\HH$ by Proposition \ref{prop:frameprojection}. Then, because $g$ is a 
Gleason function of weight $W$ for the finite Parseval frames for $\HH$, we have
\[
       \sum_{i=1}^N F(e_i) = \sum_{i=1}^N g(P (e_i)).
\]
It follows that $F$ is a 
       Gleason function of weight $W$ for the ONBs for $\KK^N$.

By Gleason's Theorem \ref{thm:saimpliesglea} (which applies to $F$ because $F$ is 
non-negative and 
$n \geq 3$), there exists a necessarily positive self-adjoint operator $B: \HH \to \HH$ 
such that $F(x) = \inner{B(x)}{x}$ for all $x \in S^{N-1}$.

Set $A := P B P$. We claim $g(x) = \inner{A(x)}{x}$ for all $x \in B^d$. Let $x \in B^d$. 
If $x = 0$ then the claim is $g(0) = 0$, which has been proven already. Otherwise,
 $x \neq 0$, and $y := x / \norm{x}$ is a unit-norm vector. Since $y \in \HH$, $y = P(y)$. 
 Using Theorem \ref{thm:nonnegativehomog} and the fact that $P$ is self-adjoint, we obtain
\[
g(x) = g(\norm{x} y) = \norm{x}^2 g(y) = \norm{x}^2 \inner{B(y)}{y} 
\]
\[
= 
\norm{x}^2 \inner{B(P(y))}{P(y)} = \norm{x}^2 \inner{A(y)}{y} = \inner{A(x)}{x}
\]
as claimed. Note that $A$ is a self-adjoint operator $\HH \to \HH$, since
\[
A^* = (PBP)^* = P^* B^* P^* = P B P = A.
\]
Thus, 
the spectral theorem gives an ONB $\{ u_i \}_{i=1}^d$ for $\HH$ consisting of eigenvectors 
of $A$; say $A (u_i)= \lambda_i u_i$ for each $i$. Then, $\{ u_i \}_{i=1}^d$ is a Parseval 
frame for $\HH$, so that
\[
W = \sum_{i=1}^d g(u_i) = \sum_{i=1}^d \inner{A (u_i)}{u_i} 
= \sum_{i=1}^d \lambda_i \norm{u_i}^2 = \sum_{i=1}^d \lambda_i = {\rm tr} (A),
\]
completing the proof of the claim.
\end{proof}

Theorem \ref{thm:nonnegativemainthm} extends to similar theorems about bounded 
Gleason functions, as we demonstrate.

\begin{thm} 
\label{thm:realmainthm}
Let $\HH = \KK^d$. A bounded, real-valued function $g: B^d \to \RR$ is a Gleason function of 
weight $W$ for the finite Parseval frames for $\HH$ if and only if there exists a self-adjoint operator 
$A: \HH \to \HH$ with trace ${\rm tr}(A) = W$ such that
\[
\forall x \in B^d, \quad g(x) = \inner{A(x)}{x}.
\]
\end{thm}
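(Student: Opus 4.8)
The plan is to mirror the proof of Theorem~\ref{thm:nonnegativemainthm}, substituting two of its ingredients. The forward implication is immediate from Theorem~\ref{thm:gleapars}, which already produces a Gleason function of weight $\mathrm{tr}(A)$ from any self-adjoint $A$. For the converse, let $g$ be a bounded, real-valued Gleason function of weight $W$ for the finite Parseval frames for $\HH = \KK^d$. The proof of Theorem~\ref{thm:nonnegativemainthm} used non-negativity in exactly two places: to invoke Gleason's non-negative theorem on the embedded sphere, and to invoke the homogeneity Theorem~\ref{thm:nonnegativehomog}. I would replace the first by the bounded version, Theorem~\ref{thm:Gleasonbounded}; the second must be re-established for bounded (rather than non-negative) functions, and this is the one genuinely new step.

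\textbf{Homogeneity for bounded Gleason functions.} I claim that $g(\alpha x) = |\alpha|^2 g(x)$ for all $x \in B^d$ and all $\alpha \in \KK$ with $|\alpha| \le 1$. As in Theorems~\ref{thm:continuoushomog} and~\ref{thm:nonnegativehomog}, Lemma~\ref{lem:alphai} applied to a single unit-modulus scalar gives $g(\alpha x) = g(|\alpha| x)$, so it suffices to treat $\alpha \in [0,1]$, and we may assume $x \ne 0$. Put $w_0 = x / \norm{x} \in S^{d-1}$ and define $\phi : [0,1] \to \RR$ by $\phi(s) = g(\sqrt{s}\, w_0)$; note $\phi$ is bounded by $\sup_{B^d}|g| < \infty$. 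For $a, b \ge 0$ with $0 < a + b \le 1$, apply Lemma~\ref{lem:alphai} at the point $z = \sqrt{a+b}\, w_0 \in B^d$ with the two scalars $\alpha_1 = \sqrt{a/(a+b)}$ and $\alpha_2 = \sqrt{b/(a+b)}$, which satisfy $\alpha_1^2 + \alpha_2^2 = 1$; since $\alpha_1 z = \sqrt{a}\, w_0$ and $\alpha_2 z = \sqrt{b}\, w_0$, this yields $\phi(a) + \phi(b) = \phi(a+b)$. Together with $\phi(0) = g(0) = 0$ (Proposition~\ref{prop:prop2}), $\phi$ is a bounded additive function on $[0,1]$. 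Extending $\phi$ to $[0,\infty)$ by $\Phi(t) = k\,\phi(t/k)$ for any integer $k \ge t$ (a well-defined additive extension), $\Phi$ is additive on $\RR_{\ge 0}$ and bounded on $[0,1]$, hence of the form $\phi(s) = \phi(1)\,s = g(w_0)\,s$, since an additive function bounded on an interval is necessarily linear (a standard strengthening of the facts recalled in Example~\ref{ex:comp}\,c). Taking $s = |\alpha|^2 \norm{x}^2 \in [0,1]$ and comparing with $g(x) = \norm{x}^2 g(w_0)$ gives $g(\alpha x) = |\alpha|^2 g(x)$.

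\textbf{Construction of $A$ and the trace.} With homogeneity in hand, I would follow Theorem~\ref{thm:nonnegativemainthm} almost verbatim. Set $N = \max(d,3)$, identify $\HH$ with the span of the first $d$ coordinates of $\KK^N$, let $P : \KK^N \to \HH$ be the coordinate projection, and set $F(y) = g(P(y))$ on $B^N$. By Proposition~\ref{prop:frameprojection}, $P$ carries each ONB of $\KK^N$ to a Parseval frame for $\HH$, so $F|_{S^{N-1}}$ is a bounded, real-valued Gleason function for the ONBs for $\KK^N$; since $N \ge 3$, Theorem~\ref{thm:Gleasonbounded} provides a self-adjoint $B : \KK^N \to \KK^N$ with $F(y) = \inner{B(y)}{y}$ on $S^{N-1}$. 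Put $A := PBP$, which is self-adjoint. For $x \in S^{d-1} \subseteq S^{N-1}$ we get $g(x) = F(x) = \inner{Bx}{x} = \inner{PBPx}{x} = \inner{Ax}{x}$, and for general $x \in B^d \setminus \{0\}$ homogeneity gives $g(x) = \norm{x}^2 g(x/\norm{x}) = \norm{x}^2 \inner{A(x/\norm{x})}{x/\norm{x}} = \inner{Ax}{x}$; the case $x = 0$ is Proposition~\ref{prop:prop2}. Finally, applying $g$ to an ONB $\{u_i\}_{i=1}^d$ of eigenvectors of $A$ (itself a Parseval frame) gives $W = \sum_i \inner{Au_i}{u_i} = \sum_i \lambda_i = \mathrm{tr}(A)$.

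The main obstacle is the homogeneity step: without non-negativity or continuity, the only leverage is boundedness, and the substance of the argument is the reduction of the Gleason identity along a ray to the Cauchy functional equation $\phi(a+b) = \phi(a) + \phi(b)$, followed by the classical fact that a bounded additive function is linear. The counterexamples of Proposition~\ref{prop:disc} show that boundedness cannot simply be dropped.
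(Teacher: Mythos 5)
Your proof is correct, but it takes a genuinely different route from the paper at the key step. The paper does not re-establish homogeneity for bounded $g$; instead it first proves the pointwise bound $|g(x)| \le \lambda \norm{x}^2$ with $\lambda = \sup_{B^d}|g|$ (if this failed at some $y$, Lemma~\ref{lem:rationalapprox} would let one rescale $y$ by a suitable rational factor to push $|g|$ strictly above $\lambda$, a contradiction), then observes that $f(x) := g(x) + \lambda\norm{x}^2$ is a \emph{non-negative} Gleason function of weight $W + \lambda d$, applies Theorem~\ref{thm:nonnegativemainthm} to $f$ to obtain a positive semi-definite self-adjoint $B$, and sets $A := B - \lambda I$. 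You instead prove the homogeneity $g(\alpha x) = |\alpha|^2 g(x)$ directly for bounded Gleason functions by reducing the Gleason identity along a ray (via Lemma~\ref{lem:alphai}) to Cauchy's functional equation and invoking the classical fact that a bounded additive function is linear, and then you rerun the embedding construction of Theorem~\ref{thm:nonnegativemainthm} with Theorem~\ref{thm:Gleasonbounded} in place of Gleason's non-negative theorem. Both arguments are sound. The paper's reduction is shorter because it reuses Theorem~\ref{thm:nonnegativemainthm} wholesale; yours yields as a byproduct a bounded analogue of Theorems~\ref{thm:continuoushomog} and~\ref{thm:nonnegativehomog} that the paper never states, and it isolates exactly where boundedness enters (as a regularity hypothesis for the Cauchy equation), which dovetails with the Hamel-basis counterexamples of Proposition~\ref{prop:disc}. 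The one point worth tightening is the well-definedness of your extension $\Phi(t) = k\,\phi(t/k)$: it rests on the identity $\phi(ns) = n\phi(s)$ for $ns \le 1$, which you should note follows by induction from the restricted additivity of $\phi$ on $[0,1]$.
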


\begin{proof}
For the ``if'' direction see Theorem \ref{thm:gleapars}.

For the ``only if'' implication, let $\lambda := \sup_{x \in B^d} |g(x)|$. We first claim that 
$|g(x)| \leq \lambda \norm{x}^2$. Suppose by contradiction that this is not the case; then,
there exists $y \in B^d$ such that 
$|g(y)|> \lambda \norm{y}^2$. In particular, there exists some $\epsilon > 0$ such that 
$|g(y)| > (\lambda + \epsilon) \norm{y}^2$. 
By Proposition \ref{prop:prop2}, we have 
$g(0) = 0$,
and so $y \neq 0$. Since $\lambda / (\lambda + \epsilon) < 1$, 
there exists a positive rational number $a$ satisfying
\[
  \frac{\lambda}{(\lambda + \epsilon) \norm{y}^2} \leq a \leq \frac{1}{\norm{y}^2}.
\]
Set $z = \sqrt{a}y$. Then, $\norm{z}^2 = a \norm{y}^2 \leq 1$ so that
$z \in B^d$. Furthermore, by Lemma \ref{lem:rationalapprox},
\[
|g(z)| = a |g(y)| > a (\lambda + \epsilon) \norm{y}^2 \geq (\lambda + \epsilon) \frac{\lambda}{\lambda + \epsilon} = \lambda.
\]
Hence $|g(z)| > \lambda$, contradicting the choice of $\lambda$.

Now define an auxiliary function $f(x) = g(x) + \lambda \norm{x}^2$. Then, 
$f$ is non-negative, since for any $x \in B^d$, we have
\[
f(x) \geq \lambda \norm{x}^2 - |g(x)| \geq 0.
\]
Furthermore, $f$ is a Gleason function of weight $W + \lambda d$ for the finite Parseval 
frames for $\HH$. By Theorem \ref{thm:nonnegativemainthm}, there exists a positive 
semi-definite self-adjoint operator $B$ such that $f(x) = \inner{B(x)}{x}$ for all $x \in B^d$. 
Then, $g(x) = \inner{x}{(B - \lambda I)(x)}$, 
where $B - \lambda I$ is self-adjoint. The theorem follows by setting $A := B - \lambda I$, 
noting that ${\rm tr} (A) = {\rm tr} (B - \lambda I) = W + \lambda d - \lambda d = W$.
\end{proof}

It is now not difficult to extend this result to the complex case.

\begin{thm} 
\label{thm:complexmainthm}
Let $\HH = \CC^d$. A bounded function $g: \HH \to \CC$ is a Gleason function of weight 
$W$ for the finite Parseval frames for $\HH$ if and only if there exists a linear operator 
$A: \HH \to \HH$ with trace ${\rm tr}(A) = W$ such that
\[
\forall x \in B^d, \quad g(x) = \inner{A(x)}{x}.
\]
\end{thm}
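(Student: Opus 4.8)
The plan is to reduce the complex-valued case to the real-valued case already settled in Theorem \ref{thm:realmainthm}, in exact parallel with the passage from part {\it i} to part {\it ii} of the proof of Theorem \ref{thm:Gleasonbounded}. For the ``if'' direction I would invoke Theorem \ref{thm:gleapars} in the form extended to arbitrary linear operators over $\CC^d$ recorded in Remark \ref{rem:normal} {\it a}: writing a general linear $A : \CC^d \to \CC^d$ in its Cartesian decomposition $A = B + iC$ with $B, C$ self-adjoint, the function $x \mapsto \inner{A(x)}{x} = \inner{B(x)}{x} + i\inner{C(x)}{x}$ is a $\CC$-linear combination of the two Gleason functions $x \mapsto \inner{B(x)}{x}$ and $x \mapsto \inner{C(x)}{x}$ for the finite Parseval frames (Theorem \ref{thm:gleapars}), of weights ${\rm tr}(B)$ and ${\rm tr}(C)$; by Proposition \ref{prop:prop1} it is therefore itself a Gleason function, of weight ${\rm tr}(B) + i\,{\rm tr}(C) = {\rm tr}(A) = W$.

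For the ``only if'' direction, let $g : B^d \to \CC$ be a bounded Gleason function of weight $W$ for the finite Parseval frames for $\HH$. First I would observe that ${\rm Re}\, g$ and ${\rm Im}\, g$ are each bounded, real-valued Gleason functions for the finite Parseval frames for $\HH$: for any such frame $\{x_j\}_{j=1}^N$, taking real and imaginary parts of $\sum_{j=1}^N g(x_j) = W$ yields $\sum_{j=1}^N {\rm Re}\, g(x_j) = {\rm Re}\, W$ and $\sum_{j=1}^N {\rm Im}\, g(x_j) = {\rm Im}\, W$, while boundedness of ${\rm Re}\, g$ and ${\rm Im}\, g$ is immediate from that of $g$. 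Applying Theorem \ref{thm:realmainthm} separately to ${\rm Re}\, g$ and ${\rm Im}\, g$ then produces self-adjoint operators $B, C : \HH \to \HH$ with ${\rm tr}(B) = {\rm Re}\, W$ and ${\rm tr}(C) = {\rm Im}\, W$ such that ${\rm Re}\, g(x) = \inner{B(x)}{x}$ and ${\rm Im}\, g(x) = \inner{C(x)}{x}$ for all $x \in B^d$.

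I would then set $A := B + iC$ and verify the two required conclusions. Using that the inner product is linear in its first argument (Definition \ref{defn:tensorket} {\it a}), one has $\inner{A(x)}{x} = \inner{B(x)}{x} + i\inner{C(x)}{x} = {\rm Re}\, g(x) + i\,{\rm Im}\, g(x) = g(x)$ for all $x \in B^d$, and by linearity of the trace ${\rm tr}(A) = {\rm tr}(B) + i\,{\rm tr}(C) = {\rm Re}\, W + i\,{\rm Im}\, W = W$. The only points demanding care are the inner-product convention---it is precisely linearity in the first slot that gives $\inner{iC(x)}{x} = i\inner{C(x)}{x}$ rather than its conjugate---and the fact that $A$ is in general \emph{not} self-adjoint, which is exactly why here the hypothesis on $g$ is merely boundedness and the conclusion asserts only a linear operator. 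I do not expect any substantive obstacle: all of the analytic depth resides in Theorem \ref{thm:realmainthm} (and, through it, in Gleason's theorem), and the present statement is a clean real/imaginary bookkeeping layered on top of that result.
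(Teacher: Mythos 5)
Your proposal is correct and follows essentially the same route as the paper's own proof: both directions reduce to the real case via $g = \operatorname{Re} g + i\operatorname{Im} g$, apply Theorem \ref{thm:realmainthm} to each part to obtain self-adjoint $B, C$, and set $A = B + iC$, with the ``if'' direction resting on Remark \ref{rem:normal} \emph{a}. Your added care about the first-slot linearity of the inner product and the non-self-adjointness of $A$ is consistent with the paper's conventions and introduces no discrepancy.
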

\begin{proof}
For the ``if'' implication see Remark \ref{rem:normal}$a$.

For the ``only if'' implication, observe that the real and imaginary parts of $g$ are 
themselves Gleason functions for the finite Parseval frames for $\HH$. Thus 
$g = u + iv$ where $u$ and $v$ are both real-valued Gleason functions for the finite 
Parseval frames for $\HH$. Since $|g| = \sqrt{u^2 + v^2}$ is bounded, so too are 
$u$ and $v$, and Theorem \ref{thm:realmainthm} applies. Hence there exist self-adjoint 
operators $B$ and $C$ such that $u(x) = \inner{B(x)}{x}$ and $v(x) = \inner{C(x)}{x}$ 
for all $x \in B^d$.  Let $A = B + iC$; then 
$g(x) = \inner{A(x)}{x}$ for all $x \in B^d$. Since $W$ is equal to the weight of $u$ 
plus $i$ times the weight of $v$, we have ${\rm tr}(A) = {\rm tr}(B) + i {\rm tr}(C) = W$.
\end{proof}

%%%%%%%%%%%%%%%%%%%%%
%%%%%%%%%%%%%%%%%%%%%

\section{Gleason functions of degree $N$}
\label{sec:glean}

%%%%%%%%%%%%%%%%%%%%%

\subsection{Inclusion theorem and a problem}
\label{sec:thmprob}

 Let ${\mathcal P}_N$ be the set of Parseval frames for $\HH = {\mathbb K}^d$,
for which each $P \in {\mathcal P}_N$ has $N \geq d$ elements.

\begin{defn}[Gleason functions of degree $N$]
\label{defn:gleadegn}
Let $\HH = \KK^d$. A function $g:B^d \longrightarrow {\mathbb K},\,B^d \subseteq \HH$, 
is a {\it Gleason function of degree $N$ and weight $W = W_{g,N} \in \KK$}  
for the set $\Pcal_N$ of Parseval frames for $\HH$
if 
\[
\forall X=\{x_j\}_{j=1}^N \in {\mathcal P}_N, \quad \sum_{j=1}^N\,g(x_j) = W.
\]
Also, ${\mathcal G}_N$ designates the set of bounded Gleason functions of degree $N$ 
and any weight.
\end{defn}

The proof of the following result is the same as that of Theorem \ref{thm:gleapars}.

\begin{thm}
\label{thm:sagleapars}
 Let $\HH = \KK^d$, and let $A :\HH \to \HH$ be a self-adjoint operator.
 The function $g: B^d \to \KK$, defined by the formula,
 \begin{equation}
\label{eq:gleabd}
       \forall x\in B^{d}, \quad g(x)=\langle A(x),x \rangle,
\end{equation}
is a Gleason function of degree $N$, for any $N \geq {\rm dim} \HH$, and weight $W = {\rm tr} (A)$ 
for the set $\Pcal_N$ of Parseval frames for $\HH$.
\end{thm}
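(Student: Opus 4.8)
The plan is to reuse the computation from the proof of Theorem~\ref{thm:gleapars} essentially verbatim, since nothing in that argument depended on the specific cardinality of the Parseval frame beyond its being finite; fixing the cardinality to exactly $N$ elements merely specializes the conclusion. First I would invoke the spectral theorem (Theorem~\ref{thm:spectral}) to diagonalize the self-adjoint operator $A$: there is an orthonormal eigenbasis $\{e_j\}_{j=1}^d$ for $\HH$ with real eigenvalues $\{\lambda_j\}_{j=1}^d$, so that $A(x) = \sum_{j=1}^d \lambda_j \inner{x}{e_j} e_j$ for every $x \in \HH$ and, by \eqref{eq:trace}, ${\rm tr}(A) = \sum_{j=1}^d \lambda_j$.

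Next, I would fix an arbitrary $X = \{x_n\}_{n=1}^N \in \Pcal_N$. The central step is to rewrite the trace by inserting the Parseval identity $\norm{e_j}^2 = \sum_{n=1}^N |\inner{x_n}{e_j}|^2$, which holds because $X$ is a Parseval frame for $\HH$ and each $e_j$ has unit norm. This gives
\[
{\rm tr}(A) = \sum_{j=1}^d \lambda_j = \sum_{j=1}^d \lambda_j \norm{e_j}^2 = \sum_{j=1}^d \lambda_j \sum_{n=1}^N |\inner{x_n}{e_j}|^2.
\]
Because both index sets are finite, I may freely interchange the order of summation and regroup, obtaining $\sum_{n=1}^N \sum_{j=1}^d \lambda_j \inner{e_j}{x_n}\inner{x_n}{e_j} = \sum_{n=1}^N \inner{A(x_n)}{x_n} = \sum_{n=1}^N g(x_n)$, where the middle equality uses $A(x_n) = \sum_{j=1}^d \lambda_j \inner{x_n}{e_j} e_j$. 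Hence $\sum_{n=1}^N g(x_n) = {\rm tr}(A)$ for every $X \in \Pcal_N$, which is exactly the assertion that $g$ is a Gleason function of degree $N$ and weight $W = {\rm tr}(A)$ for $\Pcal_N$.

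There is essentially no obstacle here: the only point requiring care is the interchange of the two summations, which is legitimate precisely because $N < \infty$ forces both index sets to be finite, so no convergence issue arises. Since the argument is uniform in $N$ — it uses only that $X$ has finitely many elements and is Parseval — it holds simultaneously for every $N \geq {\rm dim}\,\HH$, yielding the stated conclusion for the entire family $\{\Pcal_N\}_{N \geq d}$.
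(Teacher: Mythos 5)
Your proposal is correct and matches the paper exactly: the paper simply states that the proof of Theorem~\ref{thm:sagleapars} is the same as that of Theorem~\ref{thm:gleapars}, namely the spectral decomposition of $A$, the Parseval identity applied to each eigenvector $e_j$, and a reordering of the two finite sums. Your observation that the argument is uniform in $N$ is precisely why the paper does not repeat it.
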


\begin{example}[Gleason functions for ONBs and Parseval frames]
\label{ex:gauss}
 Let $\HH = \KK^d$. {\it a}. Clearly, the function $g: B^d \to \RR$ defined by 
$g(x) = e^{\norm{x}^2} - 1$ is constant on $S^{d-1}$ and therefore is a bounded Gleason 
function of weight $W = d(e - 1)$ for the ONBs for $\HH$.

{\it b.}
We shall show that $g$ is not a Gleason function of degree $N>d$ for the Parseval 
frames for $\HH$. To this end, let $\{x_j\}_{j=1}^N$ be an equi-normed Parseval frame 
for $\HH$, see Proposition \ref{prop:pars} {\it d} as well as part {\it c} below; and let 
$\{y_j\}$ be an ONB for $\HH$ with $N-d$ copies of the zero vector adjoined, so that 
it too is a Parseval frame for $\HH$. Then, we have
\[
   \sum_{j=1}^N g(x_j)=Ne^{\frac{d}{N}} - N \neq de^1 + (N-d)e^0 - N  = \sum_{j=1}^N g(y_j),
\]
where the inequality is clear. The fact that the sums are unequal proves that $g$ is not a 
Gleason function of degree $N>d$ for the Parseval frames for $\HH$.

{\it c.} The fact that there are equi-normed Parseval frames for $\HH$ having $N > d$ 
elements fits into the theory of harmonic frames \cite{wald2018}, \cite{AndBenDon2019}, 
which itself is part of the group frame theory mentioned in Definition \ref{defn:frame}. 
For an explicit calculation to show the existence of equi-normed Parseval frames for 
$\CC^d$, consider the $N \times N$ DFT matrix, let $c > 0$, and let 
$s: \{ 1, \dots, d \} \to \{ 1, \dots, N \}$ be strictly increasing. Consider the $N$ vectors 
$\{ x_m \}_{m=1}^N$,
\[
    x_m = c (e^{2 \pi i ms(1)/N}, \ldots, e^{2 \pi i ms(d)/N})\in \CC^d.    
\]
For $z = (z_1, \ldots, z_d) \in \CC^d$, we compute
\[
 \sum_{m=1}^N \, | \langle z, x_m \rangle |^2
  = c^2 \sum_{j,k=1}^d z_j \overline{z_k} \big(\sum_{m=1}^N e^{2\pi i m(s(k) - s(j))/N} \big)
 \]
 \[
     = c^2 \sum_{j \neq k} z_j \overline{z_k} \big(\sum_{m=1}^N e^{2\pi i m(s(k) - s(j))/N} \big)
     +  c^2 \sum_{j=1}^d |z_j|^2 (\sum_{m=1}^N 1) = N c^2 \norm{z}^2.
\]
Thus, $\{ x_m \}_{m=1}^N$ is a Parseval frame when  $c = 1/{\sqrt{N}}$,
and in this case we compute that $\norm{x_m} = \sqrt{d/N}$ for each $m$.
In particular, each $x_m \in B^d$ as asserted in Proposition \ref{prop:pars} {\it e.}
\end{example}

Example \ref{ex:gauss} leads to the following problem.

\noindent

{\bf Problem}
\label{prob:inclusion}
 {\it a.} Let $\HH = \KK^d$ and $N > d$. Note that
$\mathcal{G}_N\subseteq \mathcal{G}_{N-1}$. To see this,
let $g\in \mathcal{G}_N$ have weight $W_{g,N}$, and let
$\{ y_j \}_{j=1}^{N-1} \in \mathcal{P}_{N-1}$. Then,
$\{ x_j \}_{j=1}^N \in  \mathcal{P}_{N}$, where $x_j = y_j$ for $1 \leq j \leq N-1$
and $x_N = 0 \in \HH$, since
\[
   \forall x \in \KK^d, \quad \norm{x}^2 = \sum_{j=1}^{N-1}\, |\inner{x}{y_j}|^2
   =  \sum_{j=1}^{N}\, |\inner{x}{x_j}|^2.
\]
Thus,
\[
        \sum_{j=1}^{N-1} g(y_j)=\sum_{j=1}^{N-1} g(y_j) + g(0) - g(0)
        =\sum_{j=1}^{N} g(x_j) -  g(0) = W_{g,N} -  g(0),
\]
i.e., $g\in \mathcal{G}_{N-1}$ with weight $W_{g,N-1} = W_{g,N} - g(0)$.

 {\it b.} We also have for $N>d$ that
$\mathcal{G}_{N}\subsetneq \mathcal{G}_d$ due to Example \ref{ex:gauss}.
Therefore,
\[
    \forall N > d,\quad {\mathcal G}_{N+1} \subseteq {\mathcal G}_{N} \subseteq \cdots
    \subsetneq {\mathcal G}_d.
\]
The {\it problem} is to resolve if the inclusions are proper when $N > d$.

We shall prove Theorem \ref{thm:GNsolution}, which, when combined with part {\it a} of the Problem,
allows us to assert that
\begin{equation}
\label{eq:GNequality}
       \forall N \geq d + 2,\quad {\mathcal G}_{N+1} = {\mathcal G}_{N}.
\end{equation}

% DEAL WITH G(0) = 0 BELOW, I.E., GET G(0) = 0 AND USE THM 5.4 (5) STATEMENT.

\begin{thm}
\label{thm:GNsolution}
Let $\HH = \KK^d$, and assume $N \geq d+ 2$. Then, every bounded Gleason function $g$ 
of degree $N$ and weight $W$ for the set $\Pcal_N$ of Parseval frames for $\HH$ is also 
a Gleason function of degree $N + 1$ and weight $W + g(0)$ for the set $\Pcal_{N+1}$ of 
Parseval frames for $\HH$.
\end{thm}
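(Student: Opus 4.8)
The plan is to show that, once the constant $g(0)$ is subtracted off, $g$ is the restriction to $B^d$ of a quadratic form $\inner{A(x)}{x}$; the desired identity then collapses to a trace computation. The device that makes the fixed-degree hypothesis usable is a \emph{local additivity} property, which I would establish first: if $u,v,w,z \in B^d$ satisfy
\[
u \otimes u^* + v \otimes v^* = w \otimes w^* + z \otimes z^* \leq I,
\]
then $g(u) + g(v) = g(w) + g(z)$. Indeed, if $P$ denotes this common operator, then $I - P \in \Scal_+(\HH)$ has some rank $r \leq d$, so $I - P = \sum_{k=1}^{r} y_k \otimes y_k^*$; padding with $N - 2 - r$ copies of $0$ produces $N-2$ vectors completing both $\{u,v\}$ and $\{w,z\}$ to members of $\Pcal_N$. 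This padding is legitimate exactly because $N - 2 \geq d \geq r$, and this is the sole point at which the hypothesis $N \geq d+2$ enters. Subtracting the two degree-$N$ Gleason identities and cancelling the shared terms gives the claim.

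I would then normalise. The function $g - g(0)$ is a bounded Gleason function of degree $N$ and weight $W - N g(0)$ vanishing at the origin, and the theorem for $g$ is equivalent to the theorem for $g - g(0)$; so assume $g(0) = 0$. Local additivity now yields homogeneity \emph{without} changing the frame length, so Lemma \ref{lem:alphai} and Theorem \ref{thm:nonnegativehomog}, which vary the length, are unavailable and must be bypassed. Applying local additivity to the pairs $(\zeta x, 0)$ and $(x,0)$ with $|\zeta| = 1$ gives phase invariance $g(\zeta x) = g(x)$; applying it to the collinear pair $(r e, s e)$ and to $(\sqrt{r^2 + s^2}\, e,\, 0)$, for a unit vector $e$ and $r^2 + s^2 \leq 1$, shows that $G(t) := g(\sqrt{t}\, e)$ obeys the Cauchy equation $G(p) + G(q) = G(p + q)$ on $[0,1]$. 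Since $g$ is bounded, $G$ is bounded and hence linear, so $g(re) = r^2 g(e)$; together with phase invariance this is precisely $g(\alpha x) = |\alpha|^2 g(x)$ for all $x \in B^d$ and $|\alpha| \leq 1$.

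With homogeneity secured I would recover the quadratic form by the Naimark-projection argument of Theorem \ref{thm:nonnegativemainthm}, now run at the \emph{matching} length $N$ (note $N \geq d+2 \geq 3$). First, replacing $g$ by $g + \lambda \norm{x}^2$ with $\lambda = \sup_{B^d}|g|$ makes $g$ non-negative while keeping it a degree-$N$ Gleason function, since $\norm{x}^2 = \inner{I(x)}{x}$ is one by Theorem \ref{thm:sagleapars}. Identifying $\HH$ with the span of the first $d$ coordinates of $\KK^N$ and letting $P : \KK^N \to \HH$ be the coordinate projection, Proposition \ref{prop:frameprojection} shows that $\{P(e_i)\}_{i=1}^N$ is an $N$-element Parseval frame for $\HH$ for every ONB $\{e_i\}_{i=1}^N$ of $\KK^N$; hence $F(x) := g(P(x))$ is a non-negative Gleason function for the ONBs for $\KK^N$ of weight $W$. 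Gleason's Theorem \ref{thm:glea1} then produces a self-adjoint $B$ with $F = \inner{B(\cdot)}{\cdot}$ on $S^{N-1}$, and setting $A := PBP$ and invoking the homogeneity of the previous paragraph yields $g(x) = \inner{A(x)}{x}$ on $B^d$ with $A$ self-adjoint. The complex-valued case reduces to the real one by treating ${\rm Re}\,g$ and ${\rm Im}\,g$ separately, exactly as in Theorems \ref{thm:realmainthm} and \ref{thm:complexmainthm}.

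Finally, for any $\{x_j\}_{j=1}^{N+1} \in \Pcal_{N+1}$,
\[
\sum_{j=1}^{N+1} g(x_j) = \sum_{j=1}^{N+1} {\rm tr}\big(A\,(x_j \otimes x_j^*)\big) = {\rm tr}\Big( A \sum_{j=1}^{N+1} x_j \otimes x_j^* \Big) = {\rm tr}(A),
\]
and the same computation over any $N$-frame shows ${\rm tr}(A) = W$; undoing the normalisations $g \mapsto g - g(0)$ and $g \mapsto g + \lambda\norm{\cdot}^2$ restores the asserted weight $W + g(0)$. The main obstacle is the homogeneity step: the usual route to homogeneity relies on enlarging and shrinking frames and is therefore unavailable for a function of \emph{fixed} degree, so the whole argument hinges on the length-preserving local additivity property, and it is exactly the requirement that two vectors can always be traded for two others inside an $N$-element Parseval frame that forces $N \geq d+2$. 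A secondary nuisance is domain bookkeeping: the collinear and phase identities must be applied to vectors small enough to lie in $B^d$ with $u \otimes u^* + v \otimes v^* \leq I$, and homogeneity is what transports the conclusions across all of $B^d$.
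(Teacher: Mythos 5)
Your proposal is correct, and it reaches the key intermediate fact --- homogeneity $g(\alpha x)=|\alpha|^2 g(x)$ for the degree-$N$ function with $g(0)=0$ --- by a genuinely different route than the paper, before rejoining the paper's argument for the finish. The paper restricts $g$ to each one-dimensional subspace $\HH_1$, observes (by adjoining an ONB of $\HH_1^{\bot}$) that the restriction is a Gleason function of degree $N-d+1\geq 3$ for the Parseval frames for $\HH_1$, and then runs a purely one-dimensional program: Proposition \ref{prop:normsum}, the degree-extension Lemma \ref{lem:1dim}, an induction up through all degrees, and finally Theorems \ref{thm:realmainthm}/\ref{thm:complexmainthm} in dimension one to extract homogeneity (Lemma \ref{lem:1dimhomogeneous}). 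You instead prove a $d$-dimensional two-for-two trading lemma --- if $u\otimes u^*+v\otimes v^*=w\otimes w^*+z\otimes z^*\leq I$ then $g(u)+g(v)=g(w)+g(z)$, legitimate because $I-P$ has rank $r\leq d\leq N-2$ --- and read off phase invariance and the Cauchy equation for $G(t)=g(\sqrt{t}\,e)$ directly, concluding by boundedness. This is leaner: it bypasses Proposition \ref{prop:normsum} and both one-dimensional lemmas, and it isolates exactly where $N\geq d+2$ is used (your $N-2\geq\mathrm{rank}(I-P)$ versus the paper's $N-d+1\geq 3$ are the same inequality deployed differently). The paper's detour through dimension one buys reusable statements (Lemma \ref{lem:1dim} and its counterexample for $N=2$ in Example \ref{ex:counterex2}) and avoids having to invoke the regularity theory of the Cauchy equation on an interval; your route requires you to justify that a bounded solution of $G(p)+G(q)=G(p+q)$ on $[0,1]$ is linear, which is classical (extend $G$ additively to $\RR$ via $G(t)=kG(t/k)$ and apply the standard fact that an additive function bounded on an interval is linear, the same fact the paper records in Example \ref{ex:comp}\,{\it c}) but should be spelled out. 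From homogeneity onward --- the embedding of $\HH$ into $\KK^N$, projection of ONBs via Proposition \ref{prop:frameprojection}, Gleason's theorem at the matching length $N\geq 3$, and the trace bookkeeping that restores the weight $W+g(0)$ --- your argument coincides with the paper's Theorem \ref{thm:GNsolutionGof0}, up to your optional shift by $\lambda\norm{x}^2$ to use Theorem \ref{thm:glea1} where the paper uses Theorem \ref{thm:Gleasonbounded} directly.
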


%\begin{thm}
%Let $\HH = \KK^d$, $d \neq 2$. Then, for $N \geq d + 2$,
%every non-negative Gleason function of degree $N$ and weight $W$ for the Parseval frames for $\HH$
%is also a Gleason function of weight $W$ for the Parseval frames for $\HH$.
%\end{thm}

The case $N <  d$ is not important because no Parseval frames with fewer than $d$ elements 
exist for $\KK^d$.
In the case $N = d$, there exist Gleason functions of degree $N$ but not of degree $N + 1$, 
as shown in Example \ref{ex:gauss}. For $N = d + 1$ and $d = 1$, it is known that 
$\mathcal G_{N+1} \subsetneq \mathcal G_N$, see Example \ref{ex:counterex2}. It is unknown whether $\mathcal G_{d+2} \subsetneq \mathcal G_{d+1}$ for $d > 1$.

The proof of Theorem \ref{thm:GNsolution} will be given in Subsection \ref{sec:GNsolution}.

%%%%%%%%%%%%%%%%

\subsection{Proof of Theorem \ref{thm:GNsolution}}
\label{sec:GNsolution}

As in the proof of Theorem \ref{thm:nonnegativemainthm}, we first consider the behavior 
of the functions of interest along lines through the origin. We use results from this case, 
combined with Gleason's theorem, to prove a weak version of Theorem \ref{thm:GNsolution}. 
From there we extend the result to the full Theorem \ref{thm:GNsolution}. 

Specifically, besides 
the theory developed in Section \ref{sec:gleapars}, the flow chart for proving Theorem
\ref{thm:GNsolution} is the following.
Lemma \ref{lem:1dim} is proved using Proposition \ref{prop:normsum}, and this lemma is used 
to prove Lemma \ref{lem:1dimhomogeneous}. Both lemmas have the setting $\HH = \KK^1$.
Lemma \ref{lem:1dimhomogeneous} is used in the proof of Theorem \ref{thm:GNsolutionGof0},
along with Theorem \ref{thm:Gleasonbounded}, which is an extension of Gleason's original theorem.
Theorem \ref{thm:Gleasonbounded} is the aforementioned weak version, and routine
adjustments allow us to obtain Theorem \ref{thm:GNsolution}.

\begin{prop}
\label{prop:normsum}
Let $\HH = \KK^1 = \KK$, and let $X = \{ x_i \}_{i=1}^N \subseteq \HH$, $N \geq 1$.  
Then, $X$ is a Parseval frame for $\HH$ if and only if
\begin{equation}
\label{eq:sum1}
    \sum_{i=1}^N \norm{x_i}^2 = 1.
\end{equation}
\end{prop}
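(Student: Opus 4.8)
The plan is to reduce everything to the single scalar identity that makes the one-dimensional case transparent. In $\HH = \KK$ the inner product is just multiplication (with conjugation when $\KK = \CC$), so for every $y \in \KK$ and every $x_i$ one has $\inner{y}{x_i} = y\overline{x_i}$ and hence $|\inner{y}{x_i}|^2 = \norm{y}^2\norm{x_i}^2$. Summing over $i$ and pulling the common factor $\norm{y}^2$ out of the sum gives
\[
\sum_{i=1}^N |\inner{y}{x_i}|^2 = \norm{y}^2 \sum_{i=1}^N \norm{x_i}^2,
\]
which is the single computation on which the whole proposition turns.

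First I would treat the ``if'' direction: assuming $\sum_{i=1}^N \norm{x_i}^2 = 1$, the displayed identity immediately yields $\sum_{i=1}^N |\inner{y}{x_i}|^2 = \norm{y}^2$ for every $y \in \KK$, which is exactly the Parseval condition of Definition \ref{defn:framekd}, so $X$ is a Parseval frame. For the ``only if'' direction I would start from the Parseval condition and invoke the same identity to obtain $\norm{y}^2 \sum_{i=1}^N \norm{x_i}^2 = \norm{y}^2$ for all $y$; evaluating at any nonzero test vector (for instance $y = 1$) and dividing by $\norm{y}^2 \neq 0$ forces $\sum_{i=1}^N \norm{x_i}^2 = 1$.

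There is essentially no obstacle here: the entire content lies in the observation that one-dimensionality collapses the inner product into a product of norms, after which both implications are one-line consequences. The only point requiring the slightest care is to select a \emph{nonzero} test vector in the forward direction so that division by $\norm{y}^2$ is legitimate; the $y = 0$ case carries no information. Note also that the statement does not require the $x_i$ to be nonzero, which is consistent with our convention that a Parseval frame may contain copies of the zero vector, as already used in Proposition \ref{prop:prop2} and in the Problem of Subsection \ref{sec:thmprob}.
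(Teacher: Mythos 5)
Your proposal is correct and follows essentially the same route as the paper's proof: both rest on the one-dimensional identity $|\inner{y}{z}|^2 = \norm{y}^2\norm{z}^2$, use it to verify the Parseval condition directly in the ``if'' direction, and test against the unit vector $e_1 = 1$ in the ``only if'' direction. No substantive differences.
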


\begin{proof}
{\it i.} Let $e_1$ be the standard basis vector for $\HH$ and so $\norm{e_1} = 1$, and let
$y, z \in \HH$. Then, $y = y' e_1$ and $z = z' e_1$ for some $y', z' \in \KK$. Thus,
$\norm{y} = |y'|$ and $\norm{z} = |z'|$, and, hence,
\begin{equation}
\label{eq:1diminnerproduct}
    |\inner{y}{z}|^2 = |\inner{y' e_1}{z' e_1}|^2 = |y'|^2 |z'|^2 |\inner{e_1}{e_1}| = |y'|^2 |z'|^2
    = \norm{y}^2 \norm{z}^2.
\end{equation}

{\it ii.} Assume \eqref{eq:sum1}. Using \eqref{eq:1diminnerproduct} we verify the
Parseval frame condition as follows. Let $x \in \HH$ and calculate
\begin{equation*}
    \sum_{i=1}^N |\inner{x}{x_i}|^2 = \sum_{i=1}^N \norm{x}^2 \norm{x_i}^2 =
    \norm{x}^2 \sum_{i=1}^N \norm{x_i}^2 = \norm{x}^2.
\end{equation*}

{\it iii.} Assume $X$ is a Parseval frame for $\HH$. Since $ |\inner{e_1}{x_j}|^2
= \norm{e_1}^2 \norm{x_j}^2 = \norm{x_j}^2$ by \eqref{eq:1diminnerproduct}, we have
\[
    \sum_{j=1}^N  |\inner{e_1}{x_j}|^2 =  \sum_{j=1}^N \norm{x_j}^2;
\]
but by the Parseval assumption on $X$ the left side is $\norm{e_1}^2 = 1$,
and this gives \eqref{eq:sum1}.
\end{proof}

Using Proposition \ref{prop:normsum}, we can establish the 1-dimensional special case 
(where $g(0) = 0$) of Theorem \ref{thm:GNsolution}. The proof of the 1-dimensional case 
presented below does not depend crucially on the hypothesis $g(0) = 0$, assuming, of course, 
that the necessary changes to the statement are made. We include the hypothesis $g(0) = 0$ 
merely because it will be important in a future step of the proof of Theorem \ref{thm:GNsolution}.

\begin{lem}
 \label{lem:1dim}
Let $\HH = \KK^{1} = \KK$, and assume $g$ is a Gleason function of degree $N \geq 3$ 
and weight $W$ 
for the set $\Pcal_N$ of Parseval frames for $\HH$. Furthermore, assume 
that $g(0) = 0$. Then, $g$ is a Gleason function of degree $N + 1$ and weight $W$ for the set 
$\Pcal_{N+1}$ of Parseval frames  for $\HH$.
\end{lem}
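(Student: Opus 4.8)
The plan is to reduce everything, via Proposition \ref{prop:normsum}, to the numerical characterization that a sequence $\{x_i\}_{i=1}^M \subseteq \KK$ is a Parseval frame for $\HH = \KK$ precisely when $\sum_{i=1}^M |x_i|^2 = 1$. Under this dictionary, a Gleason function of degree $N$ and weight $W$ is exactly a function $g : B^1 \to \KK$ with $g(0)=0$ such that $\sum_{i=1}^N g(x_i)=W$ whenever $\sum_{i=1}^N |x_i|^2 = 1$, and the goal is to upgrade this to all $(N+1)$-element frames.

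The key step will be a \emph{merging lemma}: for all $a,b,c \in B^1$ with $|c|^2 = |a|^2 + |b|^2 \le 1$, one has $g(a)+g(b)=g(c)$. To prove it, set $w := (1 - |a|^2 - |b|^2)^{1/2} \in [0,1] \subseteq B^1$, the crucial ``slack'' element. Since $N \ge 3$, both
\[
\{a,\,b,\,w,\,\underbrace{0,\dots,0}_{N-3}\} \quad\text{and}\quad \{c,\,w,\,\underbrace{0,\dots,0}_{N-2}\}
\]
are $N$-element sequences whose squared norms sum to $1$, hence are Parseval frames for $\KK$ by Proposition \ref{prop:normsum}. Applying the degree-$N$ Gleason condition to each and using $g(0)=0$ gives $g(a)+g(b)+g(w)=W$ and $g(c)+g(w)=W$; subtracting yields $g(a)+g(b)=g(c)$. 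Note that the special case $b=0$ already records that $g$ depends only on $|x|$, since $g(0)=0$ then forces $g(a)=g(c)$ whenever $|a|=|c|$.

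With the merging lemma in hand, the theorem follows immediately. Given any $\{x_j\}_{j=1}^{N+1} \subseteq \KK$ with $\sum_{j=1}^{N+1}|x_j|^2 = 1$, put $y := (|x_N|^2 + |x_{N+1}|^2)^{1/2}$, which lies in $B^1$ because $|x_N|^2 + |x_{N+1}|^2 \le 1$. Then $\{x_1,\dots,x_{N-1},y\}$ has squared norms summing to $1$ and so is a Parseval frame of degree $N$, whence $\sum_{j=1}^{N-1} g(x_j) + g(y) = W$. The merging lemma gives $g(y) = g(x_N)+g(x_{N+1})$, and substituting produces $\sum_{j=1}^{N+1} g(x_j) = W$, as required.

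The main obstacle is isolating the right auxiliary identity; once one sees that the degree-$N$ hypothesis together with $g(0)=0$ is strong enough to prove the merging lemma, the remainder is bookkeeping. The role of the hypothesis $N \ge 3$ is precisely to leave room for the three genuinely used slots $a,b,w$ (equivalently $c,w$ together with one padding zero) inside an $N$-element frame; for $N=2$ there is no slack slot, and the argument — and indeed the conclusion itself — can fail, consistent with the strict inclusion recorded for $N=d+1=2$.
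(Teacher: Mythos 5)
Your proof is correct and follows essentially the same route as the paper: both arguments use Proposition \ref{prop:normsum} to reduce to the numerical condition $\sum |x_i|^2 = 1$, establish the same ``merging'' identity $g(a)+g(b)=g(c)$ for $|c|^2=|a|^2+|b|^2$ by comparing two padded $N$-element Parseval frames sharing the slack element $w=(1-|a|^2-|b|^2)^{1/2}$, and then apply it to collapse two entries of an $(N+1)$-element frame into one. Isolating the merging step as a standalone lemma is a mild repackaging, not a different argument.
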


\begin{proof}
Let $X = \{ x_i \}_{i=1}^{N+1}$ be a Parseval frame for $\HH$, and let $e_1$ be the standard 
basis vector for $\HH$ and so $\norm{e_1} = 1$. Since $\{ x_i \}_{i=1}^{N+1}$ is a Parseval 
frame for $\HH$, each $x_i$ has $\norm{x_i} \leq 1$. Also, $1 - \norm{x_1}^2 - \norm{x_2}^2 \geq 0$ 
by Proposition \ref{prop:normsum}. Further, and also by Proposition \ref{prop:normsum}, 
we see that
\begin{equation*}
    \Big\{ x_1, x_2, \sqrt{1 - \norm{x_1}^2 - \norm{x_2}^2} \cdot e_1 \Big\}
\end{equation*}
is a 3-element Parseval frame for $\HH$.
Similarly,
\[
    \Big\{ \sqrt{\norm{x_1}^2 + \norm{x_2}^2} \cdot y, \sqrt{1 - \norm{x_1}^2 - \norm{x_2}^2} \cdot e_1, 0 \Big\}
\]
is a 3-element Parseval frame for $\HH$. By appending $N - 3$ copies of the $0$-vector
to these two sequences, we obtain $N$-element Parseval frames for $\HH$. Because $g$ is a 
Gleason function of degree $N$ and weight $W$ for the Parseval frames for $\HH$, we have
\[
    g(x_1) + g(x_2) + g\Big(\sqrt{1 - \norm{x_1}^2 - \norm{x_2}^2} \cdot e_1\Big) + (N - 3) g(0) = W
\]
and
\[
    g\Big(\sqrt{\norm{x_1}^2 + \norm{x_2}^2} \cdot e_1\Big) + g\Big(\sqrt{1 - \norm{x_1}^2 - \norm{x_2}^2} \cdot e_1\Big) + (N - 2) g(0) = W,
\]
so that
\begin{equation}
\label{eq:x1x2}
    g(x_1) + g(x_2) = g\Big(\sqrt{\norm{x_1}^2 + \norm{x_2}^2} \cdot e_1\Big).
\end{equation}

Also note that
\[
    \Big\{\sqrt{\norm{x_1}^2 + \norm{x_2}^2} \cdot e_1 \Big\} \cup \{ x_i \}_{i=3}^{N+1}
\]
is an $N$-element Parseval frame for $\HH$ by Proposition \ref{prop:normsum}, since
\[
    \norm{\sqrt{\norm{x_1}^2 + \norm{x_2}^2} \cdot e_1}^2 = \norm{x_1}^2 + \norm{x_2}^2.
\]
Thus, using \eqref{eq:x1x2}, we obtain
\begin{equation*}
    \sum_{i=1}^{N+1} g(x_i) = g\Big(\sqrt{\norm{x_1}^2 + \norm{x_2}^2} \cdot e_1\Big) + \sum_{i=3}^{N+1} g(x_i) = W,
\end{equation*}
where the second equality is a consequence of our assumption on $g$. Since $\{ x_i \}_{i=1}^{N+1}$ was an arbitrary $(N + 1)$-element Parseval frame for $\KK^1$, it follows that $g$ is a Gleason function of degree $N + 1$ and weight $W$ for the Parseval frames $\Pcal_{N+1}$ for $\HH$.
\end{proof}

\begin{example}[Lemma \ref{lem:1dim} and $N=2$]
\label{ex:counterex2}
Lemma \ref{lem:1dim} is false for $N = 2$ and counterexamples are not difficult to construct,
as we now illustrate. Let $\mathbb{H} = \KK^1$.

{\it a.}  Choose an arbitrary $\epsilon \in (0, 1/3)$. 
Define the function $g: B^1 \to \KK$ by
\[
g(x) = \begin{cases}
\norm{x}^2,  & \norm{x}^2 \not\in \{ \epsilon, 1 - \epsilon \}, \\
1 - \epsilon, & \norm{x}^2 = \epsilon \\
\epsilon, & \norm{x}^2 = 1 - \epsilon.
\end{cases}
\]
This $g$ is a Gleason function of degree $2$ and weight $1$ for the 
set $\Pcal_2$ of Parseval frames for
$\HH$. To see this, suppose $\{ x_1, x_2 \}$ is a Parseval frame for $\HH$.

If either of $\norm{x_1}^2$ or $\norm{x_2}^2$ is 
one of the elements of the set $\{ \epsilon, 1 - \epsilon \}$, then the other squared norm must be 
the other element of the set $\{ \epsilon, 1 - \epsilon \}$ by  Proposition \ref{prop:normsum}.
Thus, if $\norm{x_1}^2 = \epsilon$, then
$g(x_1) = 1 - \epsilon$ and $g(x_2) =\epsilon$, yielding $g(x_1) + g(x_2) = 1$, 
which is the desired Gleason function property, with a similar calculation when 
$\norm{x_1}^2 = 1 - \epsilon$.
 
If 
$\norm{x_1}^2 \not\in \{ \epsilon, 1 - \epsilon \}$, then 
$\norm{x_2}^2 \not\in \{ \epsilon, 1 - \epsilon \}$ by Proposition \ref{prop:normsum}. Consider the parabola 
$h(x) := \langle I(x), x \rangle$ defined on $B^d$,
noting that the trace of the identity mapping $I$ is $1$.
We have $h(x_1) + h(x_2) = 1$ by Theorem \ref{thm:gleapars}.
On the other hand, $g$ coincides with $\inner{x}{x}$ at $x_1$ and $x_2$, 
and hence $g(x_1) + g(x_2) = h(x_1) + h(x_2) = 1$, which again is the desired Gleason function property.

{\it b.} However, $g$ is not a Gleason function of degree $N > 2$ and any weight for the Parseval
frames for $\HH$. 

To see this, let $e_1$ be the standard basis vector for $\HH$, so that 
$\norm{e_1} = 1$. Given $N \geq 2$, one can construct an $N$-element Parseval 
frame $\{ x_i \}_{i=1}^N$ for $\HH$ by setting 
$x_1 = e_1$ and $x_2 = x_3 = \dots = x_N = 0$. This is a Parseval frame for $\HH$ by Proposition \ref{prop:normsum}. We have
\[
      \sum_{i=1}^N g(x_i) = g(x_1) + (N - 1)g(0) = 1.
\]
However, we can also construct an $N$-element Parseval frame $\{ y_i \}_{i=1}^N$ for
$\HH$ by setting $y_1 = y_2 = \sqrt{\epsilon} \cdot e_1$, $y_3 = \sqrt{1 - 2\epsilon} \cdot e_1$, 
and $y_4 = y_5 = \dots = y_N = 0$. This is a Parseval frame for $\HH$ by Proposition \ref{prop:normsum}. 
Observe that $1 - 2\epsilon \not\in \{ \epsilon, 1 - \epsilon\}$ since $\epsilon \in (0, 1/3)$. Hence,
\[
\sum_{i=1}^N g(y_i) = g(y_1) + g(y_2) + g(y_3) + (N - 3) g(0) = 2(1 - \epsilon) + (1 - 2\epsilon) + 0 = 3 - 4\epsilon.
\]
Since $3 - 4\epsilon > 5 / 3 > 1$, $g$ cannot be a Gleason function of degree $N > 2$ and any
weight for the Parseval frames for $\HH$.

\end{example}

The following lemma is elementary to prove given Lemma \ref{lem:1dim}, and it is 
crucial for the next step in the proof of Theorem \ref{thm:GNsolution}.

\begin{lem} 
\label{lem:1dimhomogeneous}
Let $\HH = \KK^1$, and let $g: B^1 \to \KK$ be a Gleason function of degree $N \geq 3$ 
and weight $W$ 
for the set $\Pcal_N$ of Parseval frames for $\HH$. Assume that $g$ is bounded 
and $g(0) = 0$. Then,
\[
\forall \alpha \in \KK, \; |\alpha| \leq 1, \; {\rm and } \; \forall x \in B^1,
\hspace{2em} g(\alpha x) = |\alpha|^2 g(x).
\]
\end{lem}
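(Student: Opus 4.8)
The plan is to reduce the statement to the fact that a bounded additive function on $[0,1]$ is linear, the reduction being carried by the single-direction identity already extracted inside the proof of Lemma \ref{lem:1dim}. First I would reuse exactly the computation giving \eqref{eq:x1x2}: for any $x_1, x_2 \in B^1$ with $\norm{x_1}^2 + \norm{x_2}^2 \leq 1$, padding a three-element Parseval frame with $N-3$ (resp. $N-2$) copies of $0$ produces two admissible $N$-element Parseval frames (this is where $N \geq 3$ enters), and subtracting the two weight-$W$ relations, with $g(0) = 0$ cancelling the padding terms, yields
\[
g(x_1) + g(x_2) = g\Big(\sqrt{\norm{x_1}^2 + \norm{x_2}^2}\, e_1\Big).
\]
Taking $x_2 = 0$ in this identity gives $g(x) = g(\norm{x}\, e_1)$ for every $x \in B^1$, so $g$ depends only on $\norm{x}$, and in particular the values of $g$ on all of $B^1$ are governed by its values along the ray $\RR_{\geq 0}\, e_1$.

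Next I would encode this by the one-variable function $\phi : [0,1] \to \KK$, $\phi(t) := g(\sqrt{t}\, e_1)$, so that $\phi(0) = g(0) = 0$ and $g(x) = \phi(\norm{x}^2)$. Substituting $x_1 = \sqrt{s}\, e_1$ and $x_2 = \sqrt{t}\, e_1$ into the displayed identity turns it into Cauchy's functional equation $\phi(s) + \phi(t) = \phi(s+t)$, valid for all $s, t \geq 0$ with $s + t \leq 1$; thus $\phi$ is additive on $[0,1]$. Iterating this gives $\phi(t) = n\,\phi(t/n)$ and hence the rational homogeneity $\phi(\tfrac{m}{n} t) = \tfrac{m}{n}\,\phi(t)$ whenever $m \leq n$, so $\phi$ already agrees with $t \mapsto \phi(1)\, t$ on every rational point of $[0,1]$. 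Since $g$ is bounded, so is $\phi$.

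The decisive step is then to promote additivity plus boundedness to linearity at the irrational points. Treating ${\rm Re}\,\phi$ and ${\rm Im}\,\phi$ separately when $\KK = \CC$, each is a bounded real additive function on an interval, and such a function is of the form $t \mapsto ct$; this is precisely the bounded (equivalently, Lebesgue measurable) case of the homomorphism rigidity already recorded in Example \ref{ex:comp}{\it c} and Proposition \ref{prop:disc}{\it a}, the point being that a pathological additive solution has dense graph and so cannot be bounded on any interval. Consequently $\phi(t) = \phi(1)\, t$ on all of $[0,1]$, and for $x \in B^1$ with $|\alpha| \leq 1$ I would conclude
\[
g(\alpha x) = \phi(\norm{\alpha x}^2) = \phi(|\alpha|^2 \norm{x}^2) = |\alpha|^2 \phi(\norm{x}^2) = |\alpha|^2 g(x),
\]
as desired. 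I expect the only genuine obstacle to be this last passage: one must legitimately move from a \emph{partially} additive, bounded, $\KK$-valued $\phi$ on the interval $[0,1]$ to a linear one, which is handled by splitting into real and imaginary parts and invoking the standard structure theorem for bounded additive functions (alternatively, by noting that $\psi(t) := \phi(t) - \phi(1)t$ is bounded, additive, and vanishes on $\QQ \cap [0,1]$, hence is identically zero). Everything preceding it is the elementary frame bookkeeping furnished by Proposition \ref{prop:normsum} and \eqref{eq:x1x2}.
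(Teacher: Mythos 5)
Your argument is correct, but it is a genuinely different route from the paper's. The paper proves Lemma~\ref{lem:1dimhomogeneous} by first running an induction on Lemma~\ref{lem:1dim} to conclude that $g$ is a Gleason function for \emph{all} finite Parseval frames for $\KK^1$, and then invoking Theorem~\ref{thm:realmainthm} or Theorem~\ref{thm:complexmainthm} to write $g(x) = \inner{A(x)}{x}$, from which the homogeneity is immediate; since those theorems rest on Theorem~\ref{thm:nonnegativemainthm}, which embeds into $\KK^3$ and appeals to Gleason's theorem, the paper's proof of this one-dimensional lemma ultimately routes through the deep non-negative case of Gleason's theorem in dimension $3$. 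You instead stay entirely inside $\KK^1$: the padding computation behind \eqref{eq:x1x2} (valid for any $x_1, x_2$ with $\norm{x_1}^2 + \norm{x_2}^2 \leq 1$, which is where $N \geq 3$ and $g(0)=0$ are used) shows that $\phi(t) := g(\sqrt{t}\,e_1)$ satisfies Cauchy's equation on the simplex $\{s,t \geq 0,\ s+t \leq 1\}$ and that $g(x) = \phi(\norm{x}^2)$, and then boundedness forces $\phi(t) = \phi(1)t$ by the classical rigidity for additive functions that the paper itself records in Example~\ref{ex:comp}\,{\it c} and Proposition~\ref{prop:disc}. The one point that deserves a sentence of care is exactly the one you flag: $\phi$ is only \emph{partially} additive, so the textbook statement for additive functions on $\RR$ does not apply verbatim. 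Your fallback argument does close this: if $\psi := \phi - \phi(1)t$ had $\psi(t_0) = c \neq 0$, then subtracting a nearby rational leaves $\psi$ unchanged while making the argument arbitrarily small, and iterated addition then produces values $kc$ for arbitrarily large $k$, contradicting boundedness. What your approach buys is a completely elementary, self-contained proof of the lemma that depends only on Proposition~\ref{prop:normsum}, with no appeal to Gleason's theorem or to the machinery of Section~\ref{sec:gleapars}; what the paper's approach buys is brevity, since it reuses results already established.
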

\begin{proof}
An induction based on Lemma \ref{lem:1dim} implies that $g$ is in fact a Gleason function 
of weight $W$ for \emph{all} finite Parseval frames for $\HH$. Then, 
Theorem \ref{thm:realmainthm} or 
Theorem \ref{thm:complexmainthm}, according to whether $\KK = \RR$ or $\CC$, respectively, 
implies that there is a linear operator $A: \HH \to \HH$ such that $g(x) = \inner{A(x)}{x}$ for 
all $x \in B^1$. Then, for $\alpha \in \KK$ with $|\alpha| < 1$,
\[
g(\alpha x) = \inner{A(\alpha x)}{\alpha x} = |\alpha|^2 \inner{A(x)}{x} = |\alpha|^2 g(x).
\]
\end{proof}

We now prove the special case of Theorem \ref{thm:GNsolution}, where $g(0) = 0$ 
and $g$ is bounded. 
The proof is similar to that of Theorem \ref{thm:nonnegativemainthm}.

\begin{thm} 
\label{thm:GNsolutionGof0}
Let $\HH = \KK^d$, and assume $N \geq d + 2$. Let $g$ be a Gleason function of degree $N$ 
and weight $W$ for the set $\Pcal_N$ of Parseval frames for $\HH$. Assume that $g$ is bounded 
and that $g(0) = 0$. Then, $g$ is a Gleason function of weight $W$ for all the  
Parseval frames for $\HH$.
\end{thm}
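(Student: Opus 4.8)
The plan is to show directly that $g$ agrees with a quadratic form $\inner{A(\cdot)}{\cdot}$ on all of $B^d$, after which Theorem~\ref{thm:gleapars} (together with Remark~\ref{rem:normal}$a$ in the complex case) immediately yields that $g$ is a Gleason function of weight $\mathrm{tr}(A)$ for \emph{all} Parseval frames. The overall architecture mirrors the proof of Theorem~\ref{thm:nonnegativemainthm}, but with two substitutions forced by the hypotheses: since $g$ is only bounded rather than non-negative, I would invoke the bounded extension of Gleason's theorem, Theorem~\ref{thm:Gleasonbounded}, in place of Theorem~\ref{thm:saimpliesglea}; and since Theorem~\ref{thm:nonnegativehomog} is unavailable, I would derive the needed homogeneity relation $g(\alpha x)=|\alpha|^2 g(x)$ from the one-dimensional Lemma~\ref{lem:1dimhomogeneous}.

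First I would establish homogeneity. Fix a unit vector $v\in S^{d-1}$ and an ONB $\{u_1,\dots,u_{d-1}\}$ of $v^{\perp}$, and set $N':=N-d+1$, so that $N'\geq 3$ precisely because $N\geq d+2$. Given any Parseval frame $\{t_i\}_{i=1}^{N'}$ for $\KK^1$, Proposition~\ref{prop:normsum} gives $\sum_i |t_i|^2=1$, and a direct computation shows that $\{t_i v\}_{i=1}^{N'}\cup\{u_j\}_{j=1}^{d-1}$ is a Parseval frame for $\KK^d$ with exactly $N$ elements. Applying the degree-$N$ Gleason property of $g$ to this frame and defining $h(t):=g(tv)$ on $B^1$, the term $\sum_j g(u_j)$ is independent of the $t_i$, so $\sum_i h(t_i)$ is constant. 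Hence $h$ is a bounded Gleason function of degree $N'\geq 3$ for $\KK^1$ with $h(0)=0$, and Lemma~\ref{lem:1dimhomogeneous} gives $h(\alpha)=|\alpha|^2 h(1)$, i.e. $g(\alpha v)=|\alpha|^2 g(v)$ for all $|\alpha|\leq 1$. Writing an arbitrary $x\in B^d\setminus\{0\}$ as $x=\norm{x}\,v$ with $v=x/\norm{x}$, this yields $g(\alpha x)=|\alpha|^2 g(x)$ for every $x\in B^d$ and $|\alpha|\leq 1$.

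With homogeneity in hand, I would run the projection argument. Embed $\KK^d$ as the first $d$ coordinates of $\KK^N$, let $P\colon\KK^N\to\KK^N$ be the orthogonal projection onto this subspace, and define $F(x):=g(P(x))$ on $S^{N-1}$. For any ONB $\{e_i\}_{i=1}^N$ of $\KK^N$, Proposition~\ref{prop:frameprojection} shows $\{P(e_i)\}_{i=1}^N$ is an $N$-element Parseval frame for $\KK^d$, so $\sum_i F(e_i)=\sum_i g(P(e_i))=W$; thus $F$ is a bounded Gleason function of weight $W$ for the ONBs of $\KK^N$. Since $N\geq 3$, Theorem~\ref{thm:Gleasonbounded} produces a bounded linear operator $B\colon\KK^N\to\KK^N$ (self-adjoint when $g$ is real-valued) with $F(x)=\inner{B(x)}{x}$ on $S^{N-1}$. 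Set $A:=PBP$, viewed as an operator on $\KK^d$. For $y\in S^{d-1}$ one has $P(y)=y$, whence $g(y)=F(y)=\inner{B(y)}{y}=\inner{A(y)}{y}$; and homogeneity upgrades this to $g(x)=\norm{x}^2 g(x/\norm{x})=\inner{A(x)}{x}$ for every nonzero $x\in B^d$, with the case $x=0$ following from $g(0)=0$.

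Finally, Theorem~\ref{thm:gleapars} (or Remark~\ref{rem:normal}$a$ when $A$ is merely linear over $\CC$) shows $\inner{A(\cdot)}{\cdot}$ is a Gleason function of weight $\mathrm{tr}(A)$ for all Parseval frames for $\KK^d$; comparing weights on any $N$-element frame forces $\mathrm{tr}(A)=W$, completing the proof. I expect the main obstacle to be the homogeneity step: the essential trick is the bookkeeping that a one-dimensional Parseval frame of $N-d+1$ vectors, once scaled along $v$ and completed by a fixed orthonormal basis of $v^{\perp}$, becomes a $d$-dimensional Parseval frame of exactly $N$ vectors. This is the unique point at which the hypothesis $N\geq d+2$ is consumed, through the requirement $N'=N-d+1\geq 3$ needed to invoke Lemma~\ref{lem:1dimhomogeneous}.
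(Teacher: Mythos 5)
Your proposal is correct and follows essentially the same route as the paper's proof: restricting $g$ to one-dimensional subspaces (realized via $h(t)=g(tv)$) to obtain homogeneity from Lemma~\ref{lem:1dimhomogeneous}, then embedding $\KK^d$ in $\KK^N$ and applying Proposition~\ref{prop:frameprojection} together with Theorem~\ref{thm:Gleasonbounded} to identify $g$ with a quadratic form, and finally invoking Theorem~\ref{thm:gleapars}. The only negligible difference is your closing remark that $N\geq d+2$ is consumed solely through $N'=N-d+1\geq 3$; the paper also uses $N\geq 3$ to apply Theorem~\ref{thm:Gleasonbounded}, though that is already implied by $N\geq d+2$ for $d\geq 1$.
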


\begin{proof}
{\it i.}
Let $\HH_1$ be a one-dimensional subspace of $\HH$. Then, the restriction of $g$ 
	to $B^d \cap \HH_1$ is a bounded Gleason function of degree $N - d + 1 \geq 3$ and some 
	weight $W_1$ for the set $\Pcal_{N-d+1}$ of Parseval frames for $\HH_1$.
	
	To see this, let $\{ y_i \}_{i=1}^{d-1}$ be an 
	ONB for $(\HH_1)^\bot$. If $\{ x_j \}_{j=1}^{N - d + 1}$ is a Parseval frame for $\HH_1$, then 
	$\{ x_j \}_{j=1}^{N - d + 1} \cup \{ y_i \}_{i=1}^{d-1}$ is a Parseval frame 
	for $\HH$. 
	In fact, letting $P_1$ denote the orthogonal projection onto $\HH_1$ and $P_2$ 
	denote the orthogonal projection onto $(\HH_1)^\bot$, we have
	\begin{align*}
	\norm{x}^2 = \inner{x}{x} &= \inner{P_1 (x )+ P_2 (x)}{P_1 (x) + P_2 (x)} \\
	&= \norm{P_1 (x)}^2 + \inner{P_1 (x)}{P_2 (x)} + \inner{P_2 (x)}{P_1 (x)} + \norm{P_2(x)}^2 \\
	&= \norm{P_1 (x)}^2 + \norm{P_2 (x)}^2.
	\end{align*}
Thus, we compute
	\begin{align*}
	\norm{x}^2 = \norm{P_1 (x)}^2 + \norm{P_2 (x)}^2 &= \sum_{j=1}^{N - d + 1} |\inner{P_1 (x)}{x_j}|^2 
	+ \sum_{i=1}^{d-1} |\inner{P_2 (x)}{y_i}|^2 \\
	&= \sum_{j=1}^{N - d + 1} |\inner{x}{x_j}|^2 + \sum_{i=1}^{d-1} |\inner{x}{y_i}|^2,
	\end{align*}
	because $\{ x_j \}_{j=1}^{N - d + 1}$ is a Parseval frame for $\HH_1$, $\{ y_i \}_{i=1}^{d-1}$ 
	is an ONB for $(\HH_1)^\bot$, and $P_1$ and $P_2$ are self-adjoint. 
	By our assumption on $g$, we obtain
	\begin{equation*}
	\sum_{j=1}^{N - d + 1} g(x_j) + \sum_{i=1}^{d - 1} g(y_i) = W.
	\end{equation*}
	Therefore, $\sum_j g(x_j) = W - \sum_i g(y_i)$. Setting $W_1 := W - \sum_i g(y_i)$, the claim follows.

{\it ii.} Now let $x \in S^{d-1}$. Take $\HH_1 = \langle x \rangle$ in part {\it i.} Then, 
Lemma \ref{lem:1dimhomogeneous} gives 
\begin{equation}
\label{eqn:homog}
g(\alpha x) = |\alpha|^2 g(x)
\end{equation}
for any $\alpha \in \KK$ with $|\alpha| \leq 1$.

{\it iii.} Next, view $\HH$ as the subspace of the larger space $\KK^N$ spanned by the first $d$ 
standard basis vectors. Let $P: \KK^N \to \HH$ be the projection onto the first $d$ coordinates, 
and let $F(x) = g(P(x))$ for all $x \in S^{N-1}$. If $\{ e_i \}_{i=1}^N$ is any ONB for $\KK^N$, then
Proposition \ref{prop:frameprojection} implies that $\{ P (e_i )\}_{i=1}^N$ is a Parseval frame for $\HH$. 
Thus, we have
\[
\sum_{i=1}^N F(e_i) = \sum_{i=1}^N g(P (e_i)) = W.
\]
Since $\{ e_i \}_{i=1}^N$ was an arbitrary ONB for $\KK^N$, $F$ is a bounded
Gleason function for the ONBs for 
$\KK^N$. If $d = 0$ then the theorem holds trivially, so we may assume $d > 0$. Thus, 
$N = d + 2 \geq 3$, so that Theorem \ref{thm:Gleasonbounded} gives a linear operator 
$A: \HH \to \HH$ such that $F(x) = \inner{A(x)}{x}$ for all $x \in S^{N-1}$. In particular, 
for $x \in S^{d-1}$, we have
\[
g(x) = g(P(x)) = F(x) = \inner{A(x)}{x}.
\]
For any $y \in B^d$, we have either $y = 0$ (in which case $g(y) = 0 = \inner{A(y)}{y}$) or 
$0 < \norm{y} \leq 1$. In the latter case, we have
\[
g(y) = g\Big(\norm{y} \cdot \frac{y}{\norm{y}}\Big) = \norm{y}^2 g\Big(\frac{y}{\norm{y}}\Big) 
= \norm{y}^2 \Big\langle A \Big(\frac{y}{\norm{y}}\Big), \frac{y}{\norm{y}} \Big\rangle = \inner{A(y)}{y}.
\]
By Theorem \ref{thm:gleapars} and Remark \ref{rem:normal} {\it b}, we see that $g$ is a 
Gleason function for all the Parseval frames for $\HH$ (not just the finite ones).
\end{proof}

Once these facts have been established, the proof of Theorem \ref{thm:GNsolution} is straightforward.

\begin{proof}[Proof of Theorem \ref{thm:GNsolution}.]
Observe that $f: B^d \to \KK$, defined by $f(x) := g(x) - g(0)$, is a bounded Gleason function of 
degree $N$ and weight $W - N g(0)$ for the set $\Pcal_N$ of Parseval frames for $\HH$. Indeed, 
if $\{ x_j \}_{j=1}^N$ is a Parseval frame for $\HH$, then
\[
\sum_{j=1}^N f(x_j) = \Big(\sum_{j=1}^N g(x_j)\Big) - N g(0) = W - N g(0).
\]
Furthermore, $f(0) = 0$. Hence Theorem \ref{thm:GNsolutionGof0} implies that $f$ is a Gleason 
function of weight $W - N g(0)$ for the finite Parseval frames for $\HH$. In particular, for any 
$(N + 1)$-element Parseval frame $\{ y_i \}_{i=1}^{N+1}$ for $\HH$, we have
\[
\sum_{i=1}^{N+1} g(y_i) = \Big(\sum_{i=1}^{N+1} f(y_i)\Big) + (N + 1) g(0) 
= W - N g(0) + (N + 1) g(0) = W + g(0).
\]
Thus, $g$ is a Gleason function of degree $N + 1$ and weight $W + g(0)$ for the set 
$\Pcal_{N+1}$ of Parseval frames for $\HH$.
\end{proof}

%%%%%%%%%%%%%%%%%%%
%%%%%%%%%%%%%%%%%%%%%%%%%%%%%%%

\section{An application of Gleason functions}
\label{sec:appl}

Theorem \ref{thm:GNsolution} has an application in quantum measurement
with regard to the theory developed by Busch in \cite{busc2003}. To see this, let us begin
with Definition \ref{def:genprob} taken from \cite{busc2003}.
The set $\mathcal{E}(\HH)$ of operators on $\HH$ was defined in Definition \ref{defn:povmsaop}.

\begin{defn} 
	\label{def:genprob}
	A {\it generalized probability measure} on 
	$\mathcal{E}(\HH)$ is a function, $v\colon \mathcal{E}(\HH) \to \RR$, with the following properties:
	\begin{enumerate}
		\item $0 \leq v(E) \leq 1$ for all $E \in \mathcal{E}(\HH)$;
		\item $v(I) = 1$;
		\item $v(\sum_{j \in J} E_j) = \sum_{j \in J} v(E_j)$ for all countable indexed families $\{ E_j \}_{j \in J} \subset \mathcal{E}(\HH)$ for which
		$\sum_{j \in J} E_j \in \mathcal{E}(\HH)$.
	\end{enumerate}
\end{defn}

Busch characterized the generalized probability measures on Hilbert spaces of the type 
encountered in quantum mechanics as follows. 

\begin{thm}[Busch]
	\label{thm:busch}
	Let $\HH$ be a separable complex Hilbert space, and let $v$ be a generalized probability measure on $\mathcal{E}(\HH)$. Then, there exists a density 
	operator $\rho$ on $\HH$ such that $v(E) = {\rm tr}(\rho E)$ for all $E \in \mathcal{E}(\HH)$.
    (Recall that a density operator is a positive semi-definite trace class operator with trace $1$.)
\end{thm}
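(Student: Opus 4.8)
The plan is to transfer the problem from the effect algebra $\Ecal(\HH)$ to the setting of Gleason functions for Parseval frames, where the converse direction of our theory (Theorem \ref{thm:nonnegativemainthm}) can be invoked. First I would restrict the generalized probability measure $v$ to the rank-one effects: for each $x$ in the closed unit ball $B^d$, Lemma \ref{lem:sapossd} together with the estimate $\inner{(x \otimes x^*)(y)}{y} = |\inner{y}{x}|^2 \leq \norm{x}^2 \norm{y}^2 \leq \norm{y}^2$ shows that $x \otimes x^* \in \Ecal(\HH)$, so $g(x) := v(x \otimes x^*)$ is a well-defined function $g : B^d \to \RR$. Axiom (1) of Definition \ref{def:genprob} immediately gives $0 \leq g(x) \leq 1$, so $g$ is bounded and non-negative.

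The key step is to show that $g$ is a Gleason function of weight $1$ for the Parseval frames for $\HH$. Given any Parseval frame $\{x_j\}_{j \in J}$, Proposition \ref{prop:povm} shows that $\{x_j \otimes x_j^*\}$ is a POVM, i.e. $\sum_{j} x_j \otimes x_j^* = I$. Applying the countable additivity axiom (3) to this resolution of the identity, together with the normalization $v(I) = 1$ from axiom (2), yields $\sum_j g(x_j) = \sum_j v(x_j \otimes x_j^*) = v(I) = 1$. Hence $g$ is precisely a non-negative, bounded Gleason function of weight $1$ for the Parseval frames for $\HH$, and in particular for the finite ones.

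Next I would produce the density operator. In the finite-dimensional case $\HH = \CC^d$, Theorem \ref{thm:nonnegativemainthm} applies directly; its proof lifts to $\KK^{\max(d,3)}$ and is therefore valid even for $d = 1, 2$, thereby sidestepping the failure of Gleason's theorem in dimension $2$. It produces a positive semi-definite self-adjoint operator $\rho$ with ${\rm tr}(\rho) = 1$ and $g(x) = \inner{\rho(x)}{x}$ for all $x \in B^d$, so that $\rho$ is a density operator. For general separable $\HH$ one argues analogously, restricting $g$ to the unit sphere and invoking Gleason's Theorem \ref{thm:glea1}, which holds for separable Hilbert spaces of dimension $\geq 3$, extending to the ball by the homogeneity established as in Theorem \ref{thm:nonnegativehomog}, and deducing trace-class-ness and unit trace from $\sum_j \inner{\rho(u_j)}{u_j} = 1$ over any ONB $\{u_j\}$.

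Finally I would lift the identity $g(x) = \inner{\rho(x)}{x} = {\rm tr}(\rho\, (x \otimes x^*))$ from rank-one effects to an arbitrary effect $E \in \Ecal(\HH)$. By the spectral theorem (Theorem \ref{thm:spectral}) write $E = \sum_k \lambda_k\, e_k \otimes e_k^*$ with $\{e_k\}$ an ONB of eigenvectors and $0 \leq \lambda_k \leq 1$, and set $x_k := \sqrt{\lambda_k}\, e_k \in B^d$, so that $E = \sum_k x_k \otimes x_k^*$. Then additivity gives
\[
v(E) = \sum_k v(x_k \otimes x_k^*) = \sum_k g(x_k) = \sum_k \lambda_k \inner{\rho(e_k)}{e_k} = {\rm tr}(\rho E),
\]
as desired. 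I expect the main obstacle to lie in the infinite-dimensional bookkeeping: justifying that the spectral sum $\sum_k x_k \otimes x_k^*$ is a legitimate countable family to which axiom (3) applies, and confirming that $\rho$ is genuinely trace class with unit trace rather than merely a bounded positive operator. The finite-dimensional heart of the argument, by contrast, is essentially immediate once the Parseval-frame reformulation of the second step is in place.
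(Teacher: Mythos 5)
First, a point of orientation: the paper does not prove Theorem \ref{thm:busch} at all --- it quotes it from \cite{busc2003} and then uses its own Theorem \ref{thm:GNsolution} to \emph{weaken the hypotheses} in the finite-dimensional case (Theorem \ref{thm:genprob}). So there is no in-paper proof to match you against; the fair comparison is with the proof of Theorem \ref{thm:genprob}, and there your strategy is essentially identical to the paper's: define $g(x) = v(x \otimes x^*)$ on $B^d$, use Proposition \ref{prop:povm} plus the additivity axiom to see that $g$ is a non-negative Gleason function of weight $1$ for the Parseval frames, invoke Theorem \ref{thm:nonnegativemainthm} to produce $\rho$, and recover $v(E) = {\rm tr}(\rho E)$ by decomposing $E$ into rank-one effects. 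In finite dimensions this is correct and complete, including your observation that the embedding into $\KK^{\max(d,3)}$ sidesteps the $d = 2$ failure of Gleason's theorem.

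The genuine gap is in your final step for general separable $\HH$, and it is not the bookkeeping issue you flag. You write a general effect as $E = \sum_k \lambda_k\, e_k \otimes e_k^*$ with $\{e_k\}$ an ONB of eigenvectors, but an arbitrary $E \in \Ecal(\HH)$ with $0 \leq E \leq I$ need not have any eigenvectors at all when $\dim \HH = \infty$ (e.g., multiplication by $t$ on $L^2[0,1]$); the discrete spectral decomposition of Theorem \ref{thm:spectral}{\it b} is available only for compact $E$. So as written, your argument establishes $v(E) = {\rm tr}(\rho E)$ only on the diagonalizable effects with countable spectrum, not on all of $\Ecal(\HH)$. The repair is to abandon eigenvectors: for any ONB $\{e_k\}$ set $x_k := E^{1/2} e_k$, so that $\norm{x_k} \leq 1$ and
\[
   \forall y \in \HH, \quad \sum_k |\inner{y}{x_k}|^2 = \sum_k |\inner{E^{1/2}y}{e_k}|^2 = \norm{E^{1/2}y}^2 = \inner{E(y)}{y},
\]
i.e., $E = \sum_k x_k \otimes x_k^*$ as a countable sum of rank-one effects. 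Axiom (3) then gives $v(E) = \sum_k g(x_k) = \sum_k \inner{E^{1/2}\rho E^{1/2}(e_k)}{e_k} = {\rm tr}(E^{1/2}\rho E^{1/2}) = {\rm tr}(\rho E)$. With that substitution (and the trace-class argument for $\rho$ that you correctly sketch via $\sum_j \inner{\rho(u_j)}{u_j} = 1$), your proof goes through in full generality.
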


\begin{rem} {\it a.}
	From the perspective of quantum mechanics, elements of $\mathcal{E}(\HH)$\
	can be interpreted 
	as physical {\it effects}, while generalized probability measures on $\mathcal{E}(\HH)$ 
	can be interpreted as physical {\it states}. 
	Hence, Theorem \ref{thm:busch} asserts that states can be represented by density 
	operators in a similar fashion to the result of Theorem \ref{thm:glea2}.
	Some comparison of these theorems is in order. Gleason's Theorem \ref{thm:glea2} is 
	concerned with measures on the closed subspaces of a Hilbert space, whereas Busch's 
	Theorem \ref{thm:busch} is concerned with measures on the effects of a Hilbert space, cf. 
	Caves et al. \cite{CavFucManRen2004} (2004).
	Both admit similar physical interpretations. 
	Busch's theorem is valid when the Hilbert space $\HH$ has dimension 2.
	
{\it b.} Busch's theorem is striking, useful, and weaker than Gleason's theorem. It
is essentially Gleason' theorem for POVMs; and 
it is weaker since $v$ is defined on a much larger
space of operators than in Gleason's setting.

\end{rem}

We shall use Theorem \ref{thm:GNsolution} to prove that
if $\HH = \KK^d, \, d \geq 2$, then
condition (3) in Definition \ref{def:genprob} can be replaced 
by the seemingly weaker condition,

(3$'$) There exists $N \geq \dim \HH + 2$ such that $\sum_{i=1}^N v(E_i) = 1$ 
whenever $\{E_i\}_{i=1}^N$ is an $N$-element POVM on $\HH$.

This is made precise by the following theorem:

\begin{thm}
\label{thm:genprob}
Let $\HH = \KK^d, \, d \geq 2$. Suppose
$v\colon \mathcal{E}(\HH) \to \RR$ is a non-negative function 
for which $v(I) = 1$. Furthermore, assume that there exists $N \geq d + 2$ 
such that $\sum_{i=1}^N v(E_i) = 1$ whenever $\{E_i\}_{i=1}^N$ is an $N$-element POVM 
on $\HH$. Then, $v$ is a generalized probability measure on $\mathcal{E}(\HH)$.
\end{thm}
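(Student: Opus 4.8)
The plan is to transport $v$ to a function on the closed unit ball $B^d$ via the effect--Parseval-frame dictionary of Section~\ref{sec:povm}, recognize that function as a Gleason function, and then apply Theorem~\ref{thm:GNsolution} (in the form of Theorem~\ref{thm:GNsolutionGof0}) together with Theorem~\ref{thm:nonnegativemainthm} to realize $v$ as a trace. Concretely, for $x \in B^d$ the rank-one operator $x \otimes x^*$ lies in $\Ecal(\HH)$ by Lemma~\ref{lem:sapossd} and the bound $\inner{(x\otimes x^*)(y)}{y} = |\inner{y}{x}|^2 \leq \norm{x}^2 \norm{y}^2 \leq \norm{y}^2$, so $g(x) := v(x \otimes x^*)$ is well defined on $B^d$. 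First I would record two normalizations. Testing $v$ against the $N$-element POVM $\{I, 0, \dots, 0\}$ gives $v(I) + (N-1)v(0) = 1$, whence $v(0) = 0$ and $g(0) = 0$; testing it against $\{E, I - E, 0, \dots, 0\}$ (a valid $N$-element POVM, since $N \geq d+2$ leaves room for the zeros) gives $v(E) + v(I-E) = 1$, so that $0 \le v(E) \le 1$ for every $E \in \Ecal(\HH)$. In particular $g$ is bounded with $0 \le g \le 1$, and this already yields condition (1) of Definition~\ref{def:genprob}.

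Next I would identify $g$. If $\{x_j\}_{j=1}^N$ is any $N$-element Parseval frame for $\HH$, then $\{x_j \otimes x_j^*\}_{j=1}^N$ is an $N$-element POVM by Proposition~\ref{prop:povm}, so hypothesis (3$'$) gives $\sum_{j=1}^N g(x_j) = \sum_{j=1}^N v(x_j\otimes x_j^*) = 1$; thus $g$ is a bounded Gleason function of degree $N$ and weight $1$ with $g(0)=0$. Since $N \ge d+2$, Theorem~\ref{thm:GNsolutionGof0} promotes $g$ to a Gleason function of weight $1$ for \emph{all} finite Parseval frames for $\HH$, and then Theorem~\ref{thm:nonnegativemainthm} (applicable because $g$ is non-negative) produces a positive semi-definite self-adjoint operator $\rho := A$ with ${\rm tr}(\rho) = 1$ and $g(x) = \inner{\rho(x)}{x} = {\rm tr}(\rho\,(x\otimes x^*))$ for all $x \in B^d$. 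Thus $\rho$ is a density operator and $v(E) = {\rm tr}(\rho E)$ holds on all rank-one effects.

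The main step is to extend this identity to an arbitrary $E \in \Ecal(\HH)$, and this is where I expect the real work to lie. Using the spectral theorem (Theorem~\ref{thm:spectral}) write $E = \sum_{k=1}^d \lambda_k\, e_k \otimes e_k^*$ with $\{e_k\}$ an ONB and $\lambda_k \in [0,1]$, and set $y_k := \sqrt{1-\lambda_k}\,e_k$, so that $I - E = \sum_{k=1}^d y_k \otimes y_k^*$. Then $\{E\} \cup \{y_k \otimes y_k^*\}_{k=1}^d$, padded with $N - d - 1 \ge 1$ copies of the zero operator, is an $N$-element POVM; applying (3$'$) and $v(0)=0$ yields $v(E) = 1 - \sum_{k=1}^d g(y_k) = 1 - \sum_{k=1}^d (1-\lambda_k)\inner{\rho(e_k)}{e_k}$. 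Since $\sum_k \inner{\rho(e_k)}{e_k} = {\rm tr}(\rho) = 1$ and $\sum_k \lambda_k \inner{\rho(e_k)}{e_k} = {\rm tr}(\rho E)$, this collapses to $v(E) = {\rm tr}(\rho E)$ for every effect $E$.

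Finally, the three defining properties follow from the representation $v(E) = {\rm tr}(\rho E)$: condition (2) is the hypothesis $v(I)=1$ (equivalently ${\rm tr}(\rho)=1$); condition (1) follows from $0 \le E \le I$ and $\rho \ge 0$, as already noted; and condition (3) follows from the linearity and continuity of ${\rm tr}(\rho\,\cdot\,)$ on the finite-dimensional space $\Lcal(\HH)$, since for a countable family $\{E_j\}$ with $\sum_j E_j \in \Ecal(\HH)$ the partial sums increase to $\sum_j E_j$ in operator norm. The only delicate points are arranging $g$ to be bounded (so that the bounded-Gleason-function machinery applies, $v$ being a priori merely non-negative) and the effect-completion argument of the previous paragraph; everything else is bookkeeping around Theorems~\ref{thm:GNsolutionGof0} and \ref{thm:nonnegativemainthm}.
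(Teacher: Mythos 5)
Your proposal is correct, and it diverges from the paper's own proof in a genuinely interesting way at the final step. Both arguments begin identically: define $g(x) = v(x\otimes x^*)$ on $B^d$, use the POVMs $\{I,0,\dots,0\}$ and $\{x_i\otimes x_i^*\}$ to see that $g$ is a bounded Gleason function of degree $N$ and weight $1$ with $g(0)=0$, and then invoke Theorem \ref{thm:GNsolutionGof0} together with Theorem \ref{thm:nonnegativemainthm} to identify $g$ with a quadratic form. (Your explicit verification that $0\le v\le 1$, via the POVM $\{E, I-E, 0,\dots,0\}$, is a welcome touch: the paper only records non-negativity of $g_v$ before invoking Theorem \ref{thm:GNsolution}, which formally requires boundedness.) The divergence is in how countable additivity is obtained. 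The paper proves the intermediate identity $v(E)=\sum_i g_v(x_i)$ for $E=\sum_i x_i\otimes x_i^*$, and then handles a countable family $\{E_j\}$ by assembling the spectral vectors of all the $E_j$ and of $I-\sum_j E_j$ into \emph{infinite} Parseval frames, relying on the fact (Theorem \ref{thm:gleapars} plus Remark \ref{rem:normal}b) that a quadratic form is a Gleason function even for infinite Parseval frames. You instead push all the way to the Busch-type representation $v(E)={\rm tr}(\rho E)$ for every effect $E$ — your computation $v(E)=1-\sum_k(1-\lambda_k)\inner{\rho(e_k)}{e_k}={\rm tr}(\rho E)$ is correct, and the padded POVM $\{E\}\cup\{y_k\otimes y_k^*\}_{k=1}^d\cup\{0,\dots,0\}$ has exactly $N$ elements precisely because $N\ge d+2$ — and then countable additivity falls out of linearity and continuity of ${\rm tr}(\rho\,\cdot)$, since in finite dimensions the increasing, bounded partial sums of positive operators converge in norm. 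Your route is arguably cleaner: it bypasses infinite Parseval frames entirely and delivers the finite-dimensional case of Busch's Theorem \ref{thm:busch} as an explicit byproduct rather than leaving the density operator implicit; the paper's route stays entirely inside the Gleason-function formalism and so needs no separate convergence argument for the operator series, at the cost of a more elaborate frame construction in its part \emph{iii}.
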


\begin{proof}
{\it i.} Define a function $g_v$ on the closed unit ball $B^d$ of $\HH$ by $g_v(x) 
= v(x \otimes  x^*)$, recalling that the 
 tensor product $x \otimes  x^* : \HH \times \HH' \to \Lcal(\HH)$ is the outer product
$xx^*$. We claim that $g_v$ 
is a Gleason function of degree $N$ and weight 1 for {\it all} of the Parseval frames for $\HH$. 

Clearly, $g_v$ is 
non-negative by its definition and the hypothesis on $v$.
If $\{x_i\}_{i=1}^N$ is an $N$-element Parseval frame for $\HH$, 
then $\{ x_i \otimes x_i^*\}_{i=1}^N$ is a POVM on $\HH$ by Proposition \ref{prop:povm}. 
Therefore,
by our POVM assumption, we have
\begin{equation*}
\sum_{i=1}^N g_v(x_i) = \sum_{i=1}^N v(x_i \otimes x_i^*) = 1,
\end{equation*}
and so $g_v$ is a Gleason function of degree $N$ and weight $1$ for the set $\Pcal_N$ of
Parseval frames for $\HH$.

Also, note that $\{I, 0, \dots, 0\}$, where there are $N - 1$ copies of the $0$-operator, 
is a POVM on $\HH$, 
and so, by our POVM assumption again, we have  
\begin{equation*}
      v(I) + (N - 1)v(0) = 1 + (N - 1)v(0) = 1.
\end{equation*}
Thus, since $N \geq d + 2 > 1$, we obtain $v(0) = 0$ for the $0$-operator in the domain of $v$.
As a result we see that $g_v(0) = 0$ for $0 \in B^d$ in the domain of $g_v$.

From Theorem \ref{thm:GNsolution}, it follows that $g_v$ is a Gleason function of 
degree $N + 1$ and weight $1$ for the set $\Pcal_{N+1}$ of Parseval frames for $\HH$. 
A straightforward 
induction argument shows that $g_v$ is therefore a Gleason function of weight $1$ 
for all finite Parseval frames for $\HH$. Theorem \ref{thm:nonnegativemainthm}, or 
Theorems \ref{thm:realmainthm} or \ref{thm:complexmainthm}, apply to 
prove that $g_v$ is a quadratic form on $B^d$. From there, Theorem \ref{thm:gleapars} and
Remark \ref{rem:normal} {\it b} imply that $g_v$ is a Gleason function for 
all the Parseval frames for $\HH$ 
(not just the finite frames)
as claimed. We shall use this result in part {\it iii}.

{\it ii.} We shall show that if $E = \sum_{i=1}^d x_i \otimes x_i^*$, then $v(E) = \sum_{i=1}^d g_v(x_i)$. 
For this, note that both $\{ E, I - E \}$ and $\{ x_1 \otimes x_1^*, \dots, x_d \otimes x_d^*, I - E \}$ are
POVMs on $\HH$. Appending copies of $0$ to these POVMs until both have $N$ elements 
and applying the hypothesized condition on $v$, we obtain the equation
\begin{equation*}
    v(E) + v(I - E) + (N - 2)v(0) = 1 = \left(\sum_{i=1}^d v(x_i \otimes x_i^*)\right) + v(I - E) + (N - d - 1)v(0).
\end{equation*}
Using $v(0) = 0$ and canceling the $v(I - E)$ term shows that
\begin{equation*}
     v(E) = \sum_{i=1}^d v(x_i \otimes x_i^*) = \sum_{i=1}^d g_v(x_i),
\end{equation*}
as asserted.

{\it iii.} Now let $\{E_j\}_{j \in J} \subseteq \Ecal(\HH)$ be a countable sequence such that 
$\sum_{j \in J} E_j \in \mathcal{E}(\HH)$. Invoke the spectral theorem 
as in Proposition \ref{prop:pars}  to write $E_j = \sum_{i=1}^d x_{ij} \otimes x_{ij}^*$ 
for each $j \in J$, $\sum_{j \in J} E_j = \sum_{i=1}^d y_i \otimes y_i^*$, and 
$I - \sum_{j \in J} E_j = \sum_{i=1}^d z_i \otimes z_i^*$ for some collections of vectors 
$x_{ij}, y_i, z_i \in \HH$. Hence, we have the two equations,
\begin{align*}
       \sum_{i=1}^d z_i \otimes z_i^* + \sum_{i=1}^d y_i \otimes y_i^* &= \left(I - \sum_{j \in J} E_j\right) 
      + \sum_{j \in J} E_j = I,\\
     \sum_{i=1}^d z_i \otimes z_i^* + \sum_{j \in J} \sum_{i=1}^d x_{ij} \otimes x_{ij}^* &
     = \left(I - \sum_{j \in J} E_j\right) 
       + \sum_{j \in J} E_j = I.
\end{align*}
Since $(u \otimes u^*)(x) = \langle x,u \rangle u$ (Definition \ref{defn:tensorket}),  we can apply these 
operators on the left side of both equations to any $x \in \HH$, and then take the inner product
with $x$, to assert
that $\{ y_i \}_{i=1}^d \cup \{ z_i \}_{i=1}^d$ and $\{x_{ij}\}_{j \in J, i = 1, \dots, d} \cup \{ z_i \}_{i=1}^d$ 
are both Parseval frames for $\HH$. For example,
\[
 \norm{x}^2 = |\langle I(x),x \rangle| 
 =\big| \langle \big( \sum_{i=1}^d  z_i \otimes z_i^* \big)x + 
 \big(\sum_{i=1}^d y_i \otimes y_i^*\big)x, x \rangle \big|
\]
\[
 =\big| \langle  \sum_{i=1}^d   \langle x,z_i \rangle z_i + \sum_{i=1}^d \langle x,y_i \rangle y_i   ,x  \rangle \big|
 =\big|  \sum_{i=1}^d | \langle x,z_i \rangle|^2  + \sum_{i=1}^d  |\langle x,y_i \rangle|^2 \big|
 = \sum_{i=1}^d | \langle x,z_i \rangle|^2  + \sum_{i=1}^d  |\langle x,y_i \rangle|^2.
\]

Thus,
\begin{equation*}
\sum_{i=1}^d g_v(z_i) + \sum_{i=1}^d g_v(y_i) = 1 = \sum_{i=1}^d g_v(z_i) 
+ \sum_{j \in J} \sum_{i=1}^d g_v(x_{ij}),
\end{equation*}
so that
\begin{equation*}
    \sum_{i=1}^d g_v(y_i) = \sum_{j \in J} \sum_{i=1}^d g_v(x_{ij}).
\end{equation*}
Consequently, we obtain
\begin{equation}
\label{eq:additivity}
v\left( \sum_{j \in J} E_j \right) = \sum_{i=1}^d g_v(y_i) = 
      \sum_{j \in J} \sum_{i=1}^d g_v(x_{ij}) = \sum_{j \in J} v(E_j),
\end{equation}
where the last equality follows from part {\it ii.}
Equation \eqref{eq:additivity} is the desired countable additivity condition of Definition \ref{def:genprob}.
\end{proof}

%%%%%%%%%%%%%%%%%%%%%%%
%%%%%%%%%%%%%%%%%%%%%%
\appendix\section{}
\label{sec:app}

{\it a.} If $N > d^2$, then there is no ETF for $\CC^d$ consisting of $N$ elements;
and these values of $N$ can be viewed as a natural regime for the Grassmannian frames defined in
part {\it c.}
Further, if $N < d^2$, then there are known cases for which there are no ETFs, e.g., $d=3, \, N=8$
\cite{szol2017}. 
Determining compatible values of $d,\,N$ for which there are ETFs is a subtle,
unresolved, and
highly motivated problem, see, e.g., \cite{FicMixTre2012, FicMix2016, 
FicJasMixPet2018, wald2018}.

{\it b.} (ETFs and the Welch bound) The {\it coherence} or maximum correlation $\mu(X)$ of a set
$X = \{ x_j \}_{j=1}^N \subseteq \CC^d$ of unit norm elements is defined as
\begin{equation}
\label{eq:welch}
   \mu(X) = {\rm max}_{j \neq k}\, |\langle x_j, x_k \rangle|.
\end{equation}
Welch (1974) \cite{welc1974} proved the fundamental inequality,
\begin{equation}
\label{eq:welchineq}
     \mu(X) \geq \sqrt{\frac{N - d}{d(N - 1)}},
\end{equation}
that itself is important in understanding the behavior of the narrow band ambiguity function, 
see  
\cite{HerStr2009, BenBenWoo2012} and part {\it e}. The right side of 
the inequality (\ref{eq:welchineq}) is the {\it Welch bound}, 
cf. Proposition \ref{prop:parsprops} {\it f}. 
In the case that $X$ is a FUNTF for $\CC^d$,
then equality holds in (\ref{eq:welchineq}) if and only if $X$ is an ETF with constant 
$\alpha = \sqrt{\frac{N - d}{d(N - 1)}}$,
see \cite{HeaStr2003}, Theorem 2.3, as well as \cite{BenKol2006}, Theorem IV.2 
(Theorem 3) for
a modest but useful generalization. Because of the importance of Gabor
frames in this topic, we note that if $N = d^2$, then $\alpha = \sqrt{\frac{1}{d+1}}$.

{\it c.} (Grassmannian frames) If an ETF does not exist for a given $N \geq d+2$, then a
reasonable substitute is to consider $(N,d)$-Grassmannian frames.
Let $X = \{ x_j \}_{j=1}^N \subseteq \CC^d$ be a set of unit norm elements. $X$ is
an {\it $(N,d)$-Grassmannian frame} for $\CC^d$ if it is a FUNTF and if 
\[
       \mu(X) =  {\rm inf}\, \mu(Y),
\]
where the infimum is taken over all 
FUNTFs $Y$ for $\CC^d$ consisting of $N$ elements.
A compactness argument shows that $(N,d)$-Grassmannian frames exist,
see \cite{BenKol2006}, Appendix. Also,
ETFs are a subclass of Grassmannian frames, see
\cite{BodPauTom2009, wald2003}.
Further, as noted in \cite{HeaStr2003}, Grassmannian frames
have significant applicability, including spherical codes and designs,
packet based communication systems such as the internet,
and geometrically uniform codes
in information theory, and these last are essentially group frames \cite{forn1991} (1991), cf.
Definition \ref{defn:frame} {\it c} and \cite{BolEld2003}.

One of the major mathematical challenges is to construct Grassmannian frames,
 see \cite{BenKol2006, wald2018}.

{\it d.} (Zauner's conjecture) {\it Zauner's conjecture} is that for any
dimension $d \geq 1$ there is a FUNTF $X = \{ x_j  : j = 1,\ldots, d^2\}$ for $\CC^d$ 
such that
\[
     \forall \, j \neq k, \quad |\langle x_j, x_k \rangle | = \sqrt{\frac{1}{d + 1}}.
\]
The problem can be restated by asking if for each $d \geq 1$ there are $(d^2,d)$-Grassmannian frames
that
achieve equality with the Welch bound.
This is an open problem in quantum information theory, and
the conjecture by Zauner \cite{zaun1999} was motivated by issues
dealing with quantum measurement, cf. \cite{RenBluScoCav2004}.
There are solutions for some values of $d$, and solutions are referred to
as {\it symmetric, informationally complete, positive operator valued measures} (SIC-POVMs).
POVMs were introduced in Subsection \ref{sec:povm}. They
not only arise in quantum measurement and detection, e.g., see \cite{BenKeb2008},
Definition A.1,
but also draw on issues dealing with coherent states \cite{AliAntGaz2000}.
A major recent contribution to Zauner's conjecture is \cite{AppChiFlaWal2018}.

Zauner's conjecture is also related to frame potential energy in the following way.
In \cite{BenFic2003} FUNTFs were characterized as the minimizers of the $\ell^2$-
frame potential energy functional motivated by Coulomb's law. The $\ell^p$-version,
merely defined in \cite{BenKol2006}, was developed by
Ehler and Okoudjou, see \cite{EhlOko2012, EhlOko2013}.
 The main theorem in \cite{BenFic2003} proves the existence
of so-called Welch bound equality (WBE) sequences used for code-division 
multiple-access (CDMA) systems in communications, see \cite{MasMit1993, wald2003}. 
In fact, the essential inequality asserted
in the WBE setting of 
Massey and Mittelholzer \cite{MasMit1993} is an $\ell^2$-version of 
the $\ell^\infty$ inequality (\ref{eq:welchineq});
and the relevant equations in \cite{MasMit1993} are (3.4) -- (3.6). With this backdrop,
there is a compelling case relating solutions of Zauner's conjecture, as well
as Grassmannians, in terms of minimizers of all $\ell^p$-frame potentials,
see \cite{okou2016}.

{\it e.} (CAZAC sequences) Given a function $u : \ZZ/d\ZZ \longrightarrow \CC$. 
For any such $u$ we can define a Gabor FUNTF $U = \{u_j: j = 1, \ldots, d^2\}$,
where each $u_j $ consists of translates and modulations of $u$,
e.g., see \cite{pfan2013a}.

The discrete periodic ambiguity function $A(u)$ of $u$ is defined by the formula,
\[
       \forall \,(m, n) \in \ZZ/d\ZZ  \times \ZZ/d\ZZ, \quad A(u)(m,n)= 
      \frac{1}{d}\, \sum_{k=0}^{d-1}\,u(m+k)\,\overline{u(k)}\,e^{-2 \pi i k n/d}.
\]
 The function $u$
is a {\it constant
amplitude $0$-autocorrelation} (CAZAC) sequence if
\[
     \forall \,m\in\ZZ/d\ZZ,\quad |u(m)|=1, \quad \text{(CA)}
\]
and
\[
     \forall \,m\in\ZZ/d\ZZ\setminus\{0\},\quad \frac{1}{d} \sum_{k=0}^{d-1} \,u(m+k)\,\overline{u(k)}=0. \quad \text{ (ZAC)}.
\]
A recent survey on the theory and applicability of CAZAC sequences is \cite{BenCorMag2019}.
The construction of all CAZAC sequences remains a tantalizing and applicable venture.

A fundamental fact is the following theorem \cite{BenBenWoo2012}, Theorem 3.8.
{\it 
	Let $d = p$ be prime. There are explicit CAZAC sequences 
	$u : \ZZ/p\ZZ \longrightarrow \CC$ (due to Bj{\"o}rck)
	with the property that
	if $(m,n)\in(\ZZ/p\ZZ\times\ZZ/p\ZZ)\backslash\{(0,0)\}$, then
\[
	|A(u)(m,n)|\leq\frac{2}{\sqrt{p}}+\left\{\begin{array}{ll}
	\frac{4}{p}& \text{if } p\equiv1\, (\text{mod } 4)\\
	\frac{4}{p^{3/2}} & \text{if } p\equiv3\, (\text{mod } 4).
	\end{array}\right.
\]
In particular, $|A(u)(m,n)|\leq 3/\sqrt{p}$.} 

This implies that the coherence 
	$\mu (U)$ of $U$ satisfies the inequalities, 
\begin{equation}
\label{eq:bjorckcohineq}
     \frac{1}{\sqrt{p+1}} \leq \mu(U) \leq  \frac{3}{\sqrt{p}},
\end{equation}	
even though $|A(u)(m,n)|$ can have significantly smaller values than $3/{\sqrt{p}}$
for various $(m,n)$. This latter property hints at the deeper applicability 
of CAZAC sequences such as the Bj{\"o}rck sequence.

Because of the $0$-autocorrelation property, CAZAC sequences are the opposite of what
candidates for Zauner's conjecture should be. On the other hand, the inequality 
(\ref{eq:bjorckcohineq}) gives perspective with regard to Zauner's conjecture.
Further, these CAZAC sequences are an essential component of the background and
goals dealing with phase-coded waveforms that were the
driving force leading to the role 
of group frames in the vector-valued theory of \cite{AndBenDon2019}.

\begin{rem}[Uncertainty principle]
\label{rem:up}
{\it a.} It is relevant to understand {\it weighted extensions}
of Heisenberg's uncertainty principle in the context of 
a Gleason theorem for  Parseval frames, just as Gleason's original theorem
in the context of ONBs was driven by the Birkhoff and von Neumann
remark in Subsection \ref{sec:back}. These extensions 
of Heisenberg's uncertainty principle are both physically motivated
and use many techniques from harmonic analysis, see, e.g.,
\cite{BenHei1992} (1992), \cite{BenHei2003} (2003),
\cite{BenDel2017} (2017).

{\it b.} Because of the role of the uncertainty principle in quantum mechanics
and the technical role of graph theory in Schr{\"o}dinger eigenmap methods for
non-linear dimension reduction techniques, it is natural to continue the
development of graph theoretic uncertainty 
principles \cite{chun1997}, \cite{grun2003}, \cite{LamMae2011},
\cite{HamVanGri2011},
\cite{AgaLu-2013}, \cite{ShuNarFroOrtVan2013}, \cite{BenKop2015}, 
\cite{kopr2016},
\cite{ShuRicVan2015}, \cite{TsiBarDil2015a}, \cite{TsiBarDil2015b}.

\end{rem}

\nocite{*}
\bibliography{new_bib_01_06.bib}

\providecommand{\bysame}{\leavevmode\hbox to3em{\hrulefill}\thinspace}
\providecommand{\MR}{\relax\ifhmode\unskip\space\fi MR }
% \MRhref is called by the amsart/book/proc definition of \MR.
\providecommand{\MRhref}[2]{%
  \href{http://www.ams.org/mathscinet-getitem?mr=#1}{#2}
}
\providecommand{\href}[2]{#2}
\begin{thebibliography}{10}

\bibitem{AgaLu-2013}
Ameya Agaskar and Yue~M Lu, \emph{A spectral graph uncertainty principle},
  Information Theory, IEEE Transactions on \textbf{59} (2013), no.~7,
  4338--4356.

\bibitem{AliAntGaz1993}
Syed~Twareque Ali, Jean-Pierre Antoine, and Jean-Pierre Gazeau,
  \emph{Continuous frames in {H}ilbert space}, Ann. Phys. (NY) \textbf{222}
  (1993), 1--37.

\bibitem{AliAntGaz2000}
\bysame, \emph{Coherent {S}tates, {W}avelets and their {G}eneralizations},
  Springer-Verlag, New York, 2000.

\bibitem{AndBenDon2019}
Travis Andrews, John~J. Benedetto, and Jeffrey~J. Donatelli, \emph{Frame
  multiplication theory and a vector-valued {D}{F}{T} and ambiguity function},
  J. Fourier Analysis and Appl. \textbf{25} (2019), no.~4, 1795--1854.

\bibitem{AppChiFlaWal2018}
M~Appleby, T.-Y. Chien, S.~Flammia, and S.~Waldron, \emph{Constructing exact
  symmetric informationally complete measurements from numerical solutions}, J.
  of Physics A-Mathematical and Theoretical \textbf{51} (2018), no.~16, 40
  pages.

\bibitem{bene1992}
John~J. Benedetto, \emph{Irregular sampling and frames}, Wavelets: a {T}utorial
  in {T}heory and {A}pplications (Charles~K. Chui, ed.), Academic Press Inc.,
  San Diego, CA, USA, 1992, pp.~445--507.

\bibitem{bene1994}
\bysame, \emph{Frame decompositions, sampling, and uncertainty principle
  inequalities}, Wavelets: {M}athematics and {A}pplications (John~J. Benedetto
  and Michael~W. Frazier, eds.), CRC Press, Boca Raton, FL, 1994, pp.~247--304.

\bibitem{BenBenWoo2012}
John~J. Benedetto, Robert~L. Benedetto, and Joseph~T. Woodworth, \emph{Optimal
  ambiguity functions and {W}eil's exponential sum bound}, Journal of Fourier
  Analysis and Applications \textbf{18} (2012), no.~3, 471--487.

\bibitem{BenCorMag2019}
John~J. Benedetto, Katherine Cordwell, and Mark Magsino, \emph{{CAZAC}
  sequences and {H}aagerup's characterization of cyclic {N}-roots}, New
  {T}rends in {A}pplied {H}armonic {A}nalysis, {V}olume {II}: {H}armonic
  {A}nalysis, {G}eometric {M}easure {T}heory, and {A}pplications (C.~Cabrelli
  and U.~Molter, eds.), Springer-Birkh\"auser, New York, 2019, invited chapter.

\bibitem{BenCza2009}
John~J. Benedetto and Wojciech Czaja, \emph{Integration and {M}odern
  {A}nalysis}, Birkh\"auser Advanced Texts, Springer-Birkh\"auser, New York,
  2009.

\bibitem{BenDel2017}
John~J. Benedetto and Matthew Dellatorre, \emph{Uncertainty principles and
  weighted norm inequalities}, Amer. Math. Soc. Contemporary Mathematics, M.
  Cwikel and M. Milman, editors \textbf{693} (2017), 55--78.

\bibitem{BenFic2003}
John~J. Benedetto and Matthew Fickus, \emph{Finite normalized tight frames},
  Adv. Comp. Math. \textbf{18} (2003), no.~2-4, 357--385.

\bibitem{BenFra1994}
John~J. Benedetto and Michael Frazier (eds.), \emph{Wavelets: Mathematics and
  applications}, Studies in Advanced Mathematics, CRC Press, Boca Ratan, FL,
  1994.

\bibitem{BenHei1992}
John~J. Benedetto and Hans~P. Heinig, \emph{Fourier transform inequalities with
  measure weights}, Advances in Mathematics \textbf{96(2)} (1992), 194--225.

\bibitem{BenHei2003}
\bysame, \emph{Weighted {F}ourier inequalities: new proofs and
  generalizations}, J. of Fourier Anal. Appl. \textbf{9(1)} (2003), 1--37.

\bibitem{BenKeb2008}
John~J. Benedetto and Andrew Kebo, \emph{The role of frame force in quantum
  detection}, J. Fourier Analysis and Applications \textbf{14} (2008),
  443--474.

\bibitem{BenKol2006}
John~J. Benedetto and Joseph~D. Kolesar, \emph{Geometric properties of
  {G}rassmannian frames in ${R}^2$ and ${R}^3$}, EURASIP Journal on Applied
  Signal Processing (2006).

\bibitem{BenKop2015}
John~J. Benedetto and Paul~J. Koprowski, \emph{Graph theoretic uncertainty
  principles}, SampTA, Washington, D.C. (2015), 5 pages.

\bibitem{BenWal1994}
John~J. Benedetto and David Walnut, \emph{Gabor frames for {L}$^{2}$ and
  related spaces}, Wavelets: {M}athematics and {A}pplications, edited by J.J.
  Benedetto and M. Frazier, CRC (1994), 97--162.

\bibitem{BirNeu1936}
Garrett Birkhoff and John von Neumann, \emph{The logic of quantum mechanic},
  Annals of Mathematics \textbf{37} (1936), no.~4, 823--843.

\bibitem{BodPauTom2009}
B.~G. Bodmann, V.~I. Paulsen, and M.~Tomforde, \emph{Equiangular tight frames
  from complex seidel matrices containing cube roots of unity}, Linear Algebra
  Appl. \textbf{430} (2009), no.~1, 396--417.

\bibitem{BolEld2003}
Helmut B{\"o}lcskei and Yonina~C. Eldar, \emph{Geometrically uniform frames},
  IEEE Transactions on Information Theory \textbf{49} (2003), no.~4, 993--1006.

\bibitem{BusGraLah1997}
P.~Busch, M.~Grabowski, and P.~Lahti, \emph{Operational {Q}uantum {P}hysics},
  Springer, New York, 1997 (1995).

\bibitem{busc2003}
Paul Busch, \emph{Quantum states and generalized observables: a simple proof of
  {G}leason's theorem}, Physical Review Letters \textbf{91} (2003), no.~12,
  120403.

\bibitem{CasRedTre2008}
P.~G. Casazza, D.~Redmond, and J.~C. Tremain, \emph{Real equiangular frames},
  CISS Meeting, Princeton, NJ (2008).

\bibitem{casa1998}
Peter~G. Casazza, \emph{Every frame is a sum of three (but not two) orthonormal
  bases - and other frame representations}, J. Fourier Analysis and Applictions
  \textbf{4} (1998), no.~6, 727--732.

\bibitem{CasKov2003}
Peter~G. Casazza and Jelena Kova\v{c}evi\'c, \emph{Equal-norm tight frames with
  erasures}, Adv. Comput. Math. \textbf{18} (2003), no.~2-4, 387--430.

\bibitem{CasKut2012}
P.G. Casazza and G.~Kutyniok, \emph{Finite frames: Theory and applications},
  Applied and Numerical Harmonic Analysis, Birkh{\"a}user Boston, 2012.

\bibitem{CavFucManRen2004}
Carlton~M. Caves, Christopher~A. Fuchs, Karan~K. Manne, and Joseph~M. Renes,
  \emph{Gleason-type derivations of the quantum probability rule for
  generalized measurements}, Foundations of Physics \textbf{34} (2004), no.~2,
  193--209.

\bibitem{chri2016}
Ole Christensen, \emph{An {I}ntroduction to {F}rames and {R}iesz {B}ases, 2nd
  edition}, Springer-Birkh{\"a}user, New York, 2016 (2003).

\bibitem{chun1997}
Fan~RK Chung, \emph{Spectral {G}raph {T}heory}, vol.~92, American Mathematical
  Soc., 1997.

\bibitem{czaj2008}
Wojciech Czaja, \emph{Remarks on {N}aimark's duality}, Proceedings of the
  American Mathematical Society \textbf{136} (2008), no.~3, 867--871.

\bibitem{daub1992}
Ingrid Daubechies, \emph{Ten {L}ectures on {W}avelets}, CBMS-NSF Regional
  Conference Series in Applied Mathematics, Society for Industrial and Applied
  Mathematics, 1992.

\bibitem{davi1977}
Chandler~H. Davis, \emph{Geometric approach to a dilation theorem}, Linear
  Algebra and its Applications \textbf{18} (1977), no.~1, 33 -- 43.

\bibitem{DufSch1952}
Richard~James Duffin and Albert~Charles Schaeffer, \emph{A class of nonharmonic
  {F}ourier series}, Trans. Amer. Math. Soc. \textbf{72} (1952), 341--366.

\bibitem{EhlOko2013}
M.~Ehler and K.~Okoudjou, \emph{Probabilistic frames: an overview}, Finite
  {F}rames: {T}heory and {A}pplications (P.~G. Casazza and G.~Kutyniok, eds.),
  Applied and {N}umerical {H}armonic {A}nalysis, Springer-Birkh{\"a}user, New
  York, 2013, Chapter 12.

\bibitem{EhlOko2012}
M.~Ehler and K.~A. Okoudjou, \emph{Minimization of the probabilistic $p-$frame
  potential}, J.\ Statist.\ Plann.\ Inference \textbf{142} (2012), no.~3,
  645--659.

\bibitem{FicJasMixPet2018}
M.~Fickus, J.~Jasper, D.~G. Mixon, and J.~Peterson, \emph{Tremain equiangular
  tight frames}, J. Combin. Theory (2018).

\bibitem{FicMix2016}
M.~Fickus and D.~G. Mixon, \emph{Tables of the existence of equiangular tight
  frames}, ArXiv preprint: arXiv:1504.00253 (2016).

\bibitem{FicMixTre2012}
M.~Fickus, D.~G. Mixon, and J.~C. Tremain, \emph{Steiner equiangular tight
  frames}, Linear Algebra Appl \textbf{436} (2012), 1014--1027.

\bibitem{forn1991}
G.~David Forney, \emph{Geometrically uniform codes}, Information Theory, IEEE
  Transactions on \textbf{37} (1991), no.~5, 1241--1260.

\bibitem{FriInsSpe1997}
Stephen~H. Friedberg, Arnold~J. Insel, and Lawrence~E. Spence, \emph{Linear
  {A}lgebra}, 3rd ed., Prentice-Hall, New York, 1997.

\bibitem{fuch2011}
Christopher~A. Fuchs, \emph{Coming of {A}ge with {Q}uantum {I}nformation:
  {N}otes on a {P}aulian {I}dea}, Cambridge University Press, 2011.

\bibitem{glea1957}
Andrew~M. Gleason, \emph{Measures on the closed subspaces of a {H}ilbert
  space}, Journal of Mathematics and Mechanics \textbf{6} (1957), no.~6,
  885--893.

\bibitem{GohGol1981}
Israel Gohberg and Seymour Goldberg, \emph{Basic {o}perator {t}heory},
  Birkh\"auser Boston, Mass., 1981.

\bibitem{grun2003}
F.~Alberto Gr{\"u}nbaum, \emph{The {H}eisenberg inequality for the discrete
  {F}ourier transform}, Applied and Computational Harmonic Analysis \textbf{15}
  (2003), no.~2, 163--167.

\bibitem{halm1958}
Paul~R. Halmos, \emph{Finite-dimensional {V}ector {S}paces, second edition}, D.
  Van Nostrand Co., Inc., Princeton, NJ, 1958.

\bibitem{hamh2003}
Jan Hamhalter, \emph{Quantum {M}easure {T}heory}, Fundamental Theories of
  Physics, 134, Springer, New York, 2003.

\bibitem{HamVanGri2011}
David~K Hammond, Pierre Vandergheynst, and R{\'e}mi Gribonval, \emph{Wavelets
  on graphs via spectral graph theory}, Applied and Computational Harmonic
  Analysis \textbf{30} (2011), no.~2, 129--150.

\bibitem{HanLar2000}
Deguang Han and David Larson, \emph{Frames, bases and group representations},
  Mem. Amer. Math. Soc. \textbf{147} (2000), no. 697.

\bibitem{HerStr2009}
Matthew~A. Herman and Thomas Strohmer, \emph{High-resolution radar via
  compressed sensing}, IEEE Transactions on Signal Processing (2009).

\bibitem{kopr2016}
Paul~J. Koprowski, \emph{Graph theoretic uncertainty and feasibility}, Sampling
  Theory in Signal and Image Processing \textbf{15} (2016), 73--93.

\bibitem{KovChe2007a}
J.~Kova\v{c}evi\'c and A.~Chebira, \emph{Life beyond bases: The advent of
  frames (part {I})}, Signal Processing Magazine, IEEE \textbf{24} (2007),
  no.~4, 86--104.

\bibitem{KovChe2007b}
\bysame, \emph{Life beyond bases: The advent of frames (part {II})}, Signal
  Processing Magazine, IEEE \textbf{24} (2007), 115--125.

\bibitem{LamMae2011}
Mark Lammers and A.~Maeser, \emph{An uncertainty principle for finite frames},
  Journal of Mathematical Analysis and Applications \textbf{373} (2011), no.~1,
  242--247.

\bibitem{lay-1994}
David~C. Lay, \emph{Linear {A}lgebra and its {A}pplications}, Addison-Wesley,
  Reading, MA, 1994.

\bibitem{mack1957}
George~W. Mackey, \emph{Quantum mechanics and {H}ilbert space}, Amer. Math.
  Monthly \textbf{64} (1957), no.~8, Part 2, 45--57.

\bibitem{mack1963}
George~W. Mackey, \emph{The {M}athematical {F}oundations of {Q}uantum
  {M}echanics}, The Benjamin Cummings Publishing Co., Reading, MA, 1963, 3rd
  printing 1977.

\bibitem{mack1978}
\bysame, \emph{Unitary {G}roup {R}epresentations}, The Benjamin Cummings
  Publishing Co., Reading, MA, 1978.

\bibitem{MasMit1993}
J.~L. Massey and T.~Mittelholzer, \emph{Welch's bound and sequence sets for
  code-division multiple-access systems}, Sequences {II}: Methods in
  Communication, Security and Computer Sciences (1993).

\bibitem{naim1940}
M.~A. Naimark, \emph{Spectral functions of a symmetric operator}, Izv. Akad.
  Nauk SSSR Ser. Mat. \textbf{4} (1940), no.~3, 277--318.

\bibitem{naim1943}
\bysame, \emph{On a representation of additive operator set functions}, Dokl.
  Akad. Nauk SSSR, vol.~41, 1943, pp.~359--361.

\bibitem{okou2016}
K.~A. Okoudjou, \emph{Preconditioning techniques in frame theory and
  probabilistic frames}, Finite {F}rame {T}heory: {A} {C}omplete {I}ntroduction
  to {O}vercompleteness (K.~A. Okoudjou, ed.), Proceedings of {S}ymposia in
  {A}pplied {M}athematics, vol.~73, AMS, Providence, RI, 2016, Chapter 4.

\bibitem{PalWie1934}
Raymond~E.~A.~C. Paley and Norbert Wiener, \emph{Fourier {T}ransforms in the
  {C}omplex {D}omain}, Amer. Math. Society Colloquium Publications, vol. XIX,
  American Mathematical Society, Providence, RI, 1934.

\bibitem{part1992}
K.~R. Parthasarathy, \emph{An {I}ntroduction to {Q}uantum {S}tochastic
  {C}alculus}, Springer, New York, 1992.

\bibitem{paul2003}
Vern~I. Paulsen, \emph{Completely bounded maps and operator algebras},
  Cambridge University Press, 2003.

\bibitem{pfan2013}
G\"{o}tz~E. Pfander, \emph{Gabor frames in finite dimensions}, Finite Frames:
  Theory and Applications (Peter~G. Casazza and Gitta Kutyniok, eds.),
  Springer-Birkh\"{a}user, 2013, pp.~193--239.

\bibitem{pfan2013b}
\bysame, \emph{Gabor frames in finite dimensions}, Finite Frames: Theory and
  Applications (Peter~G. Casazza and Gitta Kutyniok, eds.), Birkh\"{a}user,
  2013, pp.~193--239.

\bibitem{pfan2013a}
G{\"o}tz~E. Pfander, \emph{Sampling of operators}, Journal of Fourier Analysis
  and Applications \textbf{19} (2013), no.~3, 612--650.

\bibitem{pito2005}
Itamar Pitowsky, \emph{Quantum {M}echanics as a {T}heory of {P}robability},
  Springer, New York, 2005.

\bibitem{RenBluScoCav2004}
J.~M. Renes, R.~Blume-Kohout, A.~J. Scott, and C.~Caves, \emph{Symmetric
  informationally complete quantum measurements}, J.\ Math.\ Phys. \textbf{45}
  (2004), no.~6, 2171--2180.

\bibitem{rudi1991}
Walter Rudin, \emph{Functional {A}nalysis, second edition}, McGraw-Hill, 1991
  (1973).

\bibitem{ShuRicVan2015}
David~I Shuman, Benjamin Ricaud, and Pierre Vandergheynst,
  \emph{Vertex-frequency analysis on graphs}, Applied and Computational
  Harmonic Analysis (2015).

\bibitem{ShuNarFroOrtVan2013}
D.I. Shuman, S.K. Narang, P.~Frossard, A.~Ortega, and P.~Vandergheynst,
  \emph{The emerging field of signal processing on graphs: Extending
  high-dimensional data analysis to networks and other irregular domains},
  Signal Processing Magazine, IEEE \textbf{30} (2013), no.~3, 83--98.

\bibitem{stra1988}
Gilbert Strang, \emph{Linear {A}lgebra and its {A}pplications}, 3rd (there is
  5th) ed., Harcourt Brace Jovanovitch, New York, 1988.

\bibitem{HeaStr2003}
Thomas Strohmer and Robert~W. Heath, \emph{Grassmannian frames with
  applications to coding and communications}, Appl. Comp. Harm. Anal.
  \textbf{14} (2003), 257--275.

\bibitem{szol2017}
F.~Sz{\"o}llosi, \emph{All complex equiangular tight frames in dimension 3},
  ArXiv preprint: arXiv:1402.6429 (2017).

\bibitem{TreBau1997}
Lloyd~N. Trefethen and David Bau, \emph{Numerical {L}inear {A}lgebra}, Soc.
  Industrial and Applied Math., Philadelphia, 1997.

\bibitem{TsiBarDil2015b}
Mikhail Tsitsvero, Sergio Barbarossa, and Paolo Di~Lorenzo, \emph{Signals on
  graphs: uncertainty principle and sampling}, arXiv preprint arXiv:1507.08822
  (2015).

\bibitem{TsiBarDil2015a}
\bysame, \emph{Uncertainty principle and sampling of signals defined on
  graphs}, arXiv preprint arXiv:1512.00775 (2015).

\bibitem{vonn1955}
John von Neumann, \emph{Mathematical {F}oundations of {Q}uantum {M}echanics},
  Princeton University Press, Princeton, 1955 (1932).

\bibitem{wald2003}
S.~Waldron, \emph{Generalized {W}elch bound equality sequences are tight
  frames}, IEEE Trans. Inf. Theory \textbf{49} (2003), no.~92307--2309.

\bibitem{wald2018}
S.~Waldron, \emph{An {I}ntroduction to {F}inite {T}ight {F}rames},
  Springer-Birkh{\"a}user, New York, 2018.

\bibitem{LomForHolLop2017}
David Wallace, \emph{Inferential versus dynamical conceptions of physics}, What
  is quantum information? (Olimpia Lombardi, Sebastian Fortin, Federico Holik,
  and Cristian Lopez, eds.), Cambridge University Press, 2017, pp.~179--209.

\bibitem{welc1974}
Lloyd Welch, \emph{Lower bounds on the maximum cross correlation of signals},
  IEEE Transactions on Information Theory \textbf{20} (1974), no.~3, 397--399.

\bibitem{zaun1999}
G.~Zauner, \emph{Quantum designs---foundations of non-commutative theory of
  designs}, Ph.D.~thesis, University of Vienna (1999).

\end{thebibliography}
\bibliographystyle{amsplain}

\end{document}